\newtheorem{theorem}{Theorem}
\newtheorem{lemma}[theorem]{Lemma}
\newtheorem{corollary}[theorem]{Corollary}
\newtheorem{remark}{Remark}
\theoremstyle{definition}
\newtheorem{definition}{Definition}
\def\R{\mathbb{R}}
\def\Z{\mathbb{Z}}
\algnewcommand\algorithmicinput{\textbf{INPUT:}}
\algnewcommand\INPUT{\item[\algorithmicinput]}
\algnewcommand\algorithmicoutput{\textbf{OUTPUT:}}
\algnewcommand\OUTPUT{\item[\algorithmicoutput]}
\DeclarePairedDelimiter\ceil{\lceil}{\rceil}
\DeclarePairedDelimiter\floor{\lfloor}{\rfloor}
\DeclareMathOperator*{\argmin}{arg\,min}
\theoremstyle{plain}
\newtheorem{assumption}{Assumption}
\title{Quantile Regression by Dyadic CART}
\author[1]{Oscar Hernan Madrid Padilla}
\author[2]{Sabyasachi Chatterjee}
\affil[1]{\small Department of Statistics, University of California, Los Angeles}
\affil[2]{\small Department of Statistics, University of Illinois at Urbana-Champaign}
\begin{document}
	\maketitle
	
	\begin{abstract}
In this paper we propose and study a version of the Dyadic Classification and Regression Trees (DCART) estimator from \cite{donoho1997cart} for (fixed design) quantile regression in general dimensions. We refer to this proposed estimator as the QDCART estimator. Just like the mean regression version, we show that a) a fast dynamic programming based algorithm with computational complexity $O(N \log N)$ exists for computing the QDCART estimator and b) an oracle risk bound (trading off squared error and a complexity parameter of the true signal) holds for the QDCART estimator. This oracle risk bound then allows us to demonstrate that the QDCART estimator enjoys adaptively rate optimal estimation guarantees for piecewise constant and bounded variation function classes. In contrast to existing results for the DCART estimator which requires subgaussianity of the error distribution, for our estimation guarantees to hold we do not need any restrictive tail decay assumptions on the error distribution. For instance, our results hold even when the error distribution has no first moment such as the Cauchy distribution. Apart from the Dyadic CART method, we also consider other variant methods such as the Optimal Regression Tree (ORT) estimator introduced in~\cite{chatterjee2019adaptive}. In particular, we also extend the ORT estimator to the quantile setting and establish that it enjoys analogous guarantees.  Thus, this paper extends the scope of these globally optimal regression tree based methodologies to be applicable for heavy tailed data. We then perform extensive numerical experiments on both simulated and real data which illustrate the usefulness of the proposed methods. 


		\vskip 5mm
		\textbf{Keywords}: Classification and Regression Trees (CART), Recursive Dyadic Partitions, Quantile Regression, Piecewise Constant Signals, Bounded Variation Signals, Dynamic Programming.
	\end{abstract}

\section{Introduction}

We consider the problem of nonparametric quantile regression in general dimensions and specifically consider the setting of fixed/lattice design regression or array denoising. In this setting, we are given an array of independent random variables $y \in \mathbb{R}^{L_{d,n}}$ where $L_{d,n}$ is the $d$-dimensional square lattice or grid graph with nodes indexed by $\{1,\ldots,n\}^d$.  Then the goal is to estimate the \textit{true $\tau$ quantile array} $\theta^*$ where
\[
\theta^*_i \,=\,  \underset{a \in \mathbb{R}}{\arg \min}\,\mathbb{E}\left(\rho_{\tau}\left(y_i -a \right)\right) 
\]
for all $i \in  L_{d,n}$, $\tau \in (0,1)$ is a fixed quantile level and where $\rho_{\tau}(x) = \max\{\tau x, (1-\tau)x \}$ is the usual piecewise linear convex quantile loss function. For example, when $\tau  = 0.5$, our setting here amounts to estimate the true median array of the noisy array $y.$ The model here can be called \textit{the quantile sequence model}. This generalizes the usual Gaussian sequence model where the quantile $\tau$ is taken to be $0.5$ and the distribution of $y$ is taken to be multivariate normal with the covariance matrix a multiple of identity.

Assuming lattice design is common practice for studying non parametric regression estimators and clearly our setting is relevant for image denoising and computer vision when $d = 2$ or $3$. The problem of estimating the true signal $\theta^*$ becomes meaningful when the true array satisfies some additional structure so that the effective parameter size is much less even though the actual number of unknown parameters is the same as the sample size. Structured signal denoising is a standard problem and arises in several scientific disciplines, e.g see applications in 
computer vision \citep[e.g.][]{bian2017gms, wirges2018object}, medical imaging \citep[e.g.][]{lang2014adaptive}, and neuroscience \citep[e.g.][]{tansey2018false}.

In this paper we are interested in scenarios where $\theta^*$ is (or is close to) a piecewise constant array on a rectangular partition of $L_{d,n}$. For mean regression, Dyadic classification and regression trees (DCART) method introduced in 
\cite{donoho1997cart} is known to be computationally efficient while achieving adaptively minimax rate optimal rates of convergence for classes of signals which are piecewise constant in a rectangular partition of $L_{d,n}$; see~\cite{chatterjee2019adaptive} for a thorough study of statistical adaptivity of DCART. However, since we are interested in quantile regression, we would like to propose a quantile version of Dyadic CART.

The most natural way to define the quantile version of Dyadic  CART estimator is as follows:
	\begin{equation}\label{eqn:initestimator}
		\hat{\theta}_{rdp} = \argmin_{\theta \in \mathbb{R}^{L_{d, n}}}\left\{ \sum_{i=1}^n \rho_{\tau}(y_i - \theta_i)  + \lambda k_{rdp}(\theta)\right\}, 
	\end{equation}
where we define $k_{rdp}(\theta)$ as the smallest natural number for which there exists  a dyadic partition $\Pi$  of $L_{d,n}$ such that $\theta $  is constant in each element of $\Pi$  and    $\vert  \Pi\vert = k_{rdp}(\theta)$. The estimator we propose and study in this article is a slightly modified version of the above estimator in~\eqref{eqn:initestimator}. We refer to this proposed estimator as the QDCART estimator. The precise definition of our estimator and the meaning of a dyadic rectangular partition of $L_{d,n}$ and the complexity parameter $k_{rdp}(\theta)$ will be given in Section \ref{sec:description}.

The usual mean regression version of Dyadic CART estimator is a computationally feasible decision tree method proposed first in~\cite{donoho1997cart} in the context of regression on a two-dimensional grid design. This estimator optimizes the same criterion as in~\eqref{eqn:initestimator} except that the quantile loss is replaced by the usual squared loss. Subsequently after~\cite{donoho1997cart}, several papers have used ideas related to dyadic partitioning for regression, classification and density estimation; e.g see~\cite{nowak2004estimating},~\cite{scott2006minimax},~\cite{blanchard2007optimal},~\cite{willett2007multiscale}. Recently, the paper~\cite{chatterjee2019adaptive} generalized the Dyadic CART estimator to general dimensions and to higher orders and studied the ability of Dyadic CART to estimate piecewise constant signals of various types. Dyadic CART has also been recently used for recovering level sets of piecewise constant signals; see~\cite{padilla2021lattice}. It is fair to say that the two most important facts about the usual mean regression version of Dyadic CART are:

\begin{itemize}
	\item The Dyadic CART estimator attains an oracle risk bound; e.g see Theorem $2.1$ in~\cite{chatterjee2019adaptive}. This oracle risk bound can then be used to show that the Dyadic CART estimator is nearly minimax rate optimal for several function classes of interest.
	
	\item The Dyadic CART estimator can be computed by a bottom up dynamic program with computational complexity linear in the sample size, see Lemma $1.1$ in~\cite{chatterjee2019adaptive}.
\end{itemize}

These two properties of the Dyadic CART make it a very attractive signal denoising method.
However, the oracle risk bound satisfied by Dyadic CART is known to hold only under sub-Gaussian errors. A natural question is whether it is possible to define a version of Dyadic CART which satisfies a result like Theorem $2.1$ in~\cite{chatterjee2019adaptive} without any tail decay assumptions on the error distribution and still retains essentially linear time computational complexity? This is the main question that motivated the research in this article and naturally led us to study a quantile regression version of Dyadic CART. The results in this paper answer our question as affirmative. We now summarize our results.

\begin{itemize}
\item Theorem~\ref{thm0} gives an oracle risk bound for the QDCART estimator proposed in this paper. 
The advantage of our risk bound is that it holds under an extremely mild assumption (see Assumption~\ref{as1} in Section~\ref{sec:results}) on the distribution of the error or noise variables. For example, our risk bound holds when the error distribution is heavy tailed like the Cauchy distribution for which even the first moment does not exist. In contrast, Theorem $2.1$ in~\cite{chatterjee2019adaptive} heavily relies on the subgaussian nature of the errors. Therefore, our main contribution here is to establish the robustness of the quantile version of Dyadic CART to heavy tailed errors. The result in Theorem~\ref{thm0} can be thought of as generalizing Theorem $2.1$ in~\cite{chatterjee2019adaptive} to the heavy tailed setting.

\item Once the oracle risk bound in Theorem~\ref{thm0} has been established, it has been shown in~\cite{chatterjee2019adaptive} how this automatically implies that the QDCART estimator would be minimax rate optimal for several function/signal classes of interest. In particular, this opens the door for us to establish minimax rate optimality of our QDCART estimator over the space of piecewise constant and/or bounded variation arrays. We provide these results in Section~\ref{sec:bddvar}. At the risk of reiterating, the state of the art mean regression estimators for estimating piecewise constant and/or bounded variation arrays typically require subgaussianity of the errors while the QDCART estimator is robust to heavy tailed error distributions. A natural competing quantile regression estimator to QDCART is the Quantile Total Variation Denoising estimator studied in~\cite{padilla2020risk}. 
Just like for the corresponding mean regression counterparts, we argue in Section~\ref{sec:bddvar} that the QDCART estimator has certain advantages over the Quantile Total Variation Denoising estimator, not least the fact that QDCART is computable in essentially linear time in any dimension whereas Quantile Total Variation Denoising is not known to have linear time computational complexity in multivariate settings ($d > 1$).

\item We explain in Section~\ref{sec:ort} that our proof technique for Theorem~\ref{thm0} can also be used to derive similar risk bounds for other variants of the QDCART estimator. For example, in~\cite{chatterjee2019adaptive} the Optimal Regression Tree (ORT) estimator was introduced and studied for mean regression. This ORT estimator is similar to the Dyadic CART estimator with the same optimization objective function except that the optimization is done over all decision trees or hierarchical partitions (not necessarily dyadic). It was then shown in~\cite{chatterjee2019adaptive} that this estimator attains a better risk bound than Dyadic CART in general. However, its computational complexity is slower and scales like $O(N^{2 + 1/d})$ in $d$ dimensions in contrast to the $O(N)$ computational complexity of Dyadic CART. The proof techniques of this paper actually also imply that a quantile version of the ORT estimator can be defined which will enjoy the corresponding risk guarantee. We prefer to present our main results only for QDCART to make the exposition short and because of its significantly better computational complexity.

\item We give a bottom up dynamic programming algorithm which can exactly compute the QDCART estimator. This algorithm is similar to the original one proposed for the DCART estimator in~\cite{donoho1997cart}, suitably adapted to our setting. The computational complexity of our algorithm is $O(N (\log N)^d)$ (see Theorem~\ref{thm:compu}) which is slightly slower than the $O(N)$ computational complexity of the DCART estimator. This extra log factor in the computation seems unavoidable to us because of the need to compute and propagate quantiles of various dyadic rectangles. Our algorithm is described in detail in Section~\ref{sec:compu}.

\end{itemize}

\subsection{Outline}

The rest of the paper is organized as follows.  Section  \ref{sec:description}  presents the precise definition of the QDCART estimator. The main theoretical result (Theorem~\ref{thm0}) of this paper is then presented in Section~\ref{sec:results}. We then provide implications of our main result (Theorem~\ref{thm0}) to the class of bounded variation signals in Section \ref{sec:bddvar}, and to the class of piecewise constant signals in Section \ref{sec:pc}.  Section \ref{sec:ort} discusses the quantile optimal tree regression (QORT) estimator. Section~\ref{sec:discuss} is a discussion section. Section \ref{sec:proof} presents a overview of the proof of our main theorem. In Section \ref{sec:comp}, we compare our theoretical guarantees for QDCART with what is known for a natural competitor estimator, the quantile total variation denoising estimator.
Section \ref{sec:compu} provides the details of our algorithm for implementing QDCART. Section \ref{sec:experiments} contains extensive numerical results in both simulated and real data examples. Finally, Section~\ref{sec:proofs} contains the full proofs of all our theoretical results.




\section{Description of QDCART Estimator}
\label{sec:description}
In this section, we precisely describe the QDCART estimator we propose to study. Let's first introduce some notation which we will use throughout this article.
For any fixed dimension $d \geq 1$, we denote our \textit{sample size} by $N = n^d$ which is the size of the lattice $L_{d, n}$. Let us denote the discrete interval of positive integers as $[a,b] := \{i \in \Z_{+}: a \leq i \leq b\}$ where $\Z_{+}$ denotes 
the set of positive integers. For a positive integer $n$ we 
also denote the set $[1,n]$ by just $[n].$  For squences  $a_n$ and $b_n$ we write $a_n =  O(b_n)$ if there exists a positive  constant $c>0$ such that  $a_n \leq   c b_n$. If instead $a_n \leq b_n (\log n)^l$ for a positive constant $l $ then we write $a_n  = \tilde{O}(b_n)$.  A subset $R 
\subset L_{d,n}$ is called an \textit{axis aligned 
	rectangle} if $R$ is a product of 
discrete intervals, i.e. $R = \prod_{i = 1}^{d} [a_i,b_i].$ 
Henceforth, we will just use the word rectangle to denote an 
axis aligned rectangle. The size of a rectangle $R = \prod_{i = 1}^{d} [a_i,b_i]$ is denoted by  $\vert R \vert$ and defined as 
\[
 \vert R \vert   =  \prod_{i = 1}^{d} (b_i-a_i+1).
\]

Let us define a \textit{rectangular 
	partition} of $L_{d,n}$ to be a set of rectangles 
$\mathcal{R}$ such that (a) the rectangles in $\mathcal{R}$ 
are pairwise disjoint and (b) $\cup_{R \in \mathcal{R}} R = 
L_{d,n}.$

Let us consider a generic discrete interval $[a,b].$ We define a \textit{dyadic split} of the interval to be a split of the interval $[a,b]$ into two  intervals fo equal size. We assume that the interval has even size for ease of exposition. If not, 
then one can set forth a convention for defining the middle point and then follow it throughout.  A dyadic partition of $L_{d,n}$ is constructed iteratively as follows. Starting from the trivial partition which is just $L_{d,n}$ itself, we can create a refined partition by dyadically splitting $L_{d,n}.$  This will result in a partition of $L_{d,n}$ into two rectangles. We can now keep on dividing recursively, generating new partitions. In general, if at some stage we have the partition $\Pi = (R_1,\dots,R_k)$, we can choose any of the rectangles $R_i$ and dyadically split it to get a refinement of $\Pi$ with $k + 1$ nonempty rectangles. \textit{A recursive dyadic partition} (RDP) is any partition reachable by such successive dyadic splitting. Let us denote the set of all recursive dyadic partitions of $L_{d,n}$ as $\mathcal{P}_{rdp}(L_{d,n}).$ Figure \ref{fig5} shows a depiction of  a dyadic partition.

\begin{figure}[htbp!]
	\begin{center}
		\includegraphics[width=1.98in,height=2.08in]{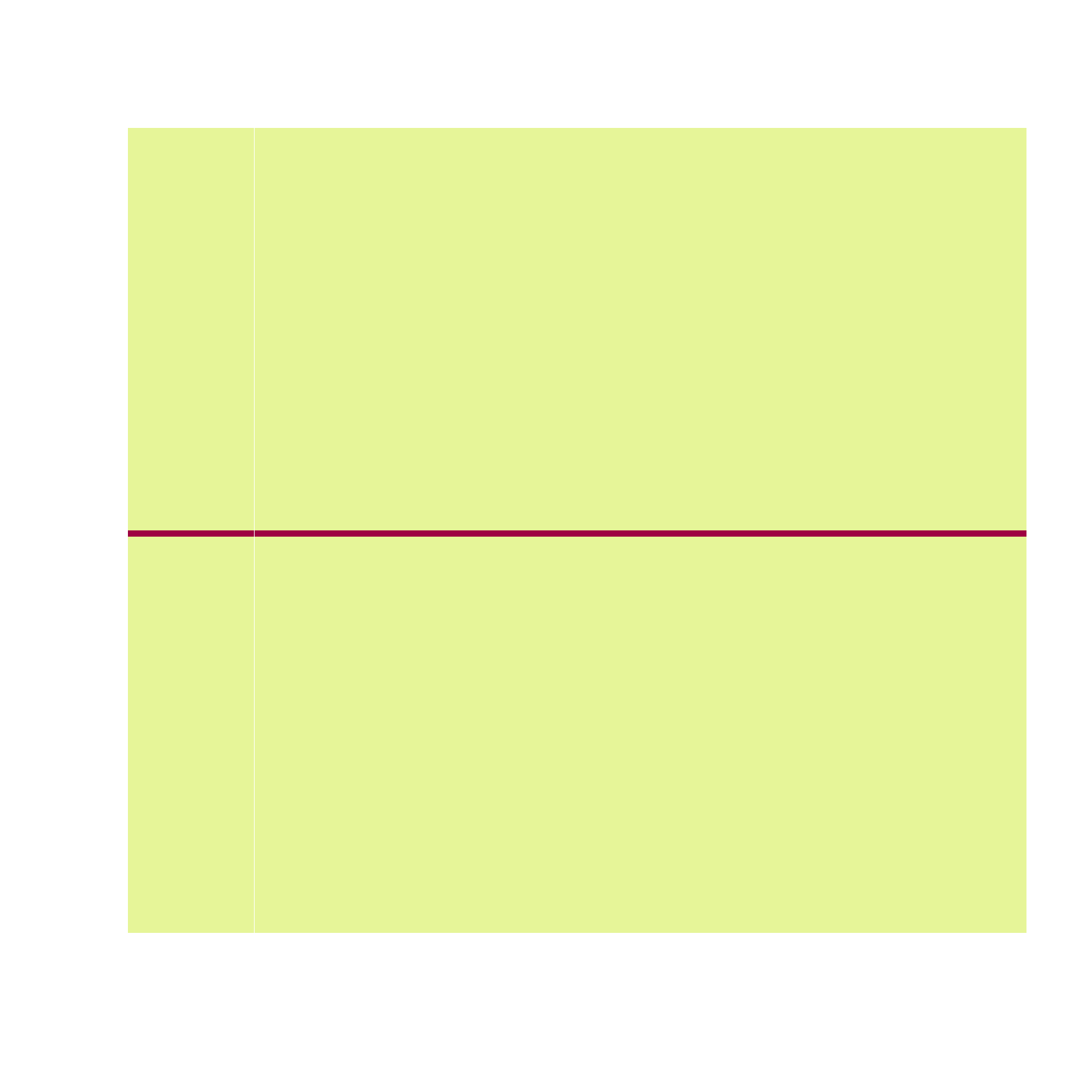}
		\includegraphics[width=1.98in,height=2.08in]{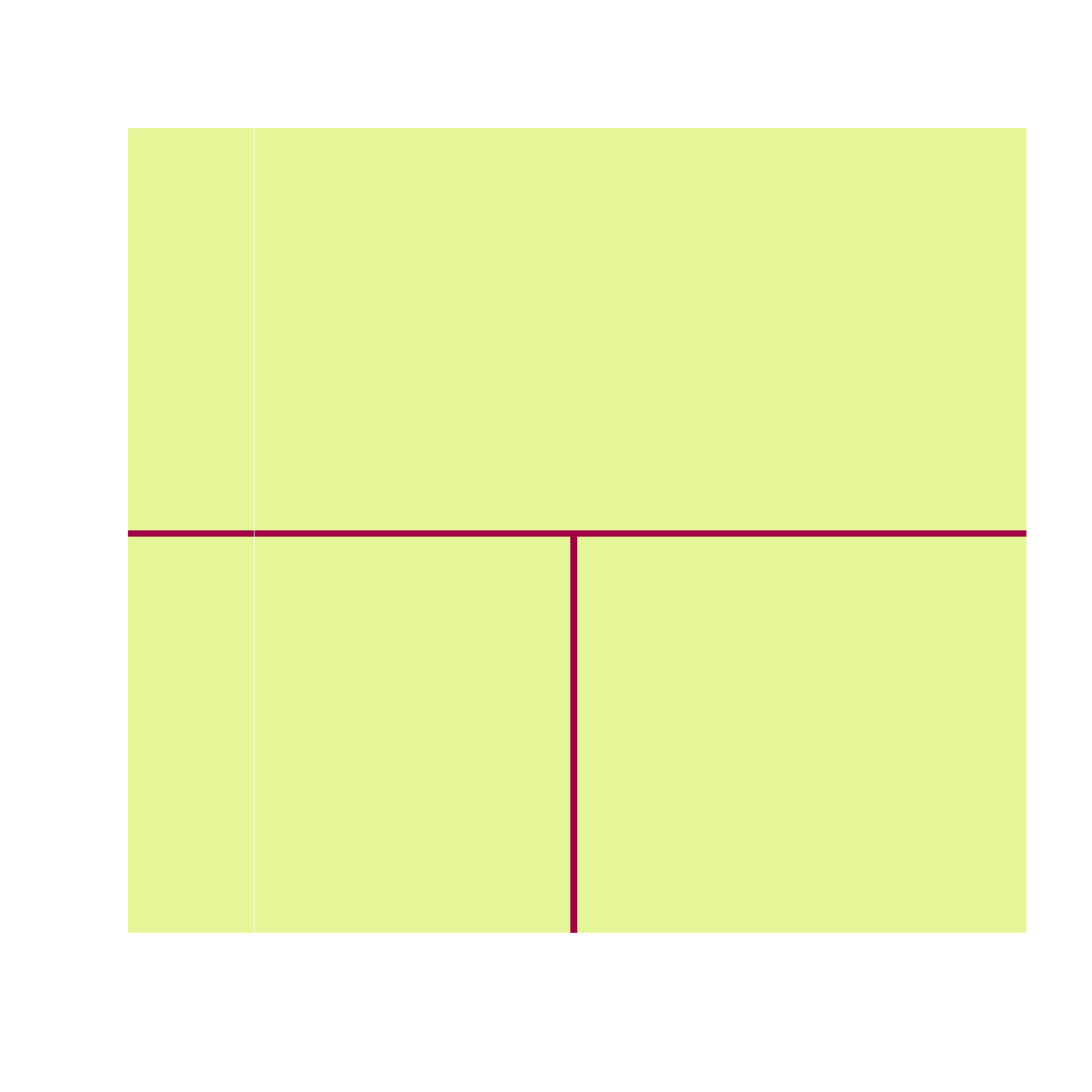}
		\includegraphics[width=1.98in,height=2.08in]{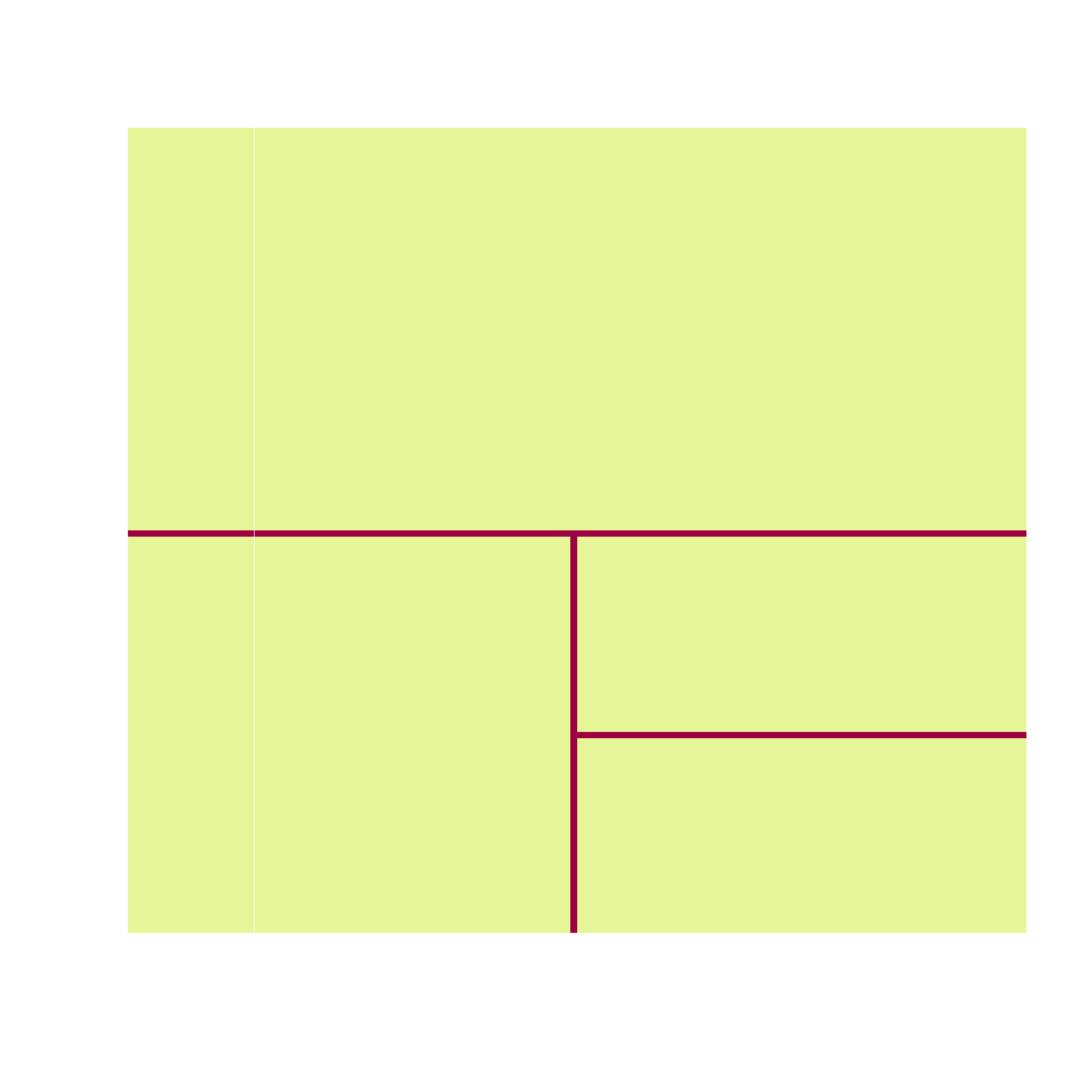}
		\caption{ 		\label{fig5} From left to right the panels show an   example of a sequence of three dyadic splits that lead to a dyadic parition.}
	\end{center}
\end{figure}

\begin{figure}[h!]
	\begin{center}
		\includegraphics[width=2.58in,height=2.08in]{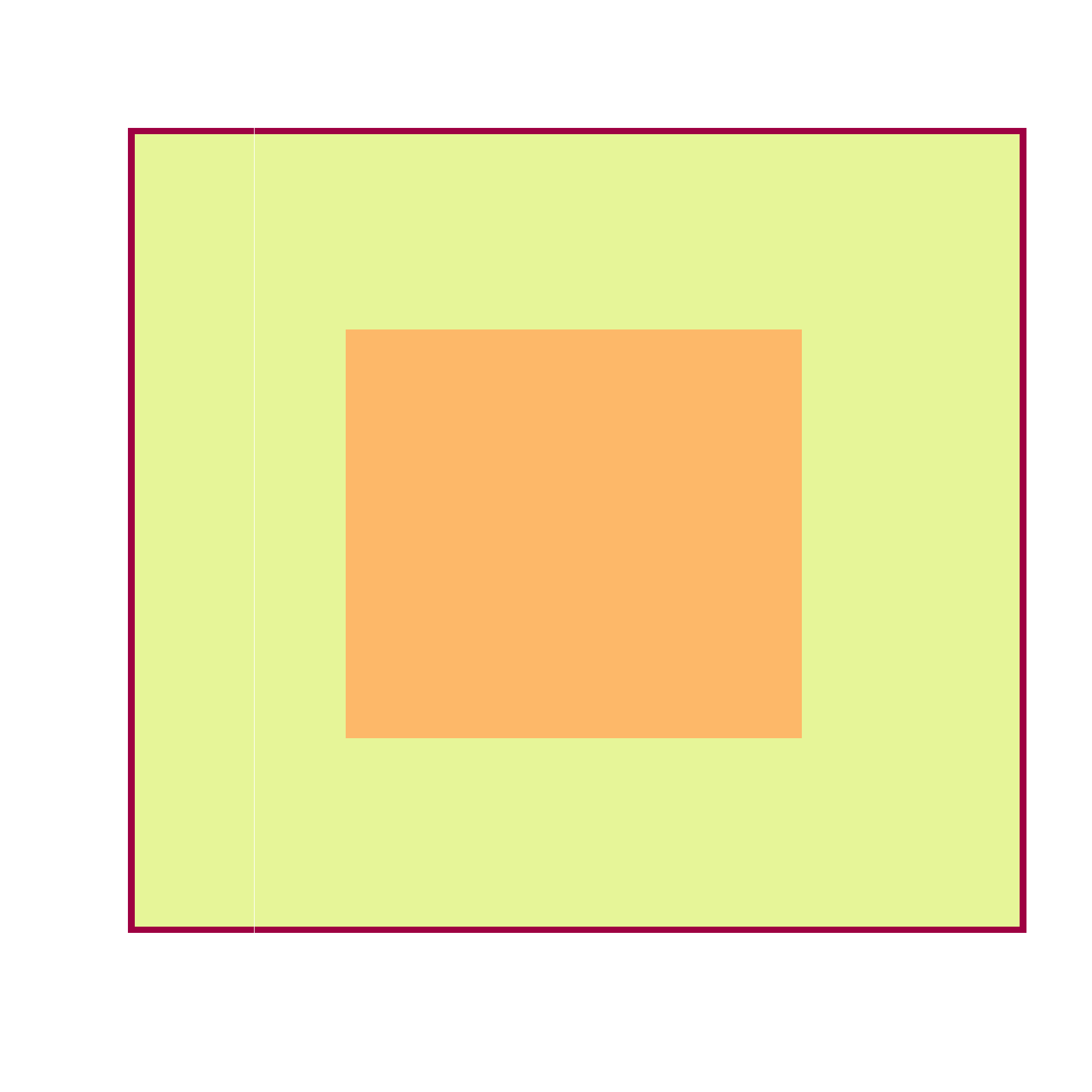}
		\includegraphics[width=2.58in,height=2.08in]{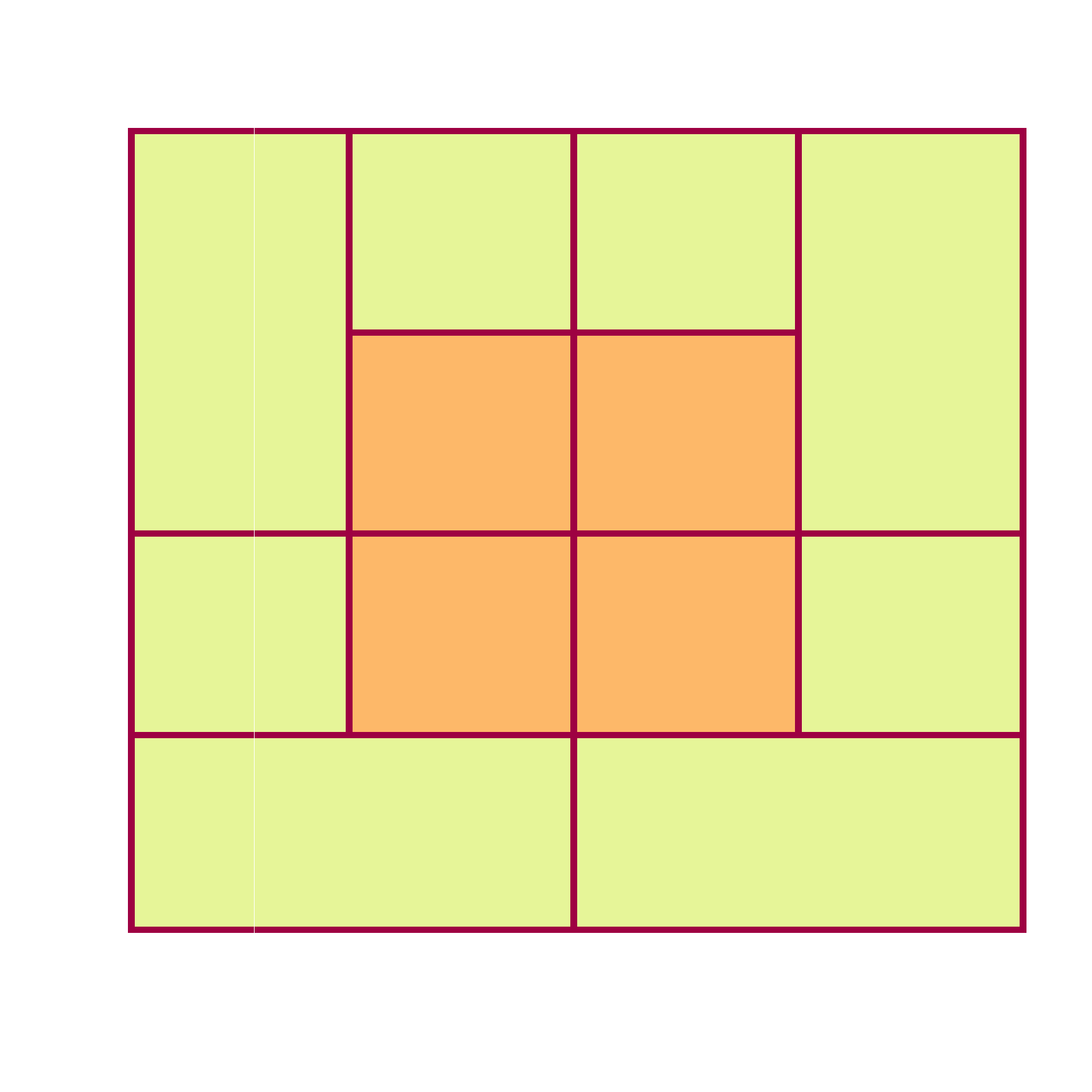}
		\caption{ 		\label{fig6}  The left panel shows the representation of a $\theta \in \mathbb{R}^{L_{2,n}}$ that takes on two values. The right panel shows a dyadic partition with a minimal number of elements where  $\theta$  is piecewise constant. In this example  $k_{rdp}(\theta)=  12$, the number of rectangles in the dyadic partition in the right panel. }
	\end{center}
\end{figure}

For a given array $\theta \in \R^{L_{d,n}}$, let \textit{$k_{rdp}(\theta)$ denote the smallest positive integer $k$ such that a set of $k$ rectangles $R_1,\dots,R_k$ form a recursive dyadic partition of $L_{d,n}$ and the restricted array $\theta_{R_i}$ is a constant array for all $1 \leq i \leq k.$} 
In other words, $k_{rdp}(\theta)$ is the cardinality of the minimal recursive dyadic partition of $L_{d,n}$ such that $\theta$ is piecewise constant on the partition. A visual representation of $k_{rdp}(\theta)$ is given in Figure \ref{fig6} for a signal $\theta \in \mathbb{R}^{_{2,n}}$.

To define our estimator, we will need a few more notations. If  $\Pi$ is any rectangular partition of $L_{d, n}$ we  let $S(\Pi)$ be the linear subspace of $\mathbb{R}^{L_{d,n}}$ consisting of vectors with constant values on each rectangle of $\Pi$. We also write $R \in \Pi$ to mean that the rectangle $R$ is one of the constituent rectangles of the partition $\Pi.$ We now define $O_{\Pi,\tau}(\cdot)$ be a function from $\mathbb{R}^{L_{d,n}}$ to  $S(\Pi)$ such that 
\[
(O_{\Pi,\tau}(y))_{i} = q_{\tau}(y_R)
\]
for $i \in R$, $R \in \Pi$, and where $q_{\tau}(y_R)$ is the empirical $\tau$-quantile of the set of values $y_R:= (y_i)_{i\in R}$.

Armed with the above notation we can reformulate the optimization problem in~\eqref{eqn:initestimator} by noting that $\hat{\theta}$ defined in~\eqref{eqn:initestimator} is the same as $O_{\tilde{\Pi},\tau}(y)$ where the partition $\tilde{\Pi}$ is an optimal solution to the following discrete optimization problem:
\begin{equation}
\label{eqn:problem_version2}
\underset{\Pi \in \mathcal{P}_{rdp}( L_{d,n})  }{\min}\,\,\left\{ \sum_{i\in L_{d,n}} \rho_{\tau}(y_i -  (O_{\Pi,\tau}(y))_i  ) + \lambda\vert \Pi\vert   \right\}.
\end{equation}

However, the estimator defined in~\eqref{eqn:initestimator} is not quite the estimator we propose and study in this paper as we need to modify the estimator slightly. To describe our QDCART estimator, which is the main object of study in this paper, we now define for any fixed quantile level $0 < \tau < 1$,
\begin{equation}\label{eqn:new_version}
 \hat{\theta}_{rdp}= O_{\hat{\Pi},\tau}(y)
\end{equation}
where 
\begin{equation}
\label{eqn:qdcart2}
\hat{\Pi}\,:= \argmin_{\Pi \in \mathcal{P}_{rdp}(L_{d,n})\,:\,   \vert R\vert \geq \gamma\,\,\forall R \in \Pi} \left\{ \sum_{i\in L_{d,n}} \rho_{\tau}(y_i -  (O_{\Pi,\tau}(y))_i  ) + \lambda\vert \Pi\vert   \right\}
\end{equation}
for tuning parameters $\lambda,\gamma > 0$.


Note that in view of~\eqref{eqn:problem_version2}, the above QDCART estimator is basically the same as the estimator in~\eqref{eqn:initestimator} with a slight modification. We restrict the optimization space to all partitions in  $\mathcal{P}_{rdp}(L_{d,n})$ with the constraint that the size of each of its constituent rectangles is larger than $\gamma>0$. This restriction is needed to avoid the estimator from being affected by large outliers.  We say more on this point in  Remark \ref{rem:outliers}. 

\section{Main Results}\label{sec:results}

We first state an assumption on the distribution of the coordinates of the data vector $y.$

\begin{assumption}\label{as1}
	There exist positive constants $L$, $\underline{f}$  and  $\overline{f}$  such that for any $\delta \in \mathbb{R}^{L_{d,n}}$  satisfying  $\|\delta\|_{\infty} \leq L$  we have that  for $i \in L_{d,n}$,
	\begin{equation}
	\label{eqn:lower}
 \overline{f}\,  \vert \delta_i \vert \geq 	\vert   F_{y_i}(\theta_i^* + \delta_i)  -F_{y_i}(\theta_i^*) \vert\,  \geq \,  \underline{f}\,  \vert \delta_i \vert.
	\end{equation}
\end{assumption}

Assumptions like the above are standard and commonly made in the quantile regression literature. For instance, without the upper bound part, Assumption \ref{as1} (Equation (\ref{eqn:lower})) also appeared in \cite{padilla2020risk} and is a  weaker version of Condition 2 from \cite{he1994convergence}, and is closely related to Condition D.1 in \cite{belloni2011}. 
The lower bound in Assumption~\ref{as1} is needed to ensure uniqueness of the $\tau$ quantiles of the marginal distributions of the coordinates of $y.$ We believe that Assumption~\ref{as1} is very mild. For example, sequences of distributions which are stochastically dominated by a distribution with continuous density (w.r.t. Lebesgue measure) which is bounded away from $0$ on any compact interval satisfy Assumption~\ref{as1}. If it is assumed that the errors are i.i.d. (which is a commonly made assumption) then if the error distribution itself has continous density (w.r.t Lebesgue measure) which is bounded away from $0$ on any compact interval then Assumption~\ref{as1} is satisfied. In particular, the error distributions could be i.i.d. Cauchy with no moments existing.



Before stating our main result we will need to make the following definition.
\begin{definition}\label{as2}
	Let  $\theta^{\prime}$  and $\theta^{\prime \prime}$  be arrays  of the true $\tau/2$-quantiles and  $(1-\tau)/2$-quantiles of $y$ respectively,  so that 
	\[
	\theta_i^{\prime}   =  \arg \min_{a \in  \mathbb{R}} \mathbb{E}\:\{\rho_{\tau/2}(y_i - a) \},
	\]
	and 
	\[
	\theta_i^{\prime \prime}  =  \arg \min_{a \in   \mathbb{R} } \mathbb{E}\:\{\rho_{(1-\tau)/2}(y_i - a) \}.
	\]
	Then  we denote
	\[
	U\,:=\,\max\{ \| \theta^{\prime}\|_{\infty},  \|\theta^{\prime \prime}\|_{\infty}    \}.
	\]
\end{definition}

Definition \ref{as2} simply quantifies the supremum norm   of  the $\tau/2$-quantiles and  $(1-\tau)/2$-quantiles  of  $y$.  We are now ready to state our main result for the QDCART estimator.


\begin{theorem}\label{thm0}
Suppose that   Assumption~\ref{as1} holds. There exists universal constants $c_1,C_1, C_2,C_3>0$ such that for any $0 < \epsilon < 1$, if we set $\gamma = c_1 \log N $ and
	\[
	\lambda \,=\,     C_1 \frac{\max\{ 1,U   \} \log(N)\log(N U)}{\epsilon}
	\]
	then with probability at least  $1-C_2\epsilon$,	
	\begin{equation}
	\label{eqn:upper0.3}
	\frac{\| \hat{\theta}_{rdp}-\theta^* \|^2}{N}\,\leq \, \frac{C_3  Q_{rdp}(\theta^*) }{\epsilon^2}
	\end{equation}
where
\[
Q_{rdp}(\theta^*)  := \underset{\theta  \in \mathbb{R}^N }{\inf}  \left\{\frac{  \| \theta   - \theta^* \|^2 }{N}\,+\, \frac{ k_{rdp}(\theta)( \max\{1,U^2\}\log^2 \left(\max\{N,U\}\right)  +  \|\theta\|^2_{\infty} \log N)}{N}  \right\}.
\]
\end{theorem}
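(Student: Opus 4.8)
The plan is to run the standard ``basic inequality plus localized empirical process'' argument for penalized $M$-estimators, exploiting two features of the quantile loss. First, the subgradient $\psi_\tau(u):=\tau-\mathbf{1}\{u<0\}$ is bounded by $1$ whatever the law of $y$, so every stochastic quantity appearing below is a sum of \emph{bounded} summands and can be controlled by Hoeffding/Bernstein with no tail hypothesis on $y$. Second, the constraint $|R|\ge\gamma=c_1\log N$ is precisely what stops the empirical quantiles $q_\tau(y_R)$, hence the coordinates of $\hat\theta_{rdp}$, from inheriting heavy tails of individual $y_i$'s. I first fix $\theta$ nearly attaining the infimum defining $Q_{rdp}(\theta^*)$ and let $\Pi_\theta$ be a minimal RDP on which $\theta$ is piecewise constant, $|\Pi_\theta|=k_{rdp}(\theta)$. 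Since $\Pi_\theta$ may contain rectangles of size $<\gamma$, I coarsen it --- merge each such rectangle into its smallest dyadic ancestor of size $\ge\gamma$ --- to get an RDP $\tilde\Pi$ with $|\tilde\Pi|\le k_{rdp}(\theta)$ and all rectangles of size $\ge\gamma$, and let $\tilde\theta\in S(\tilde\Pi)$ be the natural piecewise-constant vector agreeing with $\theta$ off the merged blocks. Only coordinates lying in rectangles of $\Pi_\theta$ of size $<\gamma$ are perturbed, so $\|\tilde\theta-\theta\|^2\le 4\gamma\,k_{rdp}(\theta)\|\theta\|_\infty^2=O(\log N)\,k_{rdp}(\theta)\|\theta\|_\infty^2$, matching the $k_{rdp}(\theta)\|\theta\|_\infty^2\log N$ term inside $Q_{rdp}$; from here on $(\tilde\Pi,\tilde\theta)$ is the \emph{deterministic} oracle.

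Next comes the basic inequality. Because the empirical $\tau$-quantile of a set minimizes $a\mapsto\sum\rho_\tau(y_i-a)$, the map $O_{\tilde\Pi,\tau}(y)$ minimizes $\sum_i\rho_\tau(y_i-v_i)$ over $v\in S(\tilde\Pi)$; combined with optimality of $\hat\Pi$ this gives $\sum_i\rho_\tau(y_i-(\hat\theta_{rdp})_i)+\lambda|\hat\Pi|\le\sum_i\rho_\tau(y_i-\tilde\theta_i)+\lambda|\tilde\Pi|$. Subtract $\sum_i\rho_\tau(y_i-\theta^*_i)$, write $\delta:=\hat\theta_{rdp}-\theta^*$ and $\tilde\delta:=\tilde\theta-\theta^*$, and split each side into a deterministic excess-risk part, summing $M_i(v):=\mathbb{E}[\rho_\tau(y_i-v)-\rho_\tau(y_i-\theta^*_i)]$, plus a fluctuation part. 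By Knight's identity $\rho_\tau(u-v)-\rho_\tau(u)=-v\psi_\tau(u)+\int_0^v(\mathbf{1}\{u\le s\}-\mathbf{1}\{u\le0\})\,ds$ with $u=y_i-\theta^*_i$, and since the lower bound in Assumption~\ref{as1} forces $\theta^*_i$ to be the unique $\tau$-quantile (so $\mathbb{E}\psi_\tau(y_i-\theta^*_i)=0$) and $F_{y_i}(\theta^*_i+s)-F_{y_i}(\theta^*_i)\ge\underline f\,(s\wedge L)$, one obtains the pointwise curvature bounds $M_i(\delta_i)\ge\tfrac{\underline f}{2}\min\{\delta_i^2,L|\delta_i|\}$ and (from the upper bound in Assumption~\ref{as1}) $M_i(\tilde\delta_i)\le \overline f\,\tilde\delta_i^2+|\tilde\delta_i|$. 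Thus the deterministic part of the left side is $\ge\tfrac{\underline f}{2}\sum_i\min\{\delta_i^2,L|\delta_i|\}$, which, once Step~3 supplies $\|\delta\|_\infty\le B$ with $B$ of order $\max\{1,U,\|\theta^*\|_\infty\}$ up to logarithmic/$1/\epsilon$ factors, is $\gtrsim \underline f (L/B)\|\delta\|^2$.

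The crux is the uniform stochastic control. There are at most $N^{O(1)}$ dyadic rectangles of $L_{d,n}$, and for any fixed one of size $\ge\gamma$ the partial sums $\sum_{i\in R}(\mathbf{1}\{y_i\le t\}-F_{y_i}(t))$ are sums of bounded variables; a union bound over all such $R$ and a grid of levels $t$, using Hoeffding/Bernstein, yields an event of probability $\ge1-C_2\epsilon$ on which (a) for every dyadic $R$ with $|R|\ge\gamma$ the empirical c.d.f.\ of $(y_i)_{i\in R}$ is uniformly close to the averaged population c.d.f., which together with Definition~\ref{as2} pins $|q_\tau(y_R)|\le B$ --- in particular $\|\hat\theta_{rdp}\|_\infty\le B$, the bound used in Step~2 --- and (b) the fluctuation term $\sum_i[(\rho_\tau(y_i-v_i)-\rho_\tau(y_i-\theta^*_i))-M_i(v_i)]$ is bounded, uniformly over $v$ piecewise constant on size-$\ge\gamma$ RDPs $\Pi_v$, by a localized quantity of the form $C\|v-\theta^*\|\sqrt{|\Pi_v|}\cdot(\mathrm{polylog})/\epsilon$ plus a $C\lambda|\Pi_v|$ slack. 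It is the final Markov step --- bounding the expectation of these bounded suprema and invoking Markov's inequality --- that produces the polynomial-in-$1/\epsilon$ behaviour and replaces the sub-Gaussian chaining used for mean-regression Dyadic CART in \cite{chatterjee2019adaptive}; matching these bounds dictates exactly the choices $\gamma=c_1\log N$ and $\lambda\asymp\max\{1,U\}\log(N)\log(NU)/\epsilon$.

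Finally I assemble everything. On the good event the basic inequality, after inserting the curvature bounds and the fluctuation bounds for $\delta$ and the deterministic $\tilde\delta$, becomes a self-bounding quadratic inequality of the shape
\[
\frac{\underline f L}{B}\,\|\delta\|^2\ \lesssim\ \|\theta-\theta^*\|^2+\gamma\,k_{rdp}(\theta)\|\theta\|_\infty^2+\lambda\,k_{rdp}(\theta)+\|\delta\|\sqrt{|\hat\Pi|}\,\frac{\mathrm{polylog}}{\epsilon}-\lambda|\hat\Pi|.
\]
Using $\|\delta\|\sqrt{|\hat\Pi|}\,\mathrm{polylog}/\epsilon\le \lambda|\hat\Pi|+(\mathrm{polylog}/(\lambda\epsilon^2))\|\delta\|^2$ and that $\lambda$ was chosen large enough both to absorb $\lambda|\hat\Pi|$ and to make $\mathrm{polylog}/(\lambda\epsilon^2)$ a small multiple of $\underline f L/B$, the $\|\delta\|^2$ and $|\hat\Pi|$ terms move left, leaving $\|\delta\|^2\lesssim (B/\underline f L)\big(\|\theta-\theta^*\|^2+\gamma k_{rdp}(\theta)\|\theta\|_\infty^2+\lambda k_{rdp}(\theta)\big)/\epsilon$. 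Dividing by $N$, substituting $\gamma=c_1\log N$, $\lambda\asymp\max\{1,U\}\log(N)\log(NU)/\epsilon$, and $B\asymp\max\{1,U,\|\theta^*\|_\infty\}$, and taking the infimum over the oracle $\theta$, gives $\|\hat\theta_{rdp}-\theta^*\|^2/N\le C_3\,Q_{rdp}(\theta^*)/\epsilon^2$. The main obstacle is Step~3: proving the uniform-over-dyadic-rectangles control of \emph{both} the empirical quantiles (to get the $\ell_\infty$ bound $\|\hat\theta_{rdp}\|_\infty\le B$ that upgrades the truncated curvature to genuine $\ell_2$) \emph{and} the localized empirical process, with all constants, logarithmic powers and the $1/\epsilon$ emerging exactly in the form the statement requires, while using nothing about the law of $y$ beyond Assumption~\ref{as1}.
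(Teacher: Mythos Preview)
Your skeleton is largely the paper's: the oracle coarsening to $\tilde\Pi$ is exactly Lemma~\ref{lem6}, the $\ell_\infty$ localisation $\|\hat\theta_{rdp}\|_\infty\le U$ via the size constraint $|R|\ge\gamma$ is Lemma~\ref{cor1}, and the curvature bounds via Knight's identity are Lemmas~\ref{lem7} and~\ref{lem5}. Where you and the paper part ways is in the empirical-process step, and there your self-bounding assembly has a genuine gap.

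You claim a high-probability event on which, uniformly in $v$, the fluctuation is bounded by $C\|v-\theta^*\|\sqrt{|\Pi_v|}\,(\mathrm{polylog})/\epsilon$ plus a $\lambda|\Pi_v|$ slack, with the $1/\epsilon$ coming from Markov. Plugging this into the basic inequality and applying AM--GM as you do yields a coefficient $(\mathrm{polylog})^2/(\lambda\epsilon^2)$ on $\|\delta\|^2$. With the stated $\lambda\asymp\max\{1,U\}\log N\log(NU)/\epsilon$ this coefficient is of order $1/\epsilon$, which \emph{cannot} be made a small multiple of $\underline f L/B$ for small $\epsilon$: the absorption fails. Your argument would go through only with $\lambda\asymp(\mathrm{polylog})/\epsilon^2$, contradicting the theorem's tuning. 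Separately, ``Hoeffding over a grid of levels $t$'' is a covering argument for the continuous values $v_R$; symmetrisation/contraction are expectation inequalities, so one cannot transfer a Rademacher high-probability bound back to the original fluctuation, and a direct chaining on $v_R$ is not supplied.

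The paper avoids this by inserting a \emph{peeling} step on $\Delta^2(\hat\delta)$ \emph{before} Markov: at level $j$ the sup is already localised to $\|\theta-\theta^*\|^2\le 2^j\max\{1,U\}t^2$, so Markov is applied to an \emph{expectation} of a localised penalised process. That expectation is bounded (Lemma~\ref{lem3}) via symmetrisation and contraction (Lemmas~\ref{lem28}--\ref{lem29}) to reduce to a Rademacher process $\xi^\top(\theta-\tilde\theta)$, then a Cauchy--Schwarz with free parameter $a$ and the penalised Rademacher width bound of Lemma~\ref{lem4}. The free parameter $a$ (not $\|\delta\|$) absorbs the localisation radius, so the $1/\epsilon$ from Markov never multiplies $\|\delta\|$; this is precisely what produces $\lambda\asymp a\log N$ with $a\asymp\max\{1,U\}\log(NU)/\epsilon$. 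If you want to keep the self-bounding route, you would need a genuine concentration bound (sub-Gaussian tails, hence $\sqrt{\log(1/\epsilon)}$ rather than $1/\epsilon$) on the penalised fluctuation supremum, which requires more than the Markov step you invoke.
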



Theorem \ref{thm0} provides the generalization of the oracle risk bound in Theorem 2.1 in \cite{chatterjee2019adaptive} to the quantile setting. We now list the differences of Theorem~\ref{thm0} with the oracle risk bound (Theorem 2.1 in \cite{chatterjee2019adaptive}) known for the mean regression counterpart.

\begin{enumerate}
\item
Theorem 2.1 in \cite{chatterjee2019adaptive}  requires that $Y -\theta^*$, the vector of errors, consists of i.i.d. mean zero Gaussian random variables. In contrast, Theorem \ref{thm0} holds under Assumption~\ref{as1} which does not require any tail decay assumptions for the distributions of the coordinates of the error vector (independence is still assumed). In particular, Assumption~\ref{as1} allows error distributions with no moments as well like the Cauchy distribution.


\item The result in \cite{chatterjee2019adaptive} is stronger in the sense that theirs is an upper bound in expectation,  given as
	\begin{equation}
		\label{eqn:oracle_dcart}
		\mathbb{E}\left( \frac{\|\hat{\theta} - \theta^* \|^2}{N} \right)\,\leq \,  \underset{\theta  \in \mathbb{R}^N }{\inf}\left\{\frac{(1-\delta)}{(1+\delta)}\frac{\|\theta-\theta^*\|^2}{N}  \,+\,   \frac{C \sigma^2 k_{rdp}(\theta) \log N}{\delta(1-\delta) N}\right\},
	\end{equation}
	for all $\delta\in (0,1)$, and for some constant  $C>0$. Our result in Theorem~\ref{thm0} gives a tail probability inequality which does not ensure that $\frac{\|\hat{\theta}_{rdp} - \theta^*\|^2}{N}$ has a finite first moment. It does ensure however that $$\frac{\|\hat{\theta}_{rdp} - \theta^*\|^2}{N} = \tilde{O}_{\mathbb{P}} (Q_{rdp}(\theta^{*}))$$ where $\mathbb{P}$ refers to an appropriately defined sequence of probability distributions corresponding to denoising problems of increasing size.
	

\item In effect, the the upper bound  in Theorem \ref{thm0} is only off by logarithmic factors compared to  the upper bound in Theorem 2.1 in  \cite{chatterjee2019adaptive}. Our bound in Theorem \ref{thm0} contains some extra terms which are benign. The factor $U$ should scale like $O(1)$ for any realistic error distribution sequence. The factor $\|\theta\|_{\infty}$ inside the infimum in the definition of $Q(\theta^*)$ essentially introduces another multiplicative factor of $\|\theta^*\|_{\infty} \leq U.$ 
\end{enumerate}

\begin{remark}
	\label{rem:outliers}
The choice of $\gamma$ in Theorem \ref{thm0} ensures that the QDCART estimator will  be well behaved in the sense of the $\ell_{\infty}$ norm, see Lemma \ref{cor1}. Such a restriction on the size of the rectangles in the optimal partition is actually needed. Otherwise, the QDCART estimator can be arbitrarily large in some locations under the presence of heavy tailed errors. If one considers standard subgaussian type assumptions on the errors, then this restriction on the size of the rectangles can be removed.
\end{remark}

We now turn to the issue of computation. In this article we also give an algorithm to compute the QDCART estimator based on bottom up dynamic programming. This algorithm is similar to the original algorithm given in~\cite{donoho1997cart} adapted to the quantile setting. We now state our computation result as a theorem.

\begin{theorem}\label{thm:compu}
		There exists an absolute constant $C > 0$ (not depending on $d,n$) such that the computational complexity, i.e. the number of elementary operations involved in the computation of the QDCART estimator in $d$ dimensions is bounded by $C \big(N (\log n)^d + d\:2^d N\big).$
\end{theorem}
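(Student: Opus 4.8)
\textbf{Proof plan for Theorem~\ref{thm:compu}.}

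The plan is to adapt the classical bottom-up dynamic program of \cite{donoho1997cart} to the quantile loss, being careful to account for the extra cost of computing empirical quantiles of dyadic rectangles. First I would set up the recursion over dyadic rectangles. A dyadic rectangle of $L_{d,n}$ is obtained by a sequence of dyadic splits, each split happening along one of the $d$ coordinate axes; the number of distinct dyadic rectangles is $O(N)$ up to logarithmic factors (more precisely, there are at most $\prod_{j=1}^d (2n_j-1) = O(2^d N)$ of them when each side length $n_j$ is a power of two, since along each axis the dyadic subintervals of $[1,n_j]$ form a binary tree with $O(n_j)$ nodes). For each such rectangle $R$ the dynamic program stores the optimal value $\mathrm{OPT}(R)$ of the restricted problem $\min\{\sum_{i\in R}\rho_\tau(y_i-(O_{\Pi,\tau}(y))_i)+\lambda|\Pi|\}$ over all recursive dyadic partitions $\Pi$ of $R$ whose constituent rectangles all have size at least $\gamma$. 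The Bellman recursion is $\mathrm{OPT}(R) = \min\bigl\{\, \sum_{i\in R}\rho_\tau(y_i - q_\tau(y_R)) + \lambda,\ \min_{\text{dyadic splits }R=R_1\cup R_2}\bigl(\mathrm{OPT}(R_1)+\mathrm{OPT}(R_2)\bigr)\,\bigr\}$, where the first branch (``do not split $R$'') is only available when $|R|\geq\gamma$. Processing rectangles in increasing order of size makes every right-hand side available when needed, and the optimal partition $\hat\Pi$ is recovered by the standard back-pointer bookkeeping.

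Next I would count the work. There are $O(2^d N)$ dyadic rectangles. For each rectangle $R$, the number of candidate dyadic splits is at most $d$ (one per axis), so the ``min over splits'' costs $O(d)$ per rectangle, contributing the $O(d\,2^d N)$ term after summing. The remaining and dominant cost is the ``do not split'' branch: for each $R$ we must evaluate $\sum_{i\in R}\rho_\tau(y_i - q_\tau(y_R))$, which requires the empirical $\tau$-quantile $q_\tau(y_R)$ and then a linear pass over $R$. A naive recomputation would be too slow, so the key is to organize these quantile computations hierarchically. I would fix the ``last-split axis'' ordering: think of building dyadic rectangles by first splitting along axis $1$ as far as desired, then axis $2$, and so on, so that the collection of dyadic rectangles is indexed by a $d$-fold nested family of binary trees. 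Along a single one-dimensional binary tree of dyadic subintervals of $[n]$, one can compute a sorted list (or an order-statistics structure) for every node in $O(n\log n)$ total time by merging the two sorted children lists bottom-up; then any order statistic, in particular $q_\tau$, of a node is read off in $O(\log n)$ time, and $\sum_{i\in R}\rho_\tau(y_i-a)$ for the relevant $a$ is obtained from prefix sums of the sorted list in $O(\log n)$. Carrying this out across all $d$ nested levels gives the $N(\log n)^d$ bound: at the outermost level there are $O(N/n)$ ``rows'' each costing $O(n\log n)$, and each coarser level multiplies the per-element cost by another $O(\log n)$ factor while the number of elements stays $O(N)$ (the multiscale merge argument of \cite{donoho1997cart}, with an extra $\log n$ per dimension coming from quantile/order-statistic queries rather than $O(1)$ mean updates). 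Summing the two contributions yields the claimed $C\bigl(N(\log n)^d + d\,2^d N\bigr)$.

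The step I expect to be the main obstacle is the careful accounting of the $(\log n)^d$ factor: one has to argue that maintaining sorted lists (or wavelet-tree/order-statistic structures) for dyadic rectangles along each successive axis can be done by merging, that each merge across a dyadic boundary costs time linear in the output size, and that the total output size summed over one level of the hierarchy is $O(N\log n)$ rather than larger. Concretely, the subtle point is that along one axis the sum over all dyadic nodes of their sizes is $O(n\log n)$ (each element lies in $\log n$ dyadic intervals), and this must be propagated correctly through the $d$ nested constructions so the blow-up is exactly one $\log n$ per dimension and no more. Once this multiscale merge bound is in place, the correctness of the Bellman recursion and the back-tracking are routine, exactly as in the mean-regression Dyadic CART algorithm, with squared-error sufficient statistics replaced by sorted-list/order-statistic sufficient statistics; the $\gamma$-constraint only disables the ``do not split'' branch for small rectangles and does not affect the complexity analysis.
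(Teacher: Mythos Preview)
Your proposal is correct and follows essentially the same route as the paper: the same Bellman recursion over dyadic rectangles, the $O(d\,2^d N)$ count for the split minimization, and a bottom-up merge of sorted lists to compute the per-rectangle quantile losses, with the $(\log n)^d$ factor coming from the total size of all dyadic rectangles. The paper's accounting is phrased by grouping rectangles according to their size (there are $(\log n)^d$ distinct dyadic sizes, and for each fixed size the rectangles tile $L_{d,n}$ so the work is $O(N)$), which is exactly equivalent to your per-element observation that each point lies in $O((\log n)^d)$ dyadic rectangles; your additional machinery of $O(\log n)$ order-statistic queries and prefix sums is not needed once the sorted list for $R$ is in hand (a single linear pass computes $\mathrm{SQL}(R)$), and the $\gamma$-constraint is handled in the paper by simply skipping rectangles with $|R|<\gamma$ and only allowing splits into feasible children, consistent with your remark that it does not affect the complexity.
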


The description of the algorithm and the proof of its computational complexity are given in Section~\ref{sec:compu}.



\subsection{Implications  for  Bounded Variation Signals}\label{sec:bddvar}

	

It was shown in~\cite{donoho1997cart} and~\cite{chatterjee2019adaptive} that an oracle risk bound of the type shown in Theorem~\ref{thm0} implies minimax rate optimality (up to log factors) for other function classes of interest as well. We now proceed to discuss consequences of  Theorem \ref{thm0} for the class $\mathcal{BV}_{d,n}(V)$ of bounded variation signals. This class of signals is  defined as 
\[
\mathcal{BV}_{d,n}(V)\,:=\, \left\{  \theta \in \mathbb{R}^{L_{d,n}}\,:\,\text{TV}(\theta) \leq V  \right\}, 
\]
where 
\[
\text{TV}(\theta) \,:=\,\sum_{(i,j)\in E_{d,n}}\vert \theta_i - \theta_{j}\vert,
\]
and $E_{d,n}$ is the edge set of the graph $L_{d,n}$.

The class of signals $\mathcal{BV}_{d,n}(V)$ is rich enough to contain signals that are  smooth in certain regions of their domain but discontinuous in other regions. The problem of estimation of a signal in the class $\mathcal{BV}_{d,n}(V)$ has attracted a lot of attention in the statistics literature, see for instance \cite{mammen1997locally,tibshirani2014adaptive,sadhanala2016total,hutter2016optimal,PadillaSST17,chatterjee2019new,ortelli2019prediction,guntuboyina2020adaptive}.

We arrive at the next corollary by combining Theorem \ref{thm0} with existing approximation theoretic results shown in~\cite{chatterjee2019adaptive} (see Proposition $8.9$ and Theorem $4.2$ there)

\begin{corollary}
	\label{cor2}
For any $\theta^* \in \mathcal{BV}_{d,n}(V)$, there exists a constant $C>0$ only depending on the dimension $d$ such that 
\begin{equation*}
\displaystyle Q_{rdp}(\theta^*) \leq 
\begin{cases}
C \left(   \frac{V^{2/3}  \max\{1,U^2\} \log^{5/3} (\max\{N,U\})}{N^{2/3}}+\frac{   \max\{1,U^2\}  \log^2 \max\{N,U\}   }{N}    \right) \:\:\:{\text{if $d = 1$}}\\
C \left(  \frac{V \max\{1,U^2\}\log^2\left(\max\{N,U\}\right)  }{N}+\frac{\max\{1,U^2\} \log^2 \max\{N,U\} }{N}\right) \:\:\:{\text{if $d > 1$}}.
\end{cases}
\end{equation*}
Therefore, under the same assumptions and the choice of $\lambda$ and $\gamma$ in Theorem~\ref{thm0}, the same probability tail bound as in~\eqref{eqn:upper0.3} holds for any $\theta^* \in \mathcal{BV}_{d,n}(V)$ with $Q_{rdp}(\theta^*)$ replaced by the bound above. 
\end{corollary}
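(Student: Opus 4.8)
The plan is to derive Corollary~\ref{cor2} purely as a consequence of Theorem~\ref{thm0} together with off-the-shelf approximation results for bounded variation arrays. Recall that Theorem~\ref{thm0} gives, on a high-probability event, the bound $\|\hat\theta_{rdp}-\theta^*\|^2/N \leq C_3 Q_{rdp}(\theta^*)/\epsilon^2$, and that $Q_{rdp}(\theta^*)$ is itself an infimum over all $\theta\in\R^N$ of a bias term $\|\theta-\theta^*\|^2/N$ plus a complexity term proportional to $k_{rdp}(\theta)\big(\max\{1,U^2\}\log^2(\max\{N,U\}) + \|\theta\|_\infty^2\log N\big)/N$. So the entire content of the corollary is an upper bound on $Q_{rdp}(\theta^*)$ when $\theta^*\in\mathcal{BV}_{d,n}(V)$; once that bound is in hand, the probability statement is immediate by plugging into~\eqref{eqn:upper0.3}.

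To bound $Q_{rdp}(\theta^*)$ I would exhibit a good competitor $\theta$ in the infimum, namely a piecewise-constant-on-an-RDP approximation to $\theta^*$. This is exactly where the cited approximation-theoretic results of~\cite{chatterjee2019adaptive} (Proposition~8.9 and Theorem~4.2 there) enter: they guarantee, for any $\theta^*$ with $\mathrm{TV}(\theta^*)\leq V$ and any target complexity $k$, the existence of an RDP $\Pi$ with $|\Pi|\le k$ and a $\theta\in S(\Pi)$ such that $\|\theta-\theta^*\|^2/N$ is controlled by a power of $V/k$ (with the exponent and log factors depending on whether $d=1$ or $d>1$; in $d=1$ the rate is of the order $(V/k)^2$ up to logs, while for $d>1$ it is of the order $V\log/k$ up to logs). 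Moreover one can take this approximant to satisfy $\|\theta\|_\infty \le \|\theta^*\|_\infty$ (truncation does not increase TV and only decreases the sup-norm), so $\|\theta\|_\infty^2\log N$ is absorbed into the $\max\{1,U^2\}\log^2(\max\{N,U\})$ term after noting $\|\theta^*\|_\infty\le U$. Substituting this $\theta$ into $Q_{rdp}(\theta^*)$ yields, for each $k$, a bound of the form $\mathrm{approx}(V,k) + C k \max\{1,U^2\}\log^2(\max\{N,U\})/N$, and optimizing over the integer $k$ trades the two terms: in $d=1$ balancing $(V/k)^2$ against $k/N$ gives $k\asymp (V^2 N)^{1/3}$ and hence the $V^{2/3}N^{-2/3}$ rate, while for $d>1$ balancing $V/k$ against $k/N$ gives $k\asymp\sqrt{VN}$ and the $V/N$ rate; the residual $\max\{1,U^2\}\log^2\max\{N,U\}/N$ term is what appears when $k=1$ is forced (e.g.\ $V$ very small) and must be carried along. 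The extra half-power of log in the $d=1$ case ($\log^{5/3}$ versus $\log^2$) is exactly what falls out of propagating the logarithmic factors in the approximation bound through this optimization.

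The main obstacle, such as it is, is bookkeeping rather than conceptual: one must be careful that the approximation results from~\cite{chatterjee2019adaptive} are stated for the \emph{squared-loss} (mean regression) Dyadic CART setting, so I would want to confirm that they are purely approximation-theoretic statements about how well BV arrays are approximated by RDP-piecewise-constant arrays — which they are, since they do not reference the noise model at all — and hence transfer verbatim. The only other subtlety is matching the precise shape of the complexity term: Theorem~\ref{thm0}'s complexity penalty carries the combination $\max\{1,U^2\}\log^2(\max\{N,U\}) + \|\theta\|_\infty^2\log N$, and I need the truncation argument above to collapse this to a clean $\max\{1,U^2\}\log^2\max\{N,U\}$ factor uniformly over the competitors used, after which the stated case analysis in the corollary follows by elementary optimization over $k$. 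No part of this requires re-proving anything about QDCART itself; it is entirely a corollary of Theorem~\ref{thm0} plus imported approximation bounds.
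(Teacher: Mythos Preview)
Your high-level strategy is exactly the paper's: bound $Q_{rdp}(\theta^*)$ by plugging in an RDP-piecewise-constant approximant from the approximation results in \cite{chatterjee2019adaptive} (Proposition~8.9 for $d=1$, the argument inside Theorem~4.2 for $d>1$), control $\|\theta\|_\infty$ via $\|\theta^*\|_\infty\le U$, and then optimize the free parameter. For $d=1$ your description is correct and matches the paper; there they parameterize by $\eta$ (so $k_{rdp}(\theta)\le C\eta^{-1}$ and $\|\theta-\theta^*\|_\infty\le V\eta$) and take $\eta = V^{-2/3}(\log N/N)^{1/3}$, which is the same balance as your $k\asymp (V^2 N)^{1/3}$.

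There is, however, a genuine error in your $d>1$ calculation. The approximation result actually delivered by the proof of Theorem~4.2 in \cite{chatterjee2019adaptive} is: for any $\eta>0$ there exists $\theta$ with $k_{rdp}(\theta)\le C V\log N/\eta$, $\|\theta\|_\infty\le\|\theta^*\|_\infty$, and $\|\theta-\theta^*\|^2\le C\eta V\log N$. Reparameterizing in terms of $k$ gives $\|\theta-\theta^*\|^2/N \asymp V^2(\log N)^2/(kN)$, \emph{not} $V\log /k$ as you wrote. Balancing $V^2(\log N)^2/(kN)$ against $k(\log)^2/N$ yields $k\asymp V$ (not $\sqrt{VN}$), and the resulting rate is $V(\log)^2/N$. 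Your stated balance $V/k \asymp k/N$ would actually produce a rate of order $\sqrt{V/N}$, not $V/N$, so even internally the arithmetic does not give the conclusion you claim. In the paper the optimization is done directly in $\eta$ with the choice $\eta=\log N$, and the three pieces of $Q_{rdp}$ are checked separately; you should correct the form of the $d>1$ approximation bound and redo the balance accordingly.
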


The rates implied by Corollary \ref{cor2}  are minimax optimal, save for logarithmic  factors, in the class $\mathcal{BV}_{d,n}(V)$,  see the discussion in \cite{tibshirani2014adaptive} for the case $d=1$ and the corresponding one in~\cite{hutter2016optimal},~\cite{sadhanala2016total} for the case $d>1$. It was shown in Theorem 5.1 from ~\cite{chatterjee2019adaptive} that the mean regression version of Dyadic CART is minimax rate optimal (up to log factors) in the class $\mathcal{BV}_{d,n}(V)$. Corollary~\ref{cor2} can be seen as an extension of this result to the quantile setting which holds under much weaker tail decay conditions.

\subsection{Implications for Piecewise Constant  signals}
\label{sec:pc}


We now discuss consequences of Theorem \ref{thm0} for the class of piecewise constant signals in dimensions $d = 1$ and $d = 2.$ Towards that end, given $\theta \in \mathbb{R}^{L_{d,n}}$, we define  \textit{$k(\theta)$ as the size of the smallest rectangular partition $\Pi$ of $L_{d,n}$ such that $\theta$ is constant in each rectangle of $\Pi$}. By construction, $k(\theta) \leq  k_{rdp}(\theta)$ for all $\theta \in \mathbb{R}^{L_{d,n}}$. Furthermore, Proposition 3.9 in \cite{chatterjee2019adaptive} shows that there exists an absolute constant $C>0$ such that  for all $\theta \in \mathbb{R}^{L_{d,n}}$ it holds that 
\begin{equation}
	\label{eqn:k1}
	 k_{rdp}(\theta)   \,\leq \, C  k(\theta) \log \left(\frac{en}{k(\theta) }\right)
\end{equation}
if  $d=1$ and
\begin{equation}
	\label{eqn:k2}
	 k_{rdp}(\theta)   \,\leq \, C (\log n)^2 k(\theta) 
\end{equation}
if $d=2$.

Combining  Theorem \ref{thm0} with  (\ref{eqn:k1}) and (\ref{eqn:k2}) we immediately obtain our next corollary

\begin{corollary}
	\label{cor3}
For any $\theta^* \in \mathbb{R}^{L_{d,n}}$, there exists a constant $C>0$ only depending on the dimension $d$ such that 
\begin{equation*}
Q_{rdp}(\theta^*) \leq 
\begin{cases}
C\left(    \frac{k(\theta^*)\max\{1,U^2\}\log^2 (\max\{N,U\})  \log(N/ k(\theta^*) )   }{N}    \right) \:\:\:{\text{if $d = 1$}}\\
C \left(    \frac{k(\theta^*)\max\{1,U^2\}\log^2 (\max\{N,U\})\log^2  N}{N}    \right) \:\:\:{\text{if $d = 2$}}.
\end{cases}
\end{equation*}
Therefore, under the same assumptions and the choice of $\lambda$ and $\gamma$ in Theorem~\ref{thm0}, the same probability tail bound as in~\eqref{eqn:upper0.3} holds for any $\theta^* \in \mathbb{R}^{L_{d,n}}$ with $Q_{rdp}(\theta^*)$ replaced by the bound above. 
\end{corollary}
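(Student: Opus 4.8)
The plan is to bound $Q_{rdp}(\theta^*)$ directly from its definition by making one explicit choice inside the infimum, and then to feed in the two ``sparsity of recursive dyadic partitions'' estimates \eqref{eqn:k1} and \eqref{eqn:k2}. Since $\theta^*$ is itself a fixed array in $\mathbb{R}^{L_{d,n}}$, the natural move is to take $\theta = \theta^*$ in the infimum defining $Q_{rdp}(\theta^*)$. This kills the approximation term $\|\theta-\theta^*\|^2/N$ entirely and leaves
\[
Q_{rdp}(\theta^*)\,\le\,\frac{k_{rdp}(\theta^*)\,\bigl(\max\{1,U^2\}\log^2(\max\{N,U\})+\|\theta^*\|_\infty^2\log N\bigr)}{N}.
\]

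Next I would collapse the bracket. Using $\|\theta^*\|_\infty\le U$ (as noted in the discussion following Theorem~\ref{thm0}), together with the trivial inequalities $U^2\le\max\{1,U^2\}$ and $\log N\le\log^2(\max\{N,U\})$, the quantity in parentheses is at most $2\max\{1,U^2\}\log^2(\max\{N,U\})$, so
\[
Q_{rdp}(\theta^*)\,\le\,\frac{2\,k_{rdp}(\theta^*)\,\max\{1,U^2\}\log^2(\max\{N,U\})}{N}.
\]

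It then remains only to replace $k_{rdp}(\theta^*)$ by the (typically far smaller) rectangular-partition complexity $k(\theta^*)$. When $d=1$ the lattice has $n=N$ nodes, so \eqref{eqn:k1} gives $k_{rdp}(\theta^*)\le C\,k(\theta^*)\log(eN/k(\theta^*))$, which after substitution is the claimed $d=1$ bound. When $d=2$ the side length is $n=\sqrt N$, whence $(\log n)^2=\tfrac{1}{4}(\log N)^2\le(\log N)^2$, and \eqref{eqn:k2} gives $k_{rdp}(\theta^*)\le C\,k(\theta^*)\log^2 N$, which after substitution is the claimed $d=2$ bound. The final assertion of the corollary is then immediate: Theorem~\ref{thm0} yields $\|\hat\theta_{rdp}-\theta^*\|^2/N\le C_3 Q_{rdp}(\theta^*)/\epsilon^2$ on an event of probability at least $1-C_2\epsilon$, and we have just bounded $Q_{rdp}(\theta^*)$ from above.

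This argument is short and largely bookkeeping; I do not anticipate a genuine obstacle. The only places that need a moment's care are (i) justifying $\|\theta^*\|_\infty\le U$ so that the $\|\theta^*\|_\infty^2\log N$ term can be folded into the first term of the bracket, and (ii) consistently translating between the lattice side length $n$ and the sample size $N=n^d$ when invoking \eqref{eqn:k1}--\eqref{eqn:k2}, since the final statement carries logarithms in $N$ while those estimates are phrased in terms of $n$ (and, in the $d=1$ case, tolerating the harmless difference between $\log(eN/k(\theta^*))$ and $\log(N/k(\theta^*))$ in the regime $k(\theta^*)\ll N$).
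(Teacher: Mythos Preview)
Your proposal is correct and follows exactly the approach the paper indicates: the paper simply states that the corollary is obtained ``immediately'' by combining Theorem~\ref{thm0} with \eqref{eqn:k1} and \eqref{eqn:k2}, and you have spelled out precisely those steps (choosing $\theta=\theta^*$ in the infimum, absorbing $\|\theta^*\|_\infty^2\log N$ via $\|\theta^*\|_\infty\le U$, and then invoking the two $k_{rdp}$--$k$ comparison inequalities). Your cautionary remarks about the $n$ versus $N$ translation and the $\log(eN/k)$ versus $\log(N/k)$ discrepancy are well placed and handled correctly.
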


Notice  that  in Corollary~\ref{cor3}, the resulting rate implied is  $\tilde{O}(k(\theta^*)/N)$ which is the usual parametric rate of estimation for a signal $\theta^*$ consisting of $k(\theta^*)$ pieces if one knows the locations of the end points of the constant pieces of $\theta^*.$ Here, of course the QDCART estimator does not know the true partition corresponding to the true signal. Corollary~\ref{cor3} can be seen as an extension of  Corollary 3.10 in~\cite{chatterjee2019adaptive} to the quantile setting which holds even under heavy tailed error distributions.

\begin{remark}
	The situation when $d > 2$ is more difficult as versions of~\eqref{eqn:k1} and~\eqref{eqn:k2} are not known to hold in higher dimensions than $2.$ We refer the reader to~\cite{chatterjee2019adaptive} where this issue has been thoroughly discussed. We prefer therefore to just state our results for dimensions $d \leq 2$.
\end{remark}

\subsection{Quantile ORT Estimator}\label{sec:ort}
The ORT estimator, introduced in~\cite{chatterjee2019adaptive} is a variant of the Dyadic CART estimator which enjoys better statistical risk guarantees in general but has significantly slower computational complexity; see Lemma $1$ in~\cite{chatterjee2019adaptive}. Just as we have proposed QDCART, it is natural to extend the optimal regression tree (ORT) estimator to the quantile setting as well. This leads us to define the quantile optimal regression tree (QORT)  estimator. Before giving the definition of QORT, we need to introduce some additional notation.

 Given a rectangle  $R = \prod_{j=1}^d [a_j,b_j] \subset L_{d,n}$, a hierarchical split consists  of choosing a coordinate $j \in \{1,\ldots,d\}$ and then constructing the rectangles $R_1$ and $R_2$ with $R = R_1 \cup R_2$, $R_1\cap R_2 =\emptyset$ and
\[
R_1 \,=\, \prod_{i=1}^{j-1} [a_i,b_i] \times [a_j,l] \times \prod_{i=j+1}^d [a_i,b_i],
\]
with $a_j\leq l \leq b_j$ and $l \in \mathbb{Z}_{+}$. Thus, the difference of a hierarchical split with a dyadic split is that the former is not restricted to split an interval only at the midpoint. Starting from $L_{d,n}$, one can keep on performing hierarchical splits recursively, creating refined partitions. A \textit{hierarchical partition/decision tree} is any partition reachable by such successive hierarchical splits. Note that this is the usual definition of a decision tree except we are carrying out everything on the lattice $L_{d,n}.$  We denote by $\mathcal{P}_{tree}(L_{d,n})$ the set of hierarchical partitions of $L_{d,n}$.

Given $\theta \in \mathbb{R}^{L_{d,n}}$, we denote by \textit{$k_{tree}(\theta)$ the smallest number of elements of any hierarchical partition in which $\theta$ is piecewise constant}. It is clear that for any $\theta \in \mathbb{R}^{L_{d,n}}$ we must have $$k(\theta) \leq k_{tree}(\theta) \leq k_{rdp}(\theta).$$

Armed with the notation above, we can now define the estimator 
\begin{equation}\label{eqn:new_version2}
	\hat{\theta}_{tree}= O_{\hat{\Pi}_{tree},\tau}(y)
\end{equation}
where 
\begin{equation}
	\label{eqn:qdcart3}
	\hat{\Pi}_{tree}\,:= \argmin_{\Pi \in \mathcal{P}_{tree}(L_{d,n})\,:\,   \vert R\vert \geq \gamma\,\,\forall R \in \Pi} \left\{ \sum_{i\in L_{d,n}} \rho_{\tau}(y_i -  (O_{\Pi,\tau}(y))_i  ) + \lambda\vert \Pi\vert   \right\}
\end{equation}
for tuning parameters $\lambda,\gamma > 0$. By construction, $\hat{\theta}_{tree}$ is the quantile version of the ORT estimator proposed and studied in \cite{chatterjee2019adaptive}.




With the notation above in hand, we are now ready to present our main result for the QORT estimator.

\begin{theorem}\label{thm5}
	Define for any $\theta \in \mathbb{R}^{L_{d,n}}$, the quantity
	\begin{equation*}
	Q_{tree}(\theta^*)  := \underset{\theta  \in \mathbb{R}^N }{\inf}  \left\{\frac{  \| \theta   - \theta^* \|^2 }{N}\,+\, \frac{ k_{tree}(\theta)( \max\{1,U^2\}\log^2 \left(\max\{N,U\}\right)  +  \|\theta\|^2_{\infty} \log N)}{N}  \right\}
	\end{equation*}
	Under the same assumptions and the choice of $\lambda$ and $\gamma$ in Theorem~\ref{thm0}, the same probability tail bound as in~\eqref{eqn:upper0.3} holds for any $\theta^* \in \mathbb{R}^{L_{d,n}}$ with one difference;  $Q_{rdp}(\theta^*)$ is replaced by $Q_{tree}(\theta^*).$

\end{theorem}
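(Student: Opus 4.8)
The proof of Theorem~\ref{thm5} proceeds by observing that the argument establishing Theorem~\ref{thm0} for QDCART never really uses the dyadic structure of the partitions in an essential way; what it uses is (a) a union bound over the class of candidate partitions, which requires a bound on the cardinality of the relevant partition class at each complexity level $k$, and (b) control of the empirical process $\sum_{i\in L_{d,n}}\bigl(\rho_\tau(y_i-(O_{\Pi,\tau}(y))_i)-\rho_\tau(y_i-\theta^*_i)\bigr)$ uniformly over partitions with all rectangles of size at least $\gamma$. Both ingredients go through verbatim with $\mathcal{P}_{rdp}(L_{d,n})$ replaced by $\mathcal{P}_{tree}(L_{d,n})$ and $k_{rdp}$ replaced by $k_{tree}$, provided the number of hierarchical partitions of $L_{d,n}$ into at most $k$ pieces is not too large. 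The plan is therefore: first, recall (from \cite{chatterjee2019adaptive}) or re-derive the counting bound $\log|\{\Pi\in\mathcal{P}_{tree}(L_{d,n}): |\Pi|\le k\}| = O(k\log N)$, which is the same order as the corresponding bound for recursive dyadic partitions; second, replay the proof of Theorem~\ref{thm0} with this substitution, checking that every step that invoked "$\Pi$ is dyadic" only used the counting estimate and the lower bound $\gamma$ on rectangle sizes.

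Concretely, I would carry out the following steps. First, state the analogue of the basic inequality: since $\hat\Pi_{tree}$ minimizes the penalized quantile objective over all hierarchical partitions with rectangles of size $\ge\gamma$, for any competitor $\theta$ (with its minimal hierarchical partition $\Pi_\theta$, of size $k_{tree}(\theta)$, assumed WLOG to have pieces of size $\ge\gamma$ after a standard merging argument as in \cite{chatterjee2019adaptive}) we get a comparison inequality between the data-fidelity terms plus a $\lambda(k_{tree}(\theta)-|\hat\Pi_{tree}|)$ penalty slack. Second, invoke Lemma~\ref{cor1} (the $\ell_\infty$ control coming from the choice $\gamma=c_1\log N$) to confine $\hat\theta_{tree}$ to a bounded box; this uses only that every rectangle in $\hat\Pi_{tree}$ has size $\ge\gamma$, so it transfers without change. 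Third, on that bounded box, apply the same peeling/chaining bound on the quantile empirical process, but now the union bound runs over hierarchical partitions; here we plug in the counting estimate from step one, which contributes the $\log N$ factor exactly as in the dyadic case. Fourth, combine the deterministic comparison inequality with the high-probability empirical-process bound, convert the quantile loss excess to squared $\ell_2$ error via the curvature lower bound in Assumption~\ref{as1} (as in the proof of Theorem~\ref{thm0}), optimize over the choice of competitor $\theta$, and read off the bound $\|\hat\theta_{tree}-\theta^*\|^2/N \le C_3 Q_{tree}(\theta^*)/\epsilon^2$.

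The only place where anything genuinely new is needed is the counting bound for hierarchical partitions, and this is the step I expect to be the main (though still modest) obstacle: a naive count over choices of split coordinate and split location at each of $k$ recursive steps gives roughly $(d n)^{k}$ orderings, i.e. $\log$ of order $k\log(dn)=O(k\log N)$, which is already good enough — but one must be a little careful that distinct split-sequences can produce the same partition and that $k_{tree}(\theta)$, rather than the number of splits, is the right complexity parameter; the clean way is to cite the corresponding combinatorial lemma in \cite{chatterjee2019adaptive} (the same one used to bound $|\mathcal{P}_{tree}|$ for the mean-regression ORT analysis) rather than re-prove it. Everything downstream — the $\ell_\infty$ localization, the quantile empirical process control, and the curvature argument converting loss excess to $\ell_2$ error — is literally the argument of Theorem~\ref{thm0} with the symbol $rdp$ replaced by $tree$ throughout, which is why the authors chose to present it as a corollary-style extension. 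I would conclude by noting that the constants $c_1,C_1,C_2,C_3$ are unchanged (up to the implicit dimension dependence already present), so the statement holds with the identical choices of $\lambda$ and $\gamma$.
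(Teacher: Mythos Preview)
Your proposal is correct and mirrors the paper's approach: the paper in fact proves a single general theorem (Theorem~\ref{thm:general}) for an arbitrary collection $\mathcal{S}$ of piecewise-constant subspaces, using only the counting bound $|\{S\in\mathcal{S}:\dim(S)=k\}|\le N^{2k}$ inside Lemma~\ref{lem4}, and then specializes to $\mathcal{P}_{rdp}$ for Theorem~\ref{thm0} and to $\mathcal{P}_{tree}$ for Theorem~\ref{thm5}. Your identification of the counting bound as the sole structure-dependent ingredient, the transfer of the $\ell_\infty$ localization (Lemma~\ref{cor1}) via the rectangle-size constraint, and the merging/approximation step (the analogue of Lemma~\ref{lem6}) are exactly the points the paper relies on.
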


\begin{remark}
	The above theorem basically says that $\frac{1}{N} \|\hat{\theta}_{tree} - \theta^*\|^2 = O_{\mathbb{P}} (Q_{tree}(\theta^*)).$ This is in general a better bound than saying $\frac{1}{N} \|\hat{\theta}_{tree} - \theta^*\|^2 = O_{\mathbb{P}} (Q_{rdp}(\theta^*))$ because $k_{tree}(\theta^*) \leq k_{rdp}(\theta^*).$  
\end{remark}


Theorem \ref{thm5} generalizes to the quantile setting  the general risk bound proven in \cite{chatterjee2019adaptive} for the ORT estimator. It is clear that the implications presented for bounded variation and piecewise constant function classes continue to hold for the QORT estimator as well. However, the QORT estimator would have significantly worse computational complexity than the QDCART estimator which is why we focus more on the QDCART estimator in this paper. It should be possible to provide an algorithm demonstrating a computational complexity result scaling like $\tilde{O}(N^{2 + 1/d})$ for the QORT estimator, analogous to Theorem~\ref{thm:compu}. We do not carry this due to space considerations.


\begin{remark}
	All our theoretical guarantees are in the regime when $d$ is held fixed and $n$ is growing. Our estimator and our results are practically useful when $d$ is small, like $1,2$ or $3$. 
\end{remark}
\section{Discussion}\label{sec:discuss}
\subsection{Proof Technique}\label{sec:proof}

Our proof follows a M-estimation approach by viewing the QDCART estimator as a penalized M estimator. This M estimation viewpoint was also used to analyze the Quantile version of Trend Filtering as in~\cite{padilla2020risk} and we borrow some of the techniques from~\cite{padilla2020risk}. The mean regression version of Dyadic CART was throroughly analyzed in~\cite{chatterjee2019adaptive}. We also use some proof techniques developed there and adapt it to our setting. We now discuss a sketch of the proof of our main result in Theorem \ref{thm0}. We have divided the proof sketch into several steps for the convenience of the reader.

From the M estimation viewpoint, the natural loss function which arises is the following population quantile loss function
$M \, :\,\mathbb{R}^{L_{d,n}}\rightarrow \mathbb{R} $
\[
M(\theta )\,:=\,  \sum_{i\in L_{d,n}} \mathbb{E}\left(  \rho_{\tau}(y_i-\theta_i) -  \rho_{\tau}(y_i-\theta_i^*)  \right).
\]

Another loss function that plays a role in our proof is the following Huber loss type function 
\[
\Delta^2(\theta)  =    \sum_{i \in  L_{d,n}}   \min\{  \vert \theta_i \vert,   \vert \theta_i\vert^2  \}.
\]

The Huber loss function $\Delta^2$ is always upper bounded by the population quantile loss; this is the content of Lemma 13 in \cite{padilla2020risk} which says that, under Assumption \ref{as1}, there exists an absolute constant $c_0>0$ such that for all $\delta \in \mathbb{R}^{L_{d,n}}$ it holds that
\begin{equation}
\label{eqn:loss}
M(\theta^* + \delta) \,\geq \,  c_0   \Delta^2(\theta).
\end{equation}

With the notation above in hand, we now proceed to sketch the different steps involved in the proof of  Theorem \ref{thm0}.

\textbf{Step 1: (Preliminary Localization)} We first show that $\|\hat{\theta}_{rdp}\|_{\infty} \leq U$ with high probability. The reader can think of this as a preliminary localization step. 

Recall  $U$ from   Definition \ref{as2}. Within this proof sketch, we will assume $U = O(1)$ which is the regime of interest. For sequences $a_n$ and $b_n$, we will also use the notation $a_n \lesssim b_n$ to denote that $a_n \leq C b_n$ for an absolute constant $C$.

This preliminary localization step is crucial in our proof because it allows us to conclude that 
\begin{equation}\label{eq:equi}
\|\hat{\theta}_{rdp} - \theta^*\|^2 \lesssim \Delta^2(\hat{\theta}_{rdp}  - \theta^*) \lesssim M(\hat{\theta}_{rdp}) \lesssim \|\hat{\theta}_{rdp}  - \theta^*\|^2.
\end{equation}
The last inequality above follows from Lemma~\ref{lem5}. The above means that the loss functions $M, \Delta^2$ are essentially equivalent (up to constants) to the squared loss. In the reminder of the proof sketch all the events are intersected with  $\|\hat{\theta}_{rdp}\|_{\infty} \leq U$ as the complement event $\|\hat{\theta}_{rdp}\|_{\infty} > U$ has negligible probability and can be handled separately.

\textbf{Step 2: (Reduction to Bounding $M$ loss)} Because of~\eqref{eq:equi}, in order to bound $\mathbb{P}(\|\hat{\theta}_{rdp} - \theta^*\|^2 > t^2)$ for any $t > 0$, it suffices for us to bound 
$$\mathbb{P}(M(\hat{\theta}_{rdp} ) > t^2).$$

\textbf{Step 3: (Peeling)} To bound the required probability, we perform the peeling step which is a standard step in empirical process theory. We write
$$\mathbb{P}(M(\hat{\theta}_{rdp}) > t^2) = \sum_{j = 1}^{J} p_j,\,\,\,\,\,p_j \,:=\, \mathbb{P}(2^{j - 1} t^2 < M(\hat{\theta}_{rdp}) \leq 2^j t^2).$$
Again, because $\|\hat{\theta}_{rdp}\|_{\infty} \leq U$ is true it follows that $M(\hat{\theta}) \lesssim \Delta^2(\hat{\theta}_{rdp}-\theta^*) \lesssim N$ as explained in (\ref{eqn:inf}). Therefore, we only need to sum up to $J = O(\log N) = \tilde{O}(1)$ in our peeling step.

\textbf{Step 4:(Basic Inequality, Suprema and Markov's Inequality)} 
For $i \in L_{d,n}$, define the sample version of the quantile population loss function as a random function $\hat{M}: \mathbb{R} \rightarrow \mathbb{R}$ such that 
\[
\hat{M}(\theta)   :=    \sum_{i \in  L_{d,n}}  	\left\{\rho_{\tau}(y_i -   \theta_i) -   \rho_{\tau}(y_i -   \theta_i^* ) \right\} .
\]
Now the so called basic inequality gives us
\begin{equation}
\label{eqn:bi}
\hat{M}(\tilde{\theta}) - \hat{M}(\hat{\theta}_{rdp}) +   \lambda k_{rdp}(\tilde{\theta})  -  \lambda k_{rdp}(\hat{\theta}_{rdp})   \geq 0
\end{equation}
for any $\tilde{\theta} \in \Theta$,  with $\Theta$  the parameter space defined in (\ref{eqn:Theta2}). 

Now we can bound $p_j$ as follows . For any reference $\tilde{\theta} \in \Theta$, notice that
\begin{align*}
&p_j = \mathbb{P}(2^{j - 1} t^2 < M(\hat{\theta}_{rdp}), M(\hat{\theta}_{rdp}) \leq 2^j t^2,) \leq \\& \mathbb{P}(2^{j - 1} t^2 < M(\hat{\theta}_{rdp}) + \underbrace{\hat{M}(\tilde{\theta}) - \hat{M}(\hat{\theta}_{rdp}) + \lambda   k_{rdp}(\tilde{\theta})  -  \lambda k_{rdp}(\hat{\theta}_{rdp}) }\underbrace{- M(\tilde{\theta}) + M(\tilde{\theta})},M(\hat{\theta}_{rdp}) \leq 2^j t^2) 
\end{align*}
Above, we used the basic inequality and added and subtracted $M(\tilde{\theta})$ because we would like to obtain an oracle risk bound with respect to $\tilde{\theta}.$ Now we will just  ``\textit{sup out}" $\hat{\theta}_{rdp}$ to obtain 
\begin{align*}
&p_j \leq \mathbb{P}\bigg(2^{j - 1} t^2 < \sup_{\theta \in \Theta: \|\theta - \theta^*\|^2 \lesssim 2^j t^2} \big(M(\theta) + \hat{M}(\tilde{\theta}) - \hat{M}(\theta) +  \lambda  k_{rdp}(\tilde{\theta})  - \lambda k_{rdp}(\theta) - M(\tilde{\theta}) + M(\tilde{\theta})\big)\bigg)
\end{align*}
where again we used the equivalence of the squared loss and the $M$ loss. 
Next, we simply use Markov's inequality to obtain $ p_j\,\leq \,      T_{1,j} + T_2  $, where 
\[
T_{1,j}   :=  \frac{1}{2^{j - 1} t^2 }\mathbb{E}\left( \sup_{\theta \in \Theta: \|\theta - \theta^*\|^2 \lesssim 2^j t^2} \big(M(\theta) - \hat{M}(\theta) - M(\tilde{\theta}) + \hat{M}(\tilde{\theta}) +   \lambda k_{rdp}(\tilde{\theta})  - \lambda k_{rdp}(\theta)\big) \right),
\]
and \[
T_{2,j}  =  \frac{M(\tilde{\theta})}{2^{j - 1} t^2}.
\]

\textbf{Step 5:(Symmetrization and Contraction)}
At this point, $T_{1,j}$ can be viewed as the expected suprema of a penalized empirical process. To simplify matters further, we use the tools of symmetrization (Lemma \ref{lem28}) and contraction  (Lemma \ref{lem29}) to convert a penalized empirical process to a penalized Rademacher process. Specifically, we show that 
\begin{equation}
\label{eqn:local}
T_{1,j} \,\lesssim\,  T_{1,j}^{\prime}  := \frac{1}{2^{j - 1} t^2 }\mathbb{E}\left(    \underset{ \theta \in \Theta \,:\,\| \theta-\theta^* \|^2 \lesssim 2^{j}t^2 }{\sup}\left\{  \xi^{\top}(\theta  - \tilde{\theta}) +\frac{  \lambda}{2}(   k_{rdp}(\tilde{\theta}) -  k_{rdp}(\theta) )    \right\}     \right)
\end{equation}
where  $\xi$ is a vector of  independent Radamacher random variables, see Lemma \ref{lem3}.

\textbf{Step 6:(Bounding Penalized Rademacher Complexity)}
At this point, we are left with the task of bounding the penalized Rademacher complexity term as in (\ref{eqn:local}). This task has essentially been done in Proposition $8.9$ in~\cite{chatterjee2019adaptive}, the only difference being that they bounded the corresponding Gaussian complexity term. It is not hard to convert the ideas there to our setting where we have Rademacher variables. In lemma~\ref{lem3} we show that
$$T_{1,j}^{\prime}    =    \tilde{O}\left(   \frac{k_{rdp}(\tilde{\theta}) + \|\tilde{\theta} - \theta^*\|^2}{2^{j-1}  t^2 }  \right),$$
provided that $\lambda$ is chosen to be not less than $O(\log^2 N)$.

\textbf{Step 7: (Putting everything together)}
Next, we bound $T_{2,j}$ by the squared error loss using Lemma~\ref{lem5}, which shows that
\[
M(\tilde{\theta})    \lesssim \|\tilde{\theta} - \theta^*\|^2.
\]
It follows from this and the previous steps that
\[
\mathbb{P}(  \| \hat{\theta}_{rdp} -\theta^* \|^2 > t^2)   \,\leq \,  \sum_{j=1}^{J} p_j  \,\lesssim\,  \sum_{j=1}^{J}   \frac{1}{2^{j-1}}\left[  \frac{k_{rdp}(\tilde{\theta})}{ t^2 }  +        \frac{1}{t^2}\|\tilde{\theta} - \theta^*\|^2 \right], 
\]
and so we can take an infimum over $\tilde{\theta} \in \Theta$ on the right hand side above.
A simple approximation lemma (see Lemma~\ref{lem6}) is now used to justify that we can actually take an infimum over all $\tilde{\theta} \in \R^{L_{d,n}}$ in the previous display which leads to 
\[
\mathbb{P}(  \| \hat{\theta}_{rdp} -\theta^* \|^2 > t^2)   \,\lesssim\, \frac{Q(\theta^*)}{t^2}.  
\]
where $$Q_{rdp}(\theta^*) \lesssim \inf_{\theta \in \R^{L_{d,n}}} \big(k_{rdp}(\tilde{\theta}) + \|\tilde{\theta} - \theta^*\|^2\big).$$
The above display implies that $\| \hat{\theta}_{rdp} -\theta^* \|^2 = O_{\mathbb{P}}(Q_{rdp}(\theta^*)).$ This concludes the proof.

\subsection{Comparison with Quantile Total Variation Denoising}\label{sec:comp}

We have shown that QDCART is computable in near linear time and enjoys attractive statistical properties. We do not compare QDCART with corresponding mean regression estimators simply because under heavy tails, the mean regression estimators perform poorly while our results continue to hold; see the simulations section in~\cite{padilla2020risk}. Therefore, comparison is appropriate with other quantile regression estimators. We believe that the most natural competitor to QDCART is the Quantile Total Variation Denoising (QTVD) estimator, proposed and studied in~\cite{padilla2020risk}. Actually, there are two versions of QTVD, the so called constrained and the penalized version. In the univariate case,~\cite{padilla2020risk} refers to the constrained version of this estimator as  constrained quantile  fused lasso (QCFL), and to the penalized version as penalized quantile fused lasso (PQFL). Both of these estimators were analyzed in \cite{padilla2020risk}.


In comparing  Corollaries \ref{cor2} and \ref{cor3} with the existing results known for CQFL and PQFL estimators, we make the following points. 

\begin{enumerate}
	\item The results proven in \cite{padilla2020risk} for CQFL and PQFL only need the lower bound portion in Assumption \ref{as1} while Corollaries \ref{cor2} and \ref{cor3}  require also the upper bound in Assumption \ref{as1} to hold.  However, this  is a very mild condition that leads to a guarantee for QDCART in terms of the mean squared error, a stronger (and a much cleaner) result than those for CQFL and PQFL which are based on the loss $\Delta_N^{2}(\cdot)$ defined as 
	\begin{equation}
		\label{eqn:delta}
		\Delta_N^2(v)\,:=\,\frac{1}{N} \sum_{i \in L_{d,n}} \min\{\vert v_i\vert, v_i^2 \},\,\,\forall v \in \mathbb{R}^{L_{d,n}}. 
	\end{equation}
	Results under the squared error loss are not yet known for the CQFL and PQFL estimators.  The main reason for this is that the  localization result described in Step 1 in Section \ref{sec:proof} is not available for CQFL and PQFL.
	
	\item The dependence on the total variation of the true signal $V$ in both Corollaries \ref{cor2} and \ref{cor3} are optimal in the sense that they match the right dependence known in the mean regression case; see~\cite{sadhanala2016total} for lower bounds on bounded variation signal classes. The results for the CQFL and PQFL estimators in~\cite{padilla2020risk} seem to have sub optimal dependence on $V.$ Thus, to the best of our knowledge, our results on QDCART are the first quantile regression based estimators which enjoy minimax rate optimality with respect to both the sample size $N$ and the total variation of the unknown signal $V.$
	

\item Our results for  QDCART depend on the quantity $U$  something that is not the case for  QTVD. However, in any realistic setting, we would have $U = O(1)$.

  \item Corollary \ref{cor3} gives a near parametric rate of convergence for piecewise constant signals. In the univariate case, the corresponding result is known for the CQFL and PQFL in Theorems 2 and  4 from \cite{padilla2020risk}. However, these results need the true signal to satsify a \textit{minimal spacing condition} which is not needed for QDCART. This is potentially a significant advantage of QDCART over QTVD, even in $d = 1$, as far as attaining adaptively optimal rates of convergence for piecewise constant signals is concerned.
  

 \item In the mean regression problem, it is known that when $d = 2$, the TVD estimator cannot attain near parametric rates of convergence for a rectangular piecewise constant signal; see Theorem $2.3$ in~\cite{chatterjee2019new}. Therefore, it is expected that the QTVD estimator would also not be the best tool for estimating rectangular piecewise constant signals. On the other hand, the QDCART estimator does attain the $\tilde{O}(\frac{k(\theta^*)}{N})$ rate in $2$ dimensions and seems to be the right tool for 
estimating rectangular piecewise constant signals as well.


\item The QDCART estimator is computable in $\tilde{O}(N)$ time in any dimensions. In contrast, it is unknown and unlikely that the QTVD estimator is computable in $\tilde{O}(N)$ time in dimension larger than $1.$

\end{enumerate}

\begin{remark}
	The last three points above show that the QDCART estimator is a computationally faster alternative to the QTVD estimator while also enjoying some statistical advantages. We perform numerical experiments to further compare finite sample performance of QDCART and QTVD estimators in Section~\ref{sec:experiments}.
\end{remark}

\subsection{Background and Related Literature}
\label{sec:other}

Our work in this paper falls under the scope of nonparametric quantile regression. We now briefly review some classical work on nonparametric quantile regression.  In the context of median regression some early works include \cite{utreras1981computing}, \cite{cox1983asymptotics}, and \cite{eubank1988spline}. \cite{koenker1994quantile} proposed  one dimensional quantile smoothing splines. These estimators were studied in \cite{he1994convergence} under the assumption that the quantile function is H\"{o}lder continuous.

Other related quantile nonparametric estimators include the
quantile random forest proposed by \cite{meinshausen2006quantile}. \cite{brown2008robust} developed a wavelet-based estimator for median regression. Recently, \cite{ye2021non} developed the  k-nearest neighbour quantile fused lasso approach and~\cite{padilla2020quantile} studied quantile regression with ReLU networks.

\subsection{Future Directions}
There are different research directions that we leave for future work. A natural extension is to consider piecewise polynomial structures in the estimator, similarly to 
\cite{chatterjee2019adaptive}. However, we are currently unaware of how to extend our theory to such a setting. The main bottleneck is that given a fixed sub rectangle of $L_{d,n}$ we do not know how to obtain an $\ell_{\infty} $upper bound when fitting quantile regression constrained to the class of polynomials of degree $r>0$. When $r=0$ the latter can be done as in  Corollary \ref{cor1}. This a crucial ingredient in our proof that we do not know how to handle when dealing with higher order piecewise polynomial signals.

Moreover, it would be worthwhile to mention here that all our theoretical results hold under a theoretical choice of the tuning parameters. In our experiments, following \cite{yu2001bayesian}, we choose  the tuning parameters by Bayesian information criterion   (BIC) for quantile regression. It would be interesting to provide theoretical guarantees for an estimator which chooses the tuning parameters in a data driven way, for example, by some form of cross validation.

\section{Computation of the QDCART Estimator}\label{sec:compu}

The goal of this section is to develop a computationally efficient algorithm for the QDCART estimator defined in(~\ref{eqn:new_version}). In doing so,  our construction will imply the conclusion of  Theorem  \ref{thm:compu}. This algorithm is an adaptation of the original algorithm given in~\cite{donoho1997cart} (also see Lemma $1$ in~\cite{chatterjee2019adaptive}) to the quantile setting.

We have to solve the discrete optimization problem in~\eqref{eqn:qdcart2}. Let us first see how we can solve the discrete optimization problem in~\eqref{eqn:problem_version2} where there are no constraints on the size of the rectangles.

In order to study the optimization problem~(\ref{eqn:problem_version2}), we first define a corresponding subproblem for any rectangle  $R.$ For any rectangle $R \subset L_{d,n}$ and a partition $\Pi \in \mathcal{P}_{rdp}(L_{d,n})$, we let  $\Pi(R) \,:=\, \{  A \cap R \,:\,  A \cap R  \neq \emptyset,\,\,\,A\in \Pi    \}$  be the  partition induced by $\Pi$ in  $R$. We then let $\mathcal{P}_{rdp}(R) \,:=\, \{  \Pi(R) \,:\, \Pi \in  \mathcal{P}_{rdp}(L_{d,n})\}$. In words, $\mathcal{P}_{rdp}(R)$ is the set of recursive dyadic partitions of the rectangle $R.$ 

Then we define the following subproblem and define its optimal value as
\[
\mathrm{OPT}(R)\,:=\, \underset{\Pi \in \mathcal{P}_{rdp}(R)  }{\min}\,\,\left\{ \sum_{i\in R} \rho_{\tau}(y_i -  (O_{\Pi,\tau}(y))_i  ) + \lambda\vert \Pi\vert   \right\}.
\]
Clearly,  $\mathrm{OPT}(L_{d,n})$ is the optimal value of the objective function associated with QDCART. The basic idea is to be able to solve \textit{smaller} subproblems as above and build these smaller solutions to solve the full optimization problem. The following \textit{dynamic programming} relation allows us to build up from bottom up.

\[
\mathrm{OPT}(R)\,:=\, \underset{R_1, R_2 \,\,\text{dyadic split of} R\, }{\min}\,\,\left\{ \mathrm{OPT}(R_1) + \mathrm{OPT}(R_2),  \sum_{i\in R} \rho_{\tau}(y_i -  q_{\tau}(y_R)  ) +\lambda  \right\},
\]
where by saying that ``$R_1 $ and $R_2$ dyadic  split of $R$" we mean that $R_1$ and $R_2$ were obtained after performing a dyadic split of $R.$ The above relation follows because of the separable nature of our optimization objective and the second term inside the minimum above corresponds to not splitting $R$ at all.

We now proceed visiting dyadic rectangles bottom-up according to the length of $R$. The length of $R$ is defined as the sum of the lengths of the sides of the rectangles. We will start from the minimum possible length $d$ all the way to $nd$. Our goal is to store $\mathrm{OPT}(R)$ and $\mathrm{SPLIT}(R)$ for each dyadic rectangle $R$, where $\mathrm{SPLIT}(R)$ indicates the optimal split for rectangle $R$. Note that the total number of possible splits is $d$ (one for each dimension) and thus $\mathrm{SPLIT}(R)$ can be represented by a single integer within the set $[d].$

For each dyadic rectangle $R$, let us denote 
\begin{equation}\label{eqn:quantities}
	\mathrm{SQL}(R):= \sum_{i\in R} \rho_{\tau}(y_i -  q_{\tau}(y_R)).
\end{equation}
where $\mathrm{SQL}$ stands for sum of quantile loss and $q_{\tau}(y_R)$ is an empirical $\tau$ quantile of the set of observations in $y_{R}.$ Assume that we have succesfully computed $\mathrm{SQL}(R)$ for each dyadic rectangle. Then, at each dyadic rectangle $R$, to compute $\mathrm{OPT}(R)$ we have to compute the sum $\mathrm{OPT}(R_1) + \mathrm{OPT}(R_2)$ for each possible non trivial dyadic split of $R$ into $R_1,R_2$ and then compute the minimum of $d + 1$ numbers. Once $\mathrm{OPT}(R)$ is computed, $\mathrm{SPLIT}(R)$ is also automatically calculated when we are computing the minimum of these $d + 1$ numbers.

Note that since we are visiting dyadic rectangles bottom up, we have already computed $\mathrm{OPT}(R')$ for all sub rectangles $R' \subset_{\neq} R.$ Therefore, the computation required for computing $\mathrm{OPT}(R)$ is $d + 1.$ The total number of dyadic rectangles is at most $2^{d}N.$ Therefore, the total computation required to compute $\mathrm{OPT}(R)$ for all dyadic rectangles $R$ is at most $(d + 1) 2^d N.$

Now we proceed to explain how to compute $\mathrm{SQL}(R)$ for each dyadic rectangle $R.$ We again do this by a bottom up scheme by visiting dyadic rectangles according to their length (small to large). Our aim here is to compute a sorted list of observations within each dyadic rectangle $R$. We do this bottom up. For each dyadic rectangle $R$ we consider a  particular dyadic split $R_1$ and $R_2$ which is obtained by dyadically splitting (in dictionary order) the first coordinate of $R.$ In the case the first coordinate is a singleton, we use the second coordinate to split and so on. Now on our bottom up visits, we iteratively compute sorted lists for these dyadic rectangles. For instance, for a given dyadic rectangle $R$, we take its corresponding dyadic split (in dictionary order) into $R_1,R_2.$ Now we have already created a sorted list for $R_1$ and $R_2.$ To create a sorted list for $R$ we just need to merge two sorted lists. We can do this by the standard merge sort algorithm. The computation required at this step is $O(|R_1|) + O(|R_2|) = O(|R|).$ Once we are able to construct the sorted list of observations within $R$, now $\mathrm{SQL}(R)$ can be readily computed in $O(|R|)$ time. 

Now consider a dyadic rectangle $R$ of a given size $2^{i_1} \times \ldots \times 2^{i_d}$. The total number of dyadic rectangles of this size $2^{i_1} \times \ldots \times 2^{i_d}$ is at most 
\[
\frac{n}{2^{i_1}} \times \ldots   \frac{n}{2^{i_d}}.
\]
Therefore, the total computational work needed to compute $\mathrm{SQL}(R)$ for all dyadic rectangles $R$ with this given size is simply $$O(2^{i_1} \times \ldots \times 2^{i_d} \times \frac{n}{2^{i_1}} \times \ldots   \frac{n}{2^{i_d}}) = O(n^d).$$

Now note that the total number of distinct sizes $2^{i_1} \times \ldots \times 2^{i_d}$ is at most $(\log n)^d.$ Therefore, the total computational work needed to compute $\mathrm{SQL}(R)$ for all dyadic rectangles of all sizes with our bottom up scheme is $O(N (\log n)^d).$

Finally, we see that the total computation required to compute $\mathrm{OPT}(R)$ and $\mathrm{SPLIT}(R)$ for all dyadic rectangles $R$ is $O(N (\log n)^d + d 2^d N)$. 
After $\mathrm{OPT}(R)$ and $\mathrm{SPLIT}(R)$ have been constructed, we can find the optimal partition by going top-down. This would be a lower order computation.

Based on the discussion above, it is not hard to see that to compute~\eqref{eqn:qdcart2} we just have to modify the above algorithm slightly. We do not need to compute $\mathrm{OPT}(R)$ and $\mathrm{SPLIT}(R)$ for rectangles $R$ with $|R| < \gamma.$ Thus, when we visit dyadic rectangles bottom up according to its size, we just visit the \textit{feasible} rectangles. Also, for a given rectangle $R$, to compute $\mathrm{OPT}(R)$ we now have to compute the sum $\mathrm{OPT}(R_1) + \mathrm{OPT}(R_2)$ for each possible non trivial dyadic split of $R$ into $R_1,R_2$  where $R_1,R_2$ are both \textit{feasible} and then compute the minimum of these numbers and $\sum_{i\in R} \rho_{\tau}(y_i -  q_{\tau}(y_R)).$


\section{Experiments}
\label{sec:experiments}

\subsection{Comparisons in 1d}

We now proceed to evaluate the performance of QDCART in the 1d setting. In our simulations we consider as benchmarks the  the penalized quantile fused lasso (PQFL) proposed in \cite{brantley2019baseline} and studied in \cite{padilla2020risk}, and the univariate mean regression DCART method from \cite{donoho1997cart}. For our evaluations in this subsection, for QDCART and DCART  we consider a grid of $25$ values of $\lambda$ given as  $\{2^{-2},2^{-1.75},\ldots,2^4\}$ and we set $\gamma=8$. As for PQFL, we take $\lambda$ such that $\log \lambda \in \{1+ \frac{j(7.5-1)}{99}\,:\, j \in \{0,1,\ldots,99\}  \}$. Then, for each method and choice of tuning parameter we calculate the average mean squared error, averaging over 100 data sets generated from different scenarios and with $n\in \{512,1024\}$. For each method and each scenario we then report the optimal MSE. The only remaining ingredient is to explain the different scenarios for generating data that we consider. These are described next.


For each scenario, we generate the  data $y \in \mathbb{R}^{L_{1,n}}$ as 
\[
y_i \,=\,  \theta^*_i +\epsilon_i,
\]
for $i\in L_{1,n}$ and some $\theta^*,\epsilon \in \mathbb{R}^{L_{1,n}}$. We now explain the constructions of $\theta^*$ and $\epsilon$ for the different scenarios.

\begin{figure}[t!]
	\begin{center}
		\includegraphics[width=2.32in,height=2.2in]{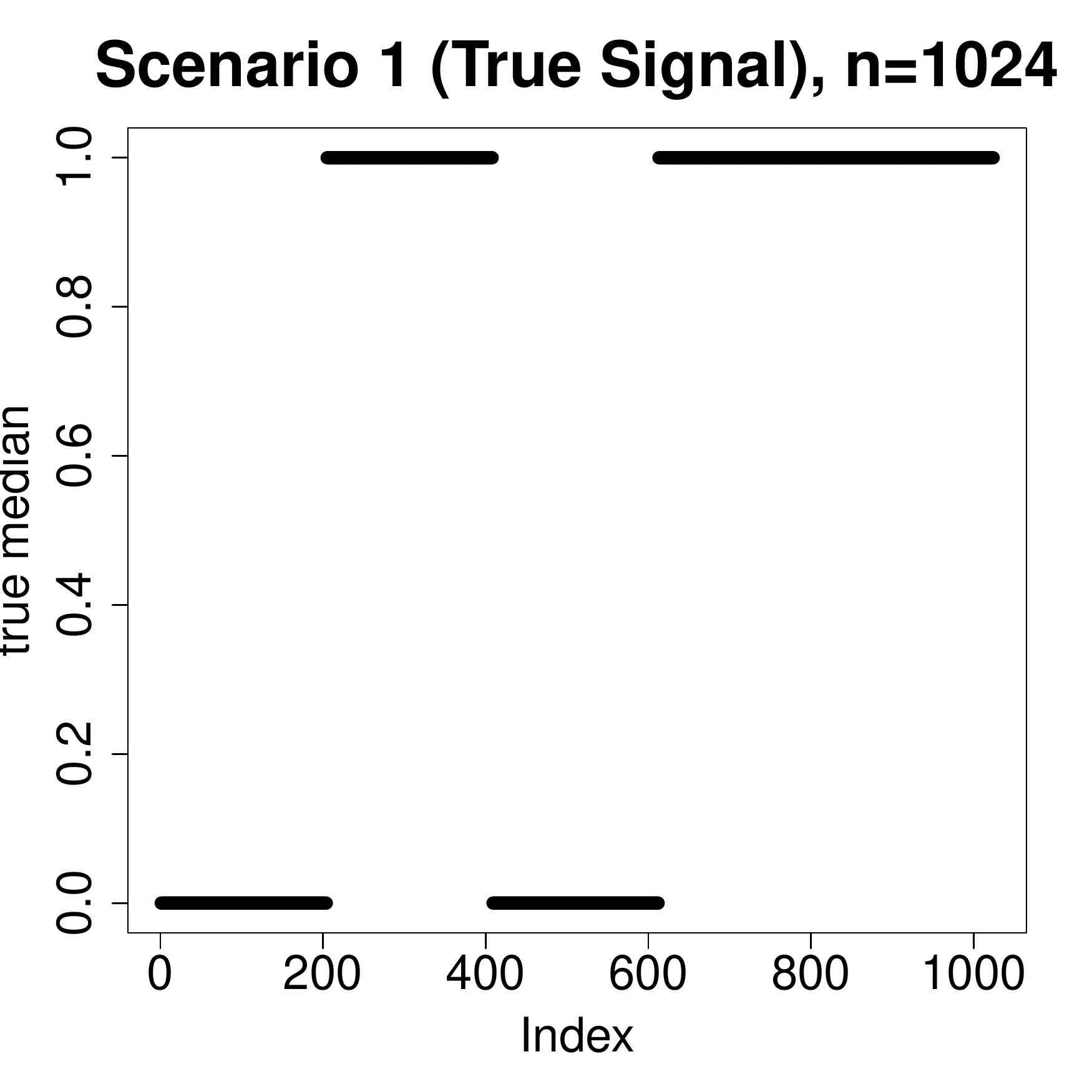}
		\includegraphics[width=2.32in,height=2.2in]{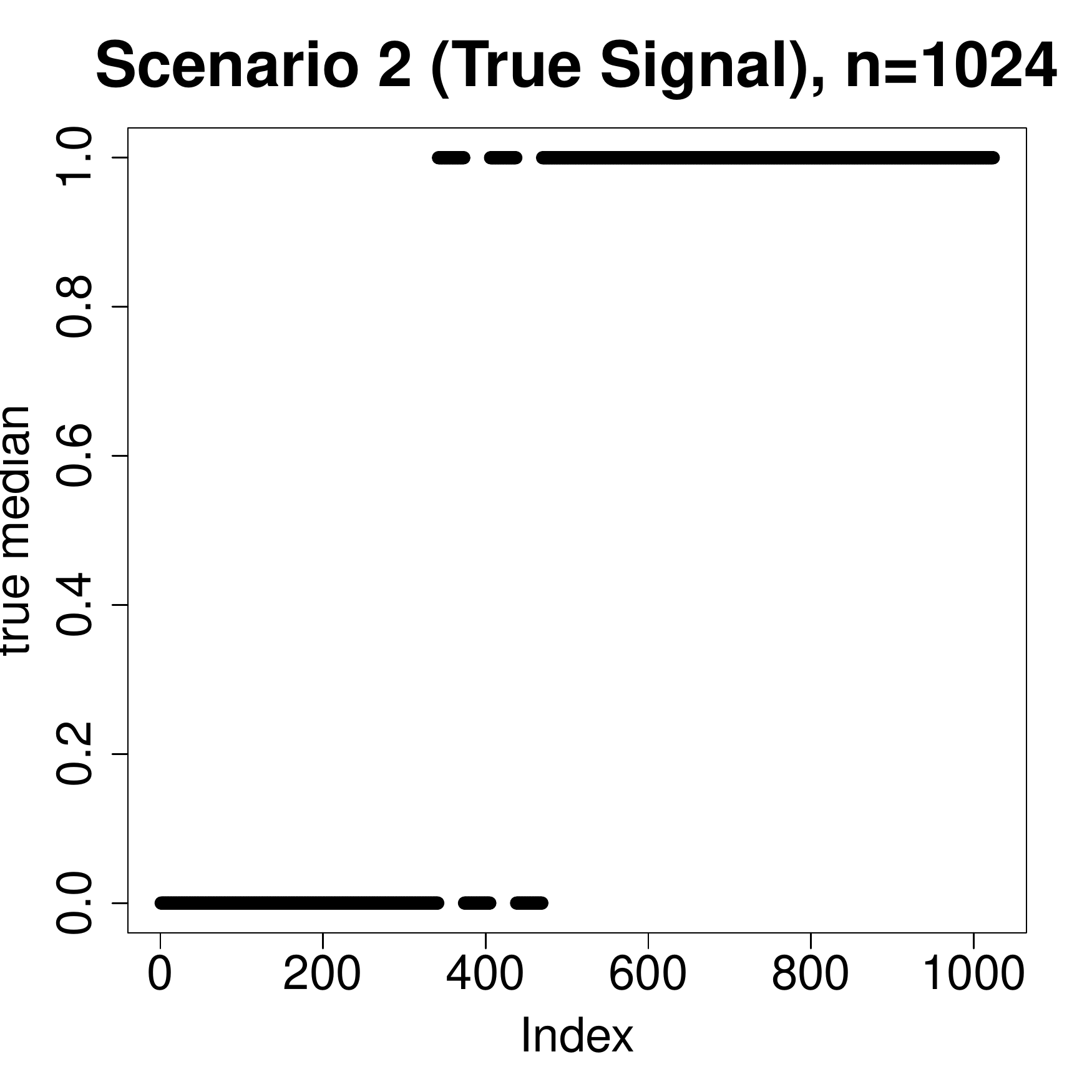}
		\includegraphics[width=2.32in,height=2.2in]{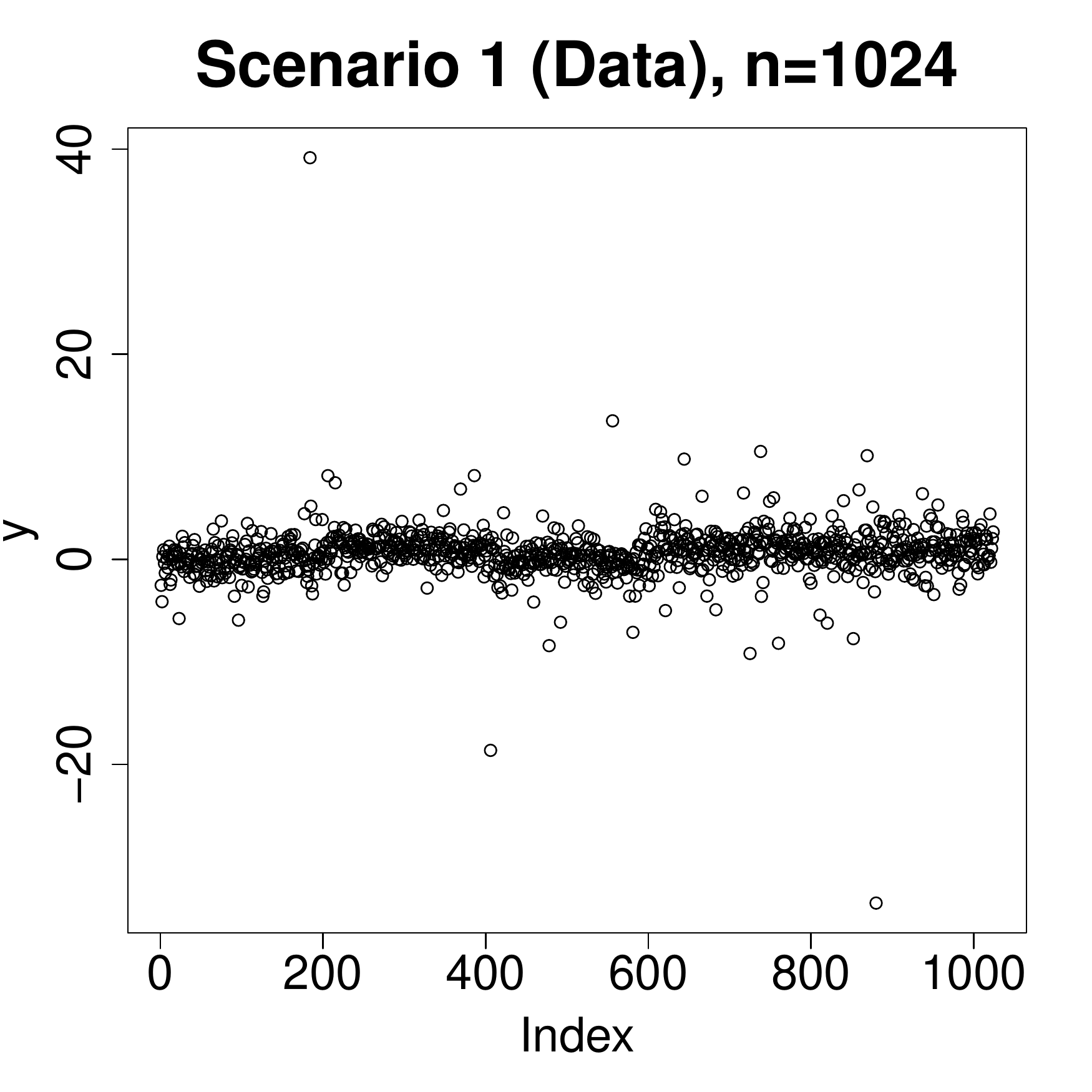} 
		\includegraphics[width=2.32in,height=2.2in]{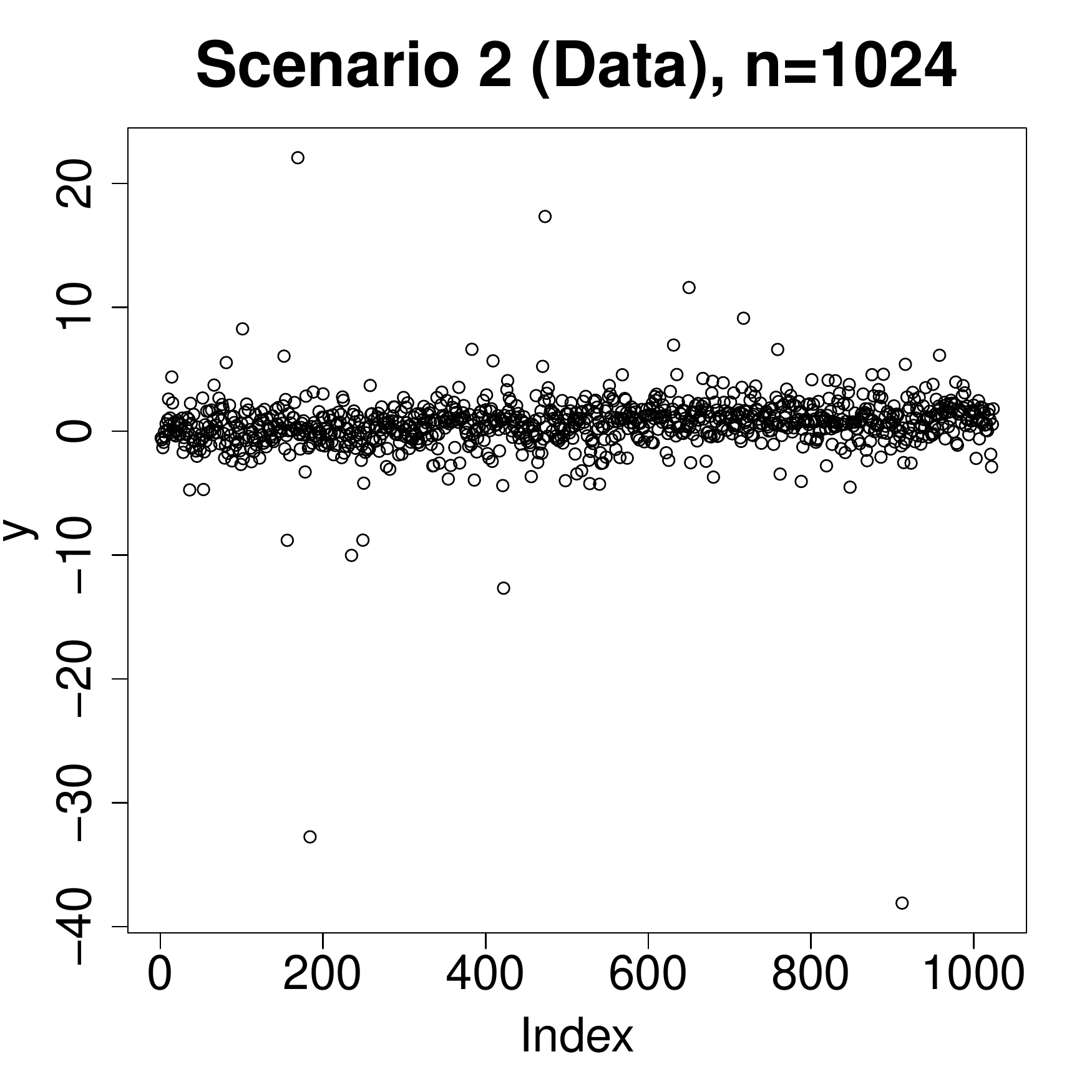} 
		\includegraphics[width=2.32in,height=2.2in]{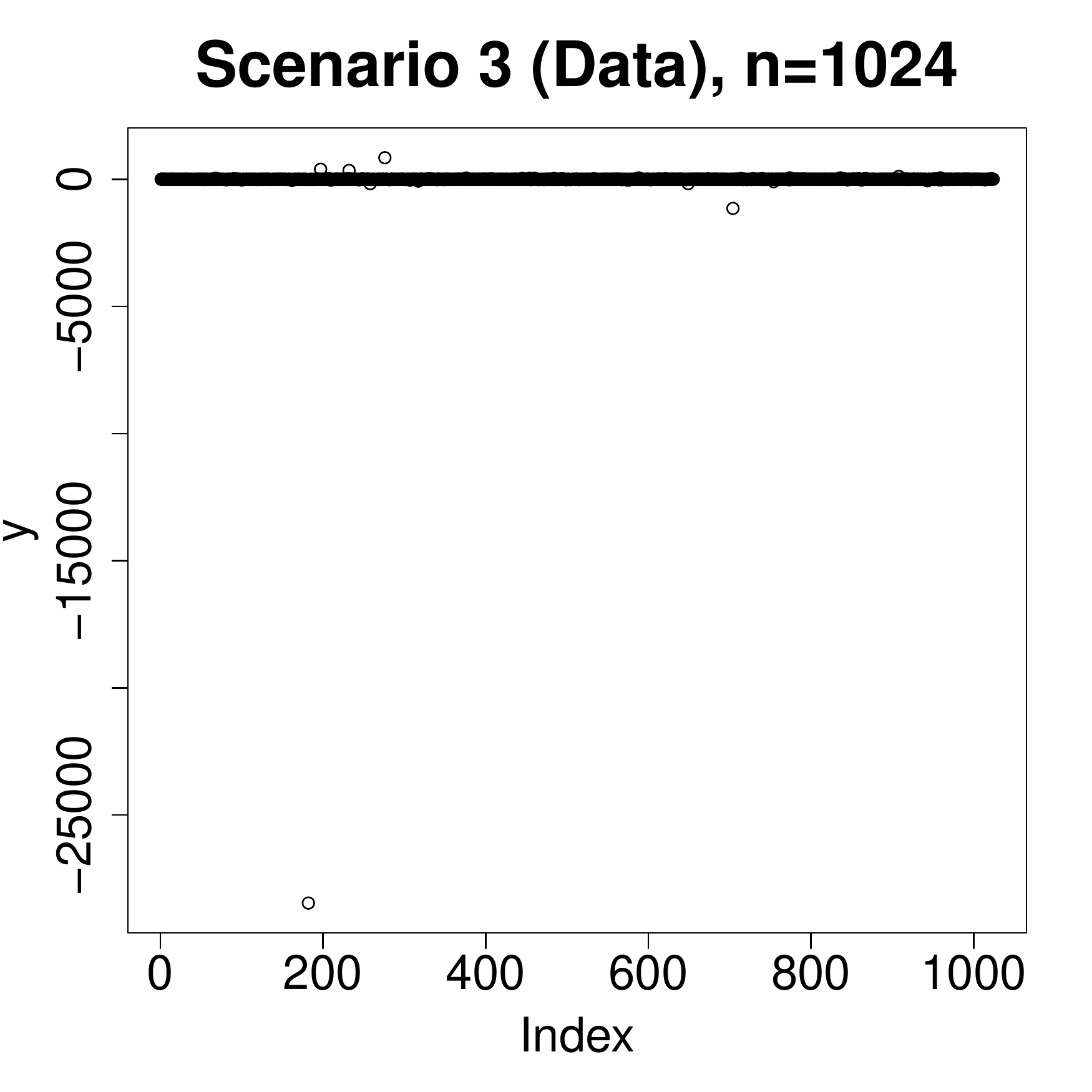} 
		\includegraphics[width=2.32in,height=2.2in]{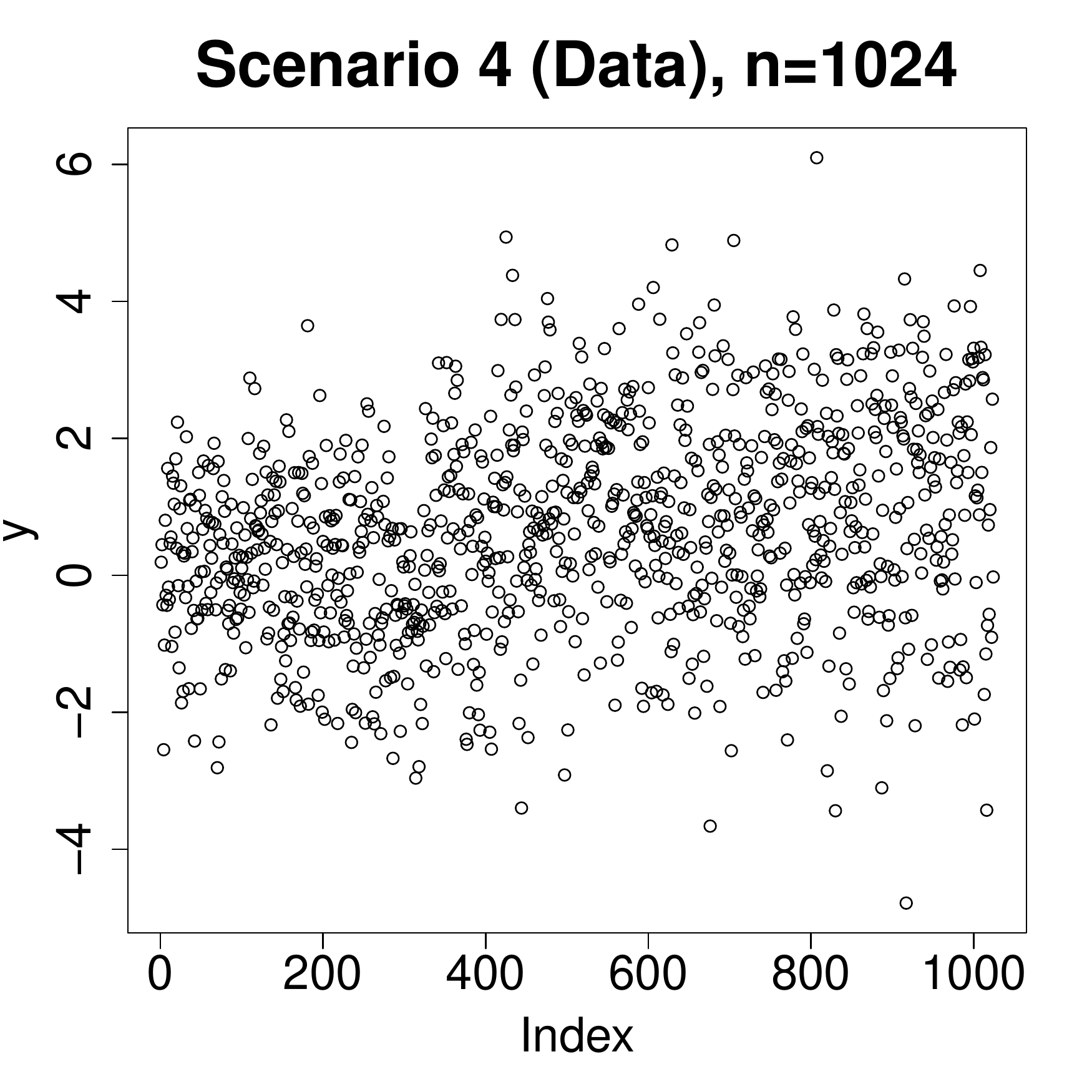}  
		\caption{ 		\label{fig1} The top two panels show the true signal (median) $\theta^*$ for Scenarios 1 and 2. The rest of panels show an example of data generated under each of the scenarios. }
	\end{center}
\end{figure}

\textbf{Scenario 1. (Large  Segments).} In this case $\theta^* \in \mathbb{R}^{L_{1,n}}$ satisfies
\[
\theta_i^*\,=\,\begin{cases}
	1 &\text{if}\,\,\, i \in [\floor{n/5}+1, 2\floor{n/5}]\cup [3\floor{n/5}+1, n]\\
	0 &\text{otherwise,}\,\,\,\\
\end{cases}
\]
and we generate $\epsilon_i \overset{\mathrm{ind}}{\sim} t(2.5)$ where $t(2.5)$ denotes the  t-distribution with $2.5$ degrees of freedom.\\

\begin{table}[t!]
	\centering
	\caption{\label{tab1}  Average mean squared error  $\frac{1}{n} \sum_{i=1}^n( \theta_i^*-\hat{\theta}_i )^2 $,  averaging over 100 Monte carlo simulations for the different methods considered. Captions are described in the text.  }
	\medskip
	\setlength{\tabcolsep}{8.1pt}
	\begin{small}
		\begin{tabular}{ |rrrrr|rrrrr|}
			\hline
			$n$ & Scenario & PQFL   & QDCART    &DCART   & $n$ & Scenario & PQFL   & QDCART&DCART \\
			\hline	
			512 & 1  &  0.124&\textbf{0.094}&  3.08&   512&   3 & \textbf{0.177} &  0.252&249054.2\\
			1024 & 1  &\textbf{0.047}&0.066&   2.52&  1024&   3 & \textbf{0.118} &0.249  &104763.3\\		
			\hline
			512 & 2  & 0.084 &\textbf{0.063} &3.17  &   512&   4 & 0.090& \textbf{0.070}  &0.114\\
			1024 & 2  &0.064  &\textbf{0.047}&  2.99&  1024&   4 & 0.077&  \textbf{0.054} &0.106\\		     
			\hline  
		\end{tabular}
	\end{small}
\end{table}

\textbf{Scenario 2.  (Large and Small Segments).} We generate $\epsilon$ as in Scenario 1 and set 
\[
\theta_i^*\,=\,\begin{cases}
	1 &\text{if}\,\,\, i \in [\floor{n/3}+1, \floor{n/3}+\floor{n/32}]\cup  [\floor{n/3}+2\floor{n/32}+1,\floor{n/3}+3\floor{n/32}]\cup\\ &\,\,\,\,\,\,\,[\floor{n/3}+4\floor{n/32}+1,n],\\
	0 &\text{otherwise.}\,\,\,\\
\end{cases}
\]\\

\textbf{Scenario 3. (Large Segments and Cauchy Errors).} We take  $\theta^* \in \mathbb{R}^{L_{1,n}}$ as in Scenario 1 and generate $\epsilon_i \overset{\mathrm{ind}}{\sim} \mathrm{Cauchy}(0,1)$. \\

\textbf{Scenario 4. (Large and Small Segments, and Heteroscedastic   Errors).} The vector $\theta^*$ is the same as in Scenario $2$ and $\epsilon$ satisfies for all $i$ that
\[
\epsilon_i\,=\,\nu_i \times \sqrt{\frac{2i}{n}+1},
\]
where $\nu_i \overset{\mathrm{ind}}{\sim}  \mathrm{N}(0,1)$.\\


A visualization of data generated under each scenario is given in Figure \ref{fig1}. The results of our comparisons are given in Table \ref{tab1}. Overall, we can see that the QDCART estimator is competitive against QFL. In Scenario $2$, where some of the constant pieces of the true signal are very small, we see that the QDCART estimator performs better. This is in agreement with Corollary~\ref{cor3} where no minimum length assumption is needed for the QDCART to attain near parametric rates. Observe that the mean regression DCART estimator performs poorly under heavy tailed scenarios.

\subsection{Comparisons in 2d}

We now proceed to evaluate the performance of QDCART for 2d grid graphs and use DCART and QTVD as benchmarks. For our experiments in this subsection the tuning parameter $\lambda$ for QDCART and  DCART  is taken such that $\log_{10}(\lambda) $ is in the set $\{-1+ \frac{6.5j}{59}\,:\,j \in \{0,1,\ldots,59\}  \}$. As for the tuning parameter $\lambda$ for QTVD we take it such that  $\log_{2}(\lambda) $ is in the set $\{-1+ \frac{7j}{19}\,:\,j \in \{0,1,\ldots,19\}  \}$. As before,  for each method and choice of tuning parameter we calculate the average mean squared error, averaging over 100 data sets generated from different scenarios. We set 
$d=2$ and  $n\in \{64,128\}$. For each method and each scenario we then report the optimal MSE.  Next we describe the different generative models, where in each case the data are generated as
\[
y_{i,j} = \theta_{i,j}^* +\epsilon_{i,j}
\]
where $\epsilon_{i,j}  \overset{\mathrm{ind}}{\sim} t(2.5)$ 
for $i,j \in \{1,\ldots,n\}$  and with  $\theta^*\in \mathbb{R}^{n \times n}$.\\

\textbf{Scenario 5.}  We set 
\[
\theta_{i,j}^*\,=\,\begin{cases}
	1 &\text{if}\,\,\, n/5< i < 3n/5 \,\text{ and }\, n/5< j < 3n/5,\\
	0 & \text{otherwise.}
\end{cases}
\]

\begin{figure}[bp!]
	\begin{center}
		\includegraphics[width=2.48in,height=2.08in]{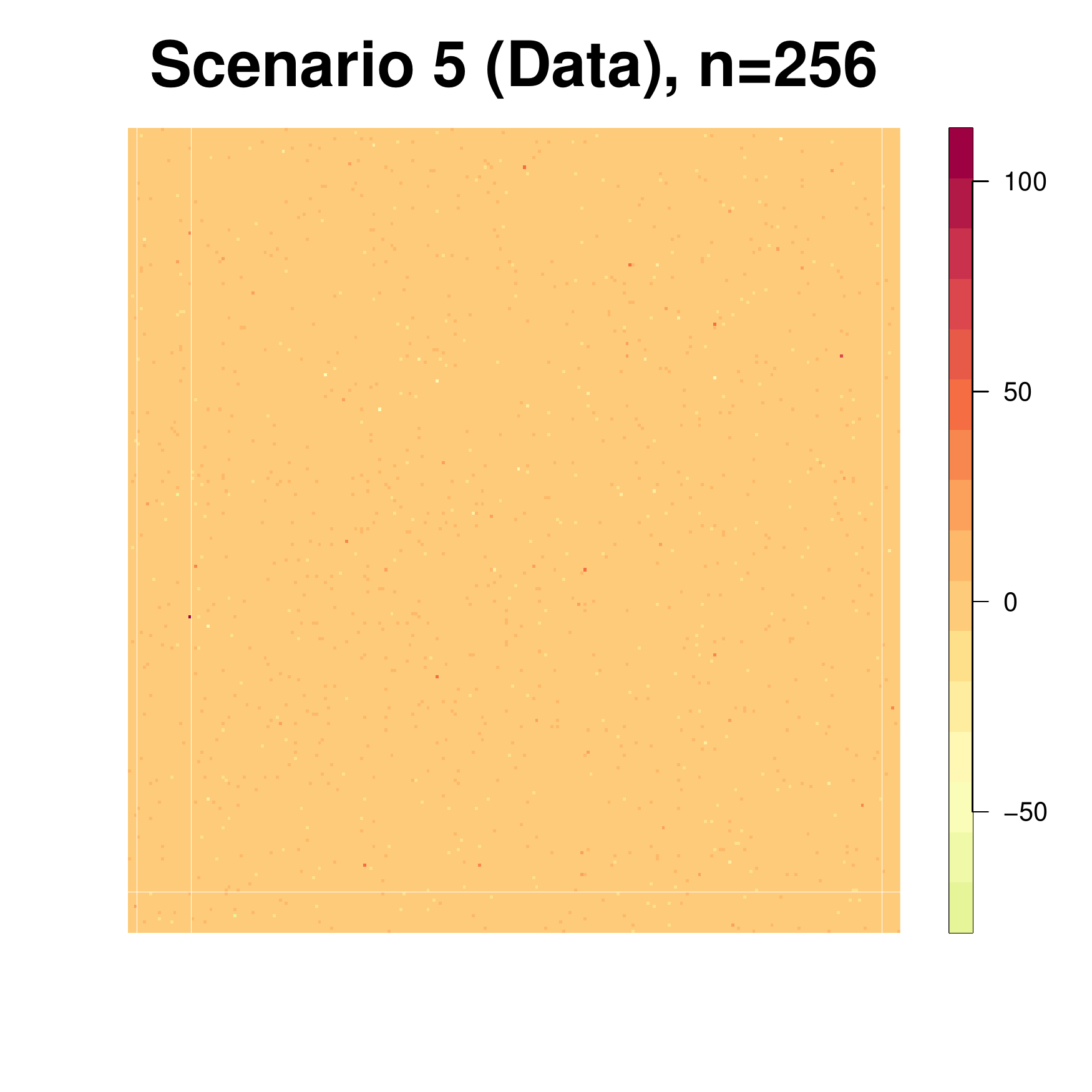}
		\includegraphics[width=2.48in,height=2.08in]{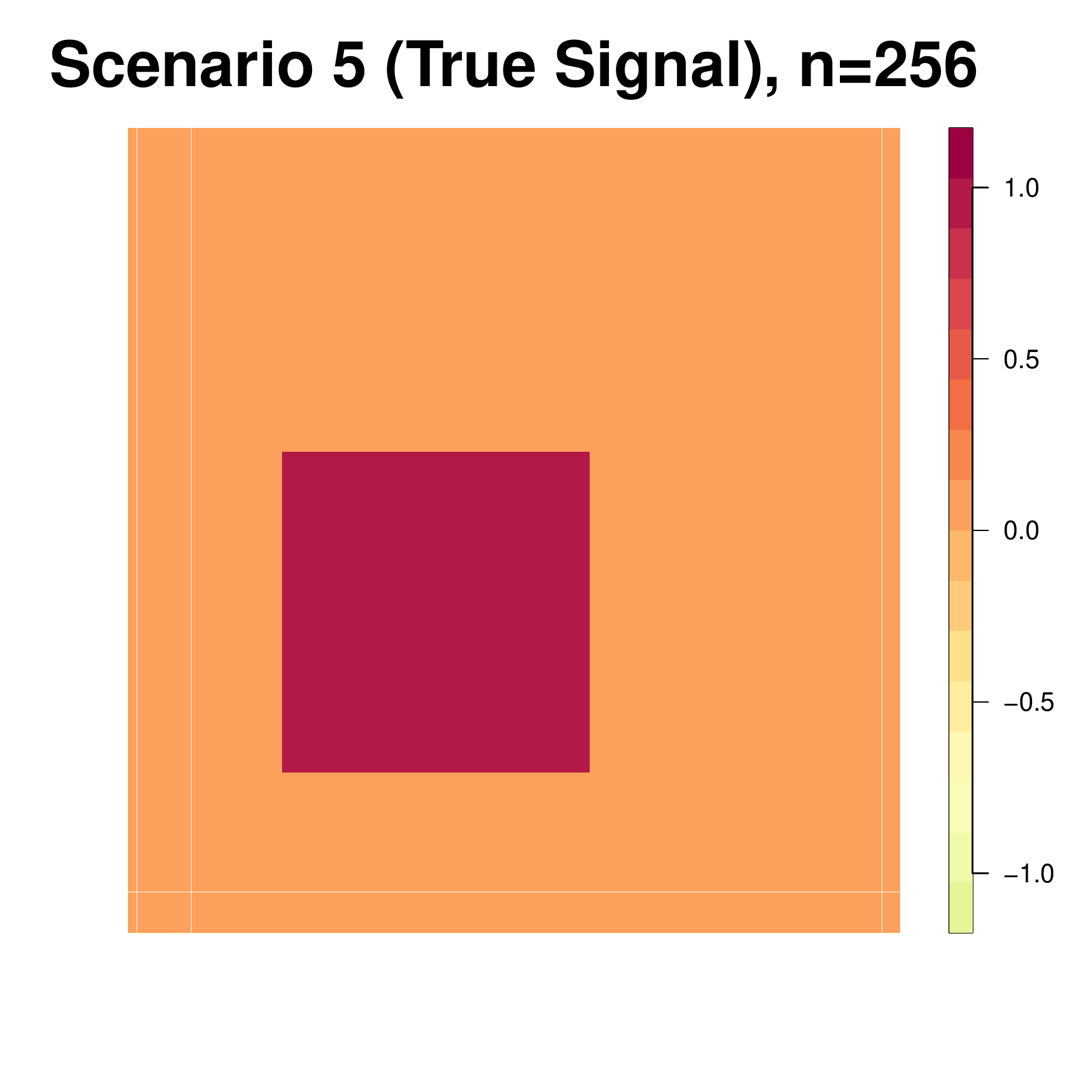}
		\includegraphics[width=2.58in,height=2.08in]{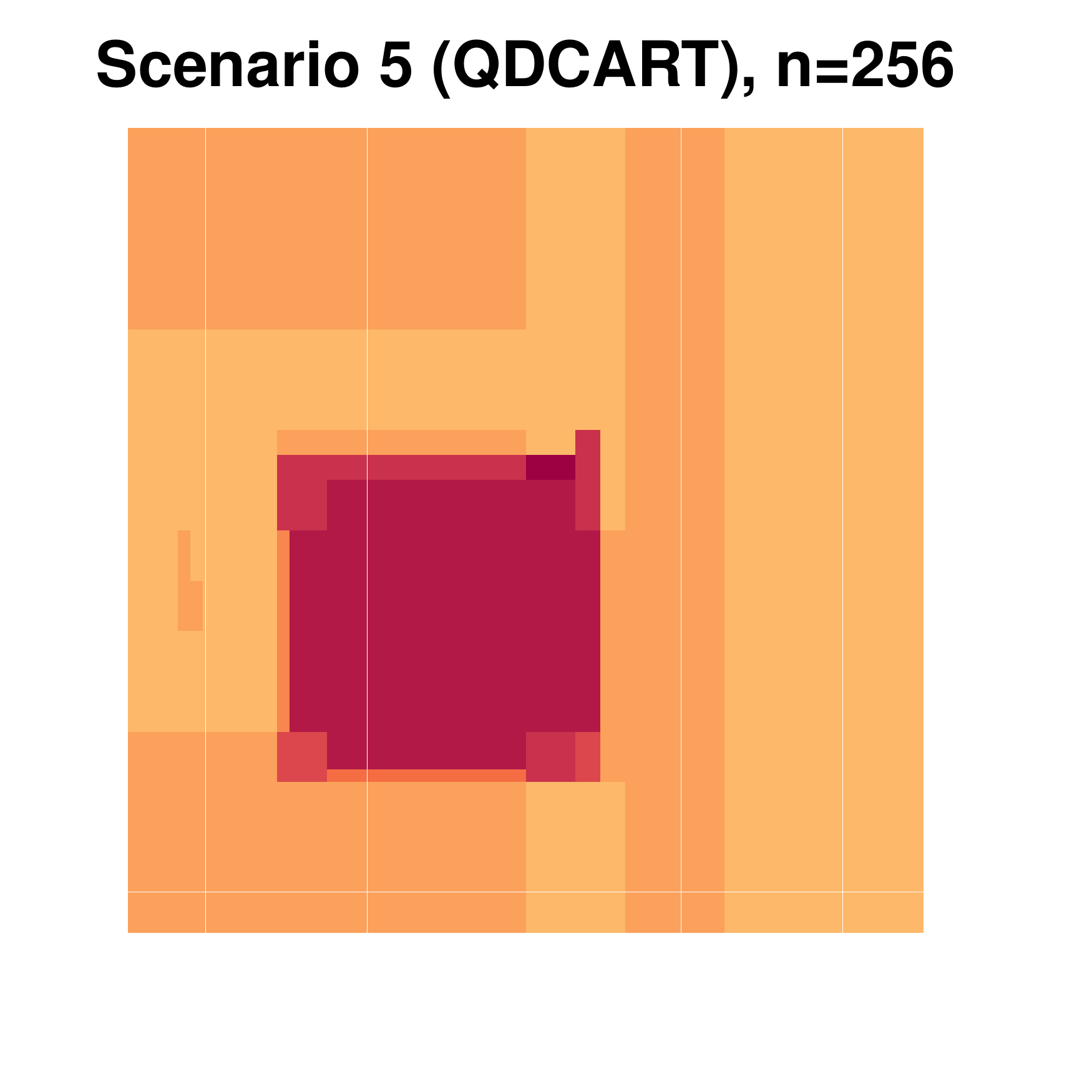}
		\includegraphics[width=2.48in,height=2.08in]{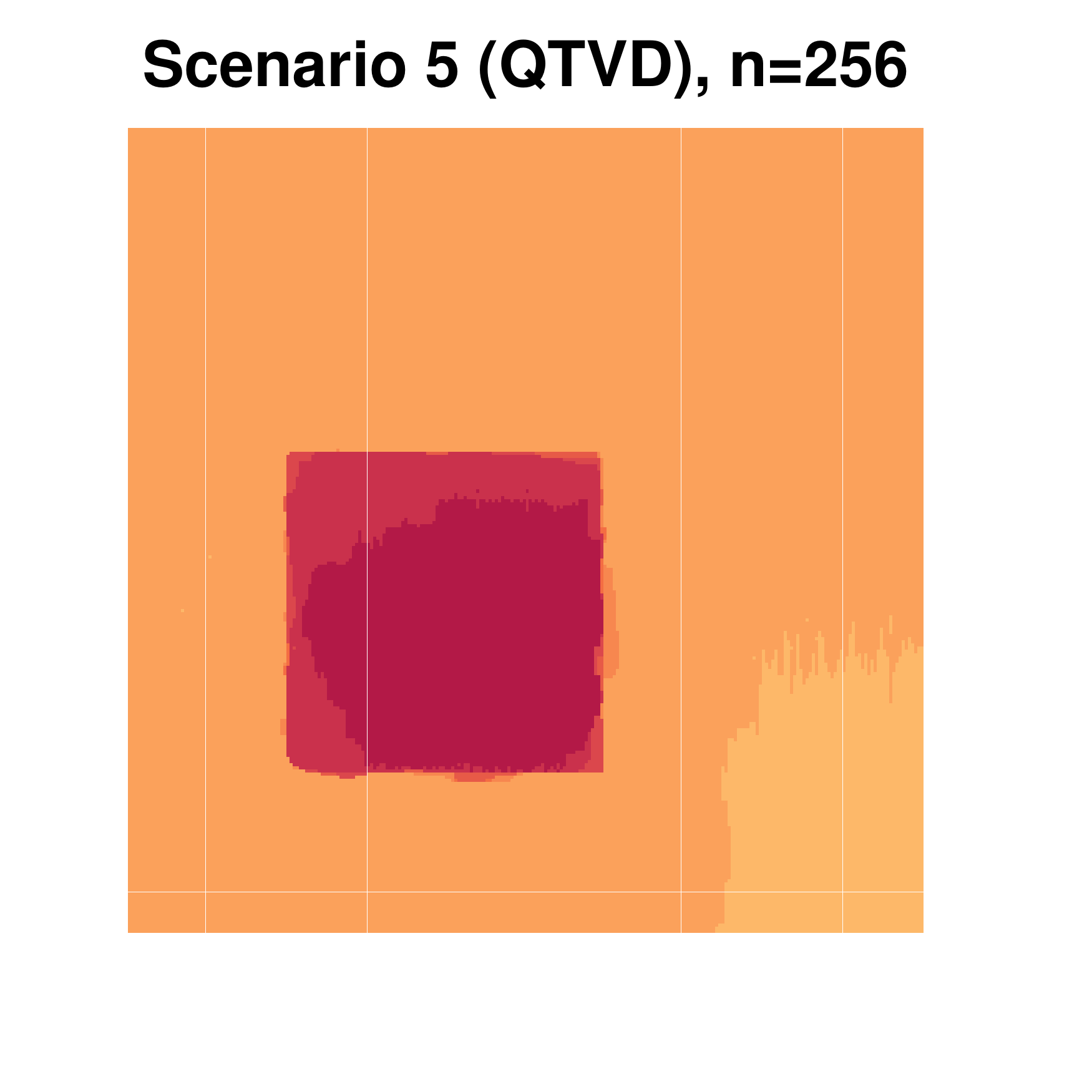}
		\includegraphics[width=2.48in,height=2.08in]{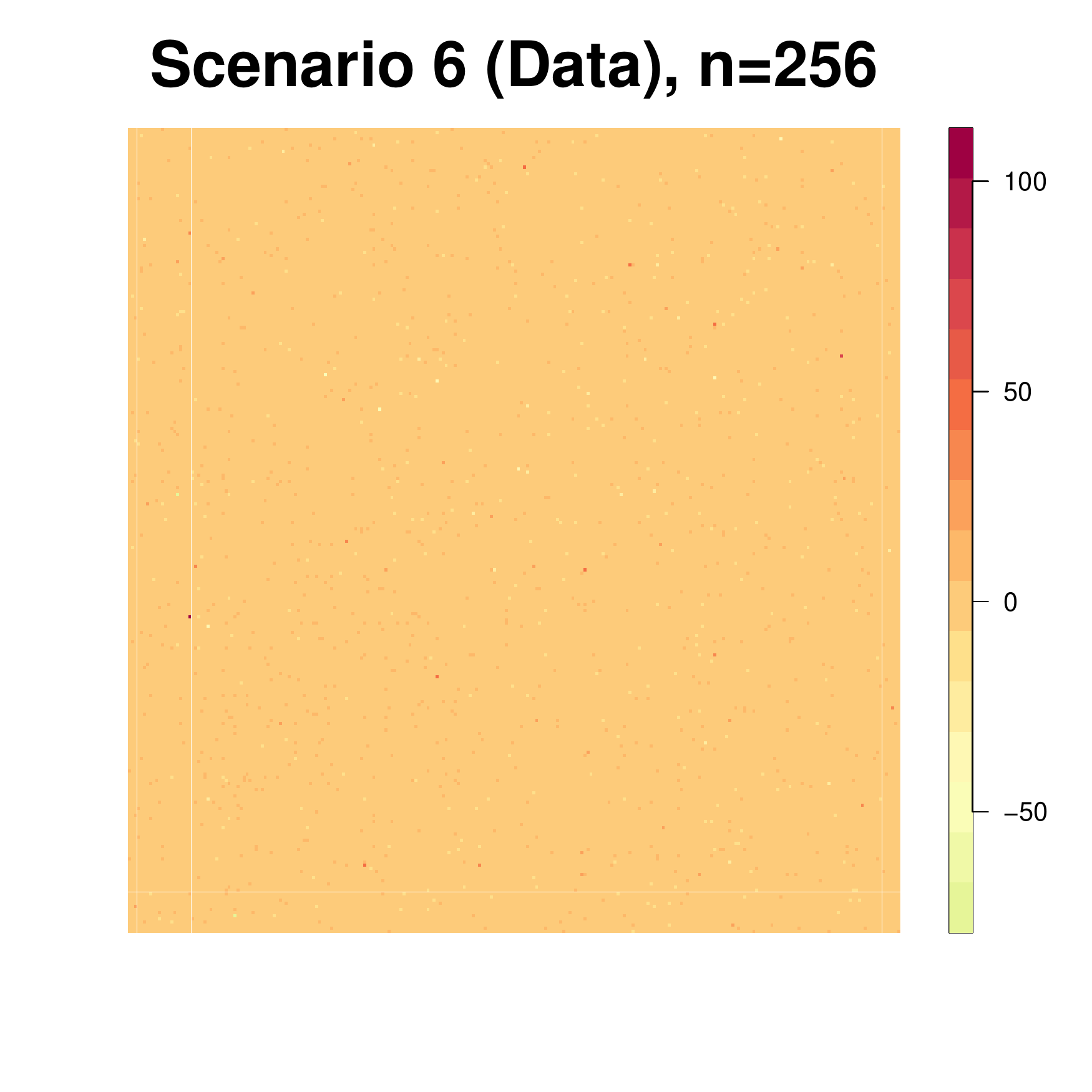}
		\includegraphics[width=2.48in,height=2.08in]{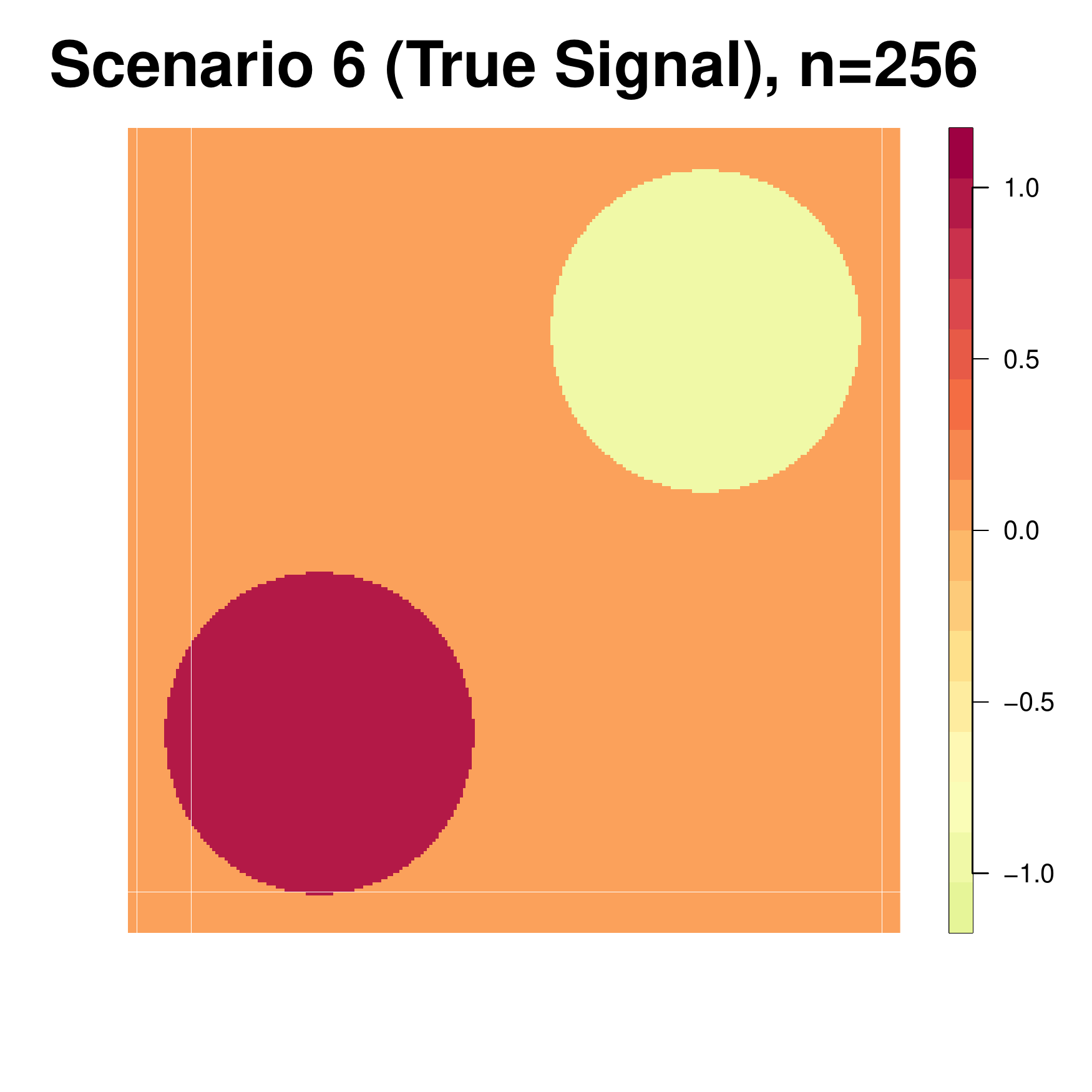}
		\includegraphics[width=2.48in,height=2.08in]{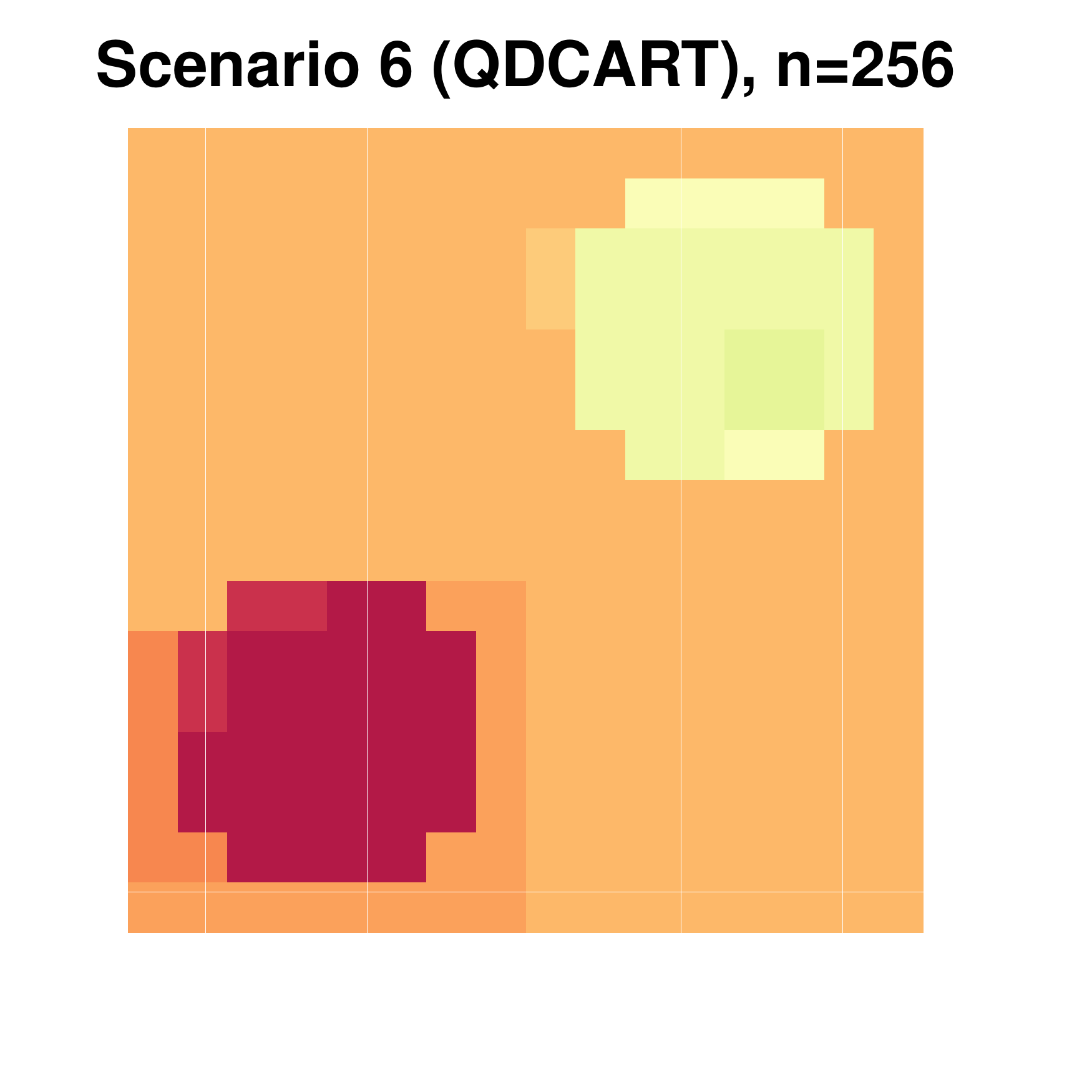}
		\includegraphics[width=2.48in,height=2.08in]{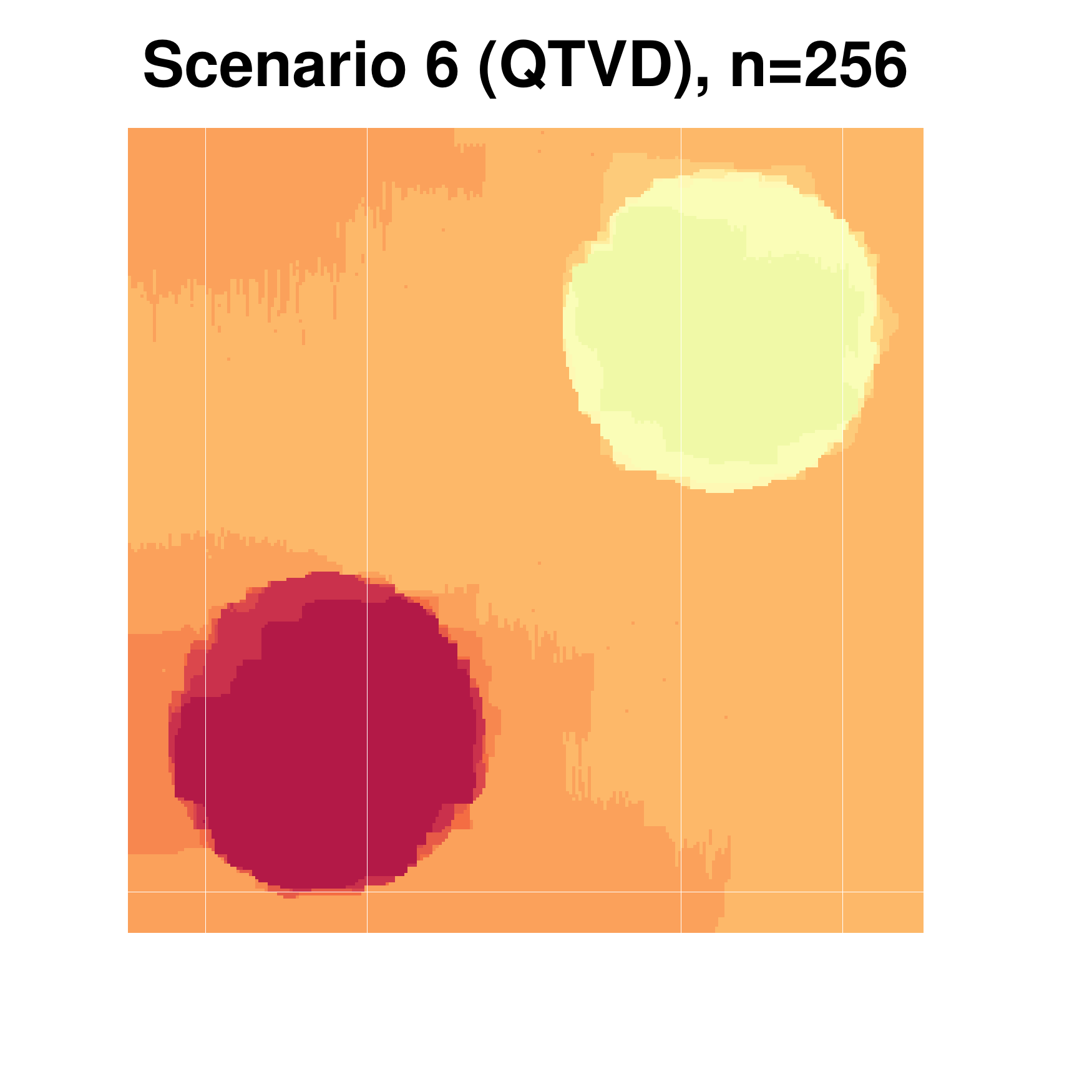}	
		\caption{ 		\label{fig2} The first row of panels shows from left to right an instance of data generated under Scenario 5 and the true median signal $\theta^*$. The  second row shows  the estimate provided by QDCART and the one by QTVD. The third and fourth  rows of panels show the corresponding plots for Scenario 6. }
	\end{center}
\end{figure}

\begin{figure}[h!]
	\begin{center}
		\includegraphics[width=2.48in,height=2.08in]{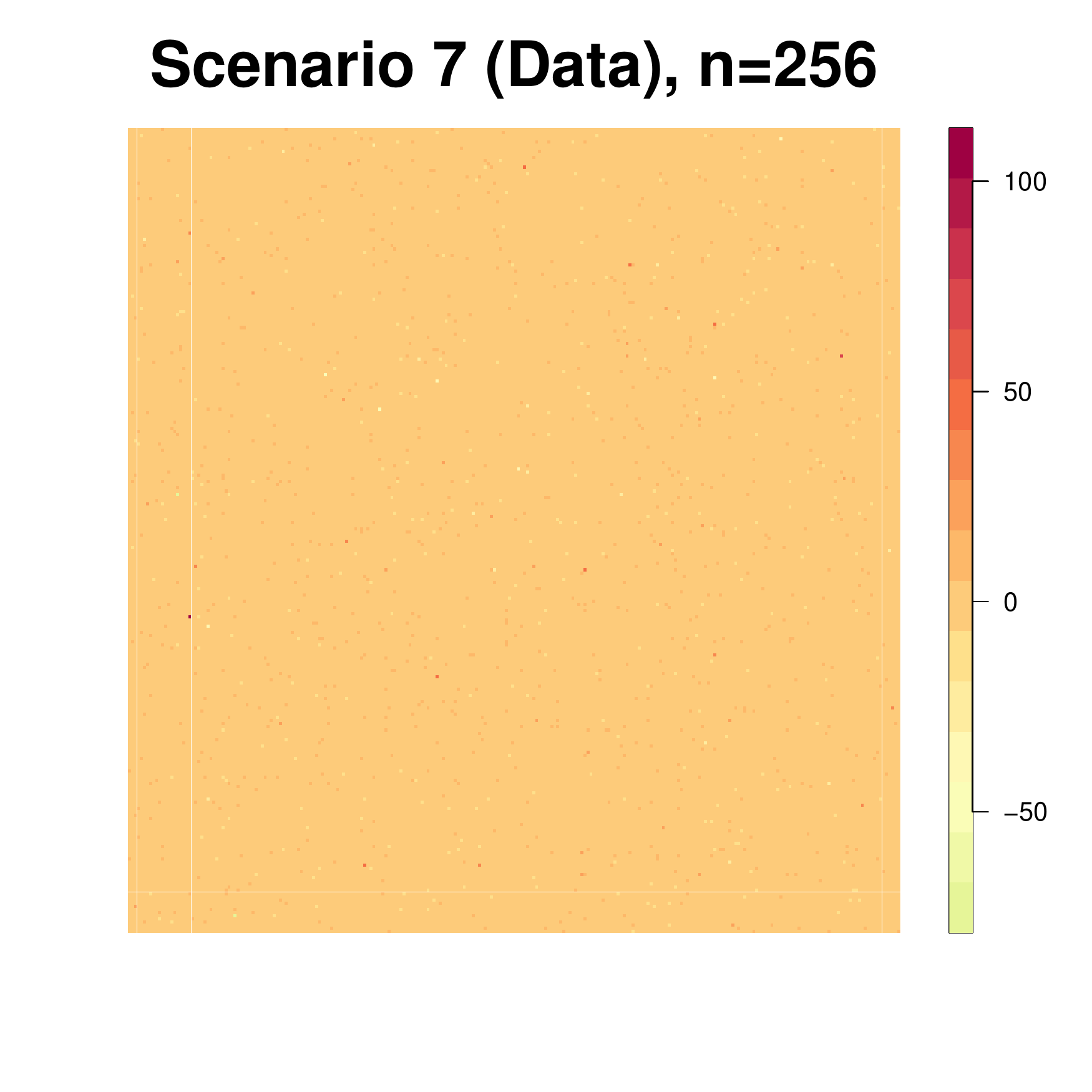}
		\includegraphics[width=2.48in,height=2.08in]{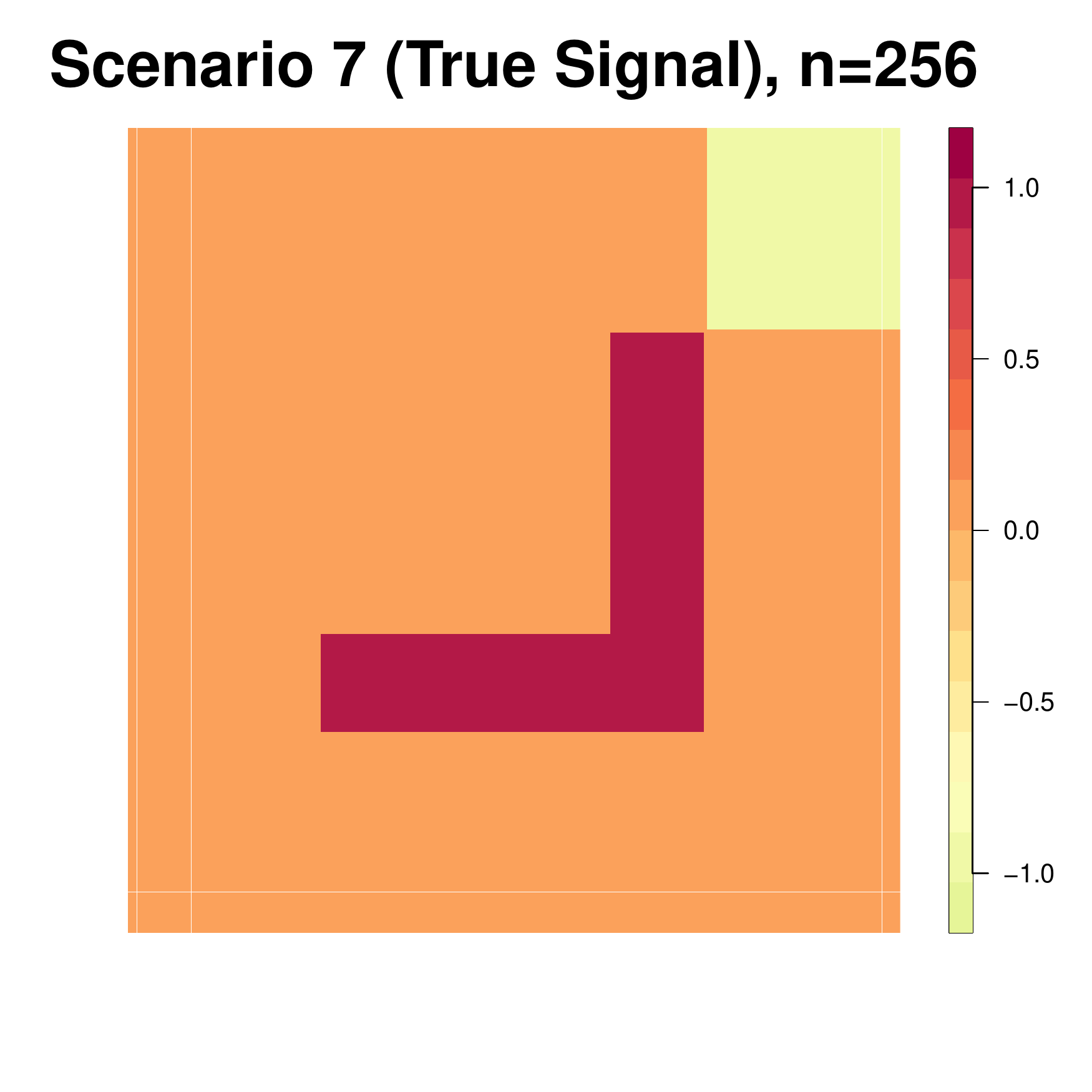}
		\includegraphics[width=2.48in,height=2.08in]{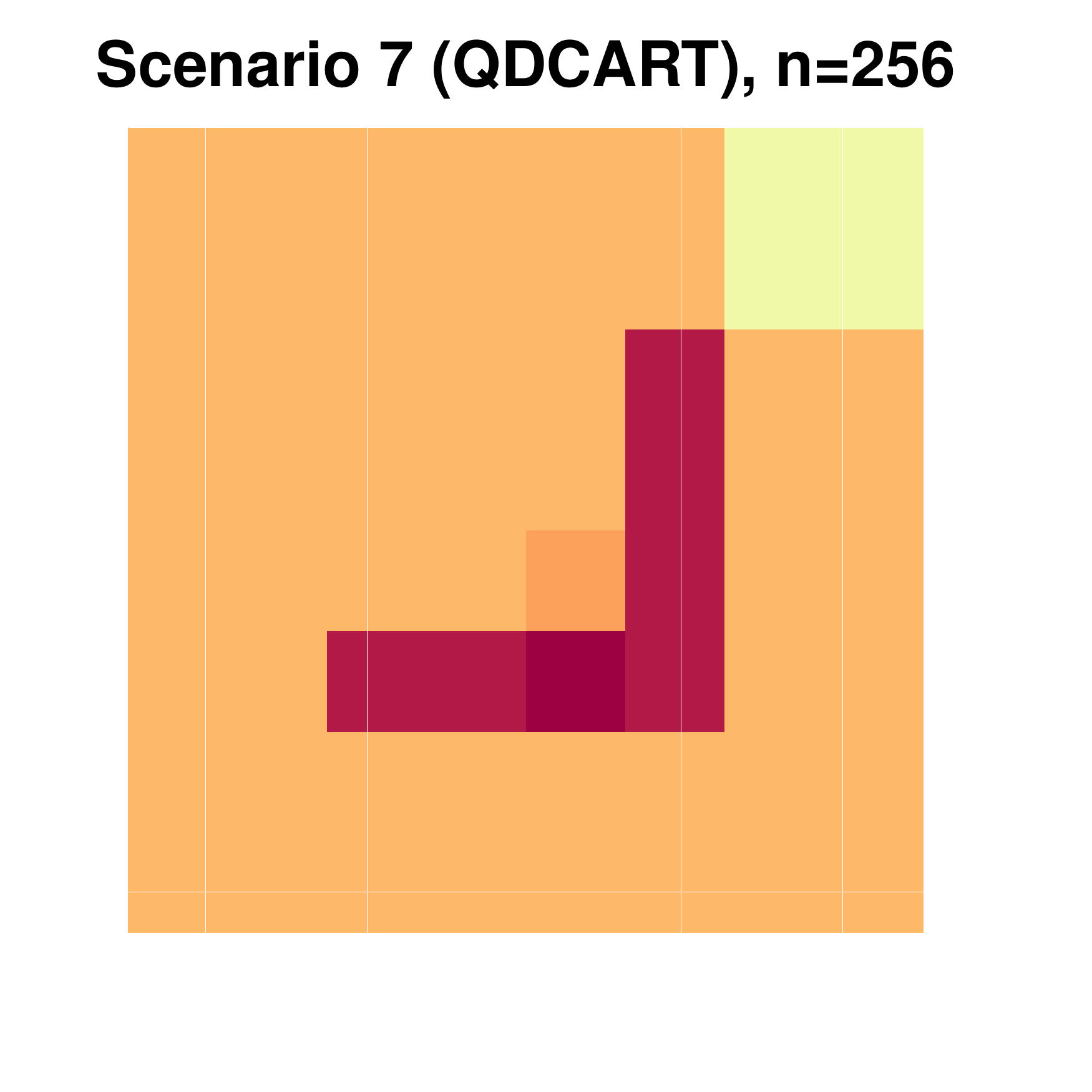}
		\includegraphics[width=2.48in,height=2.08in]{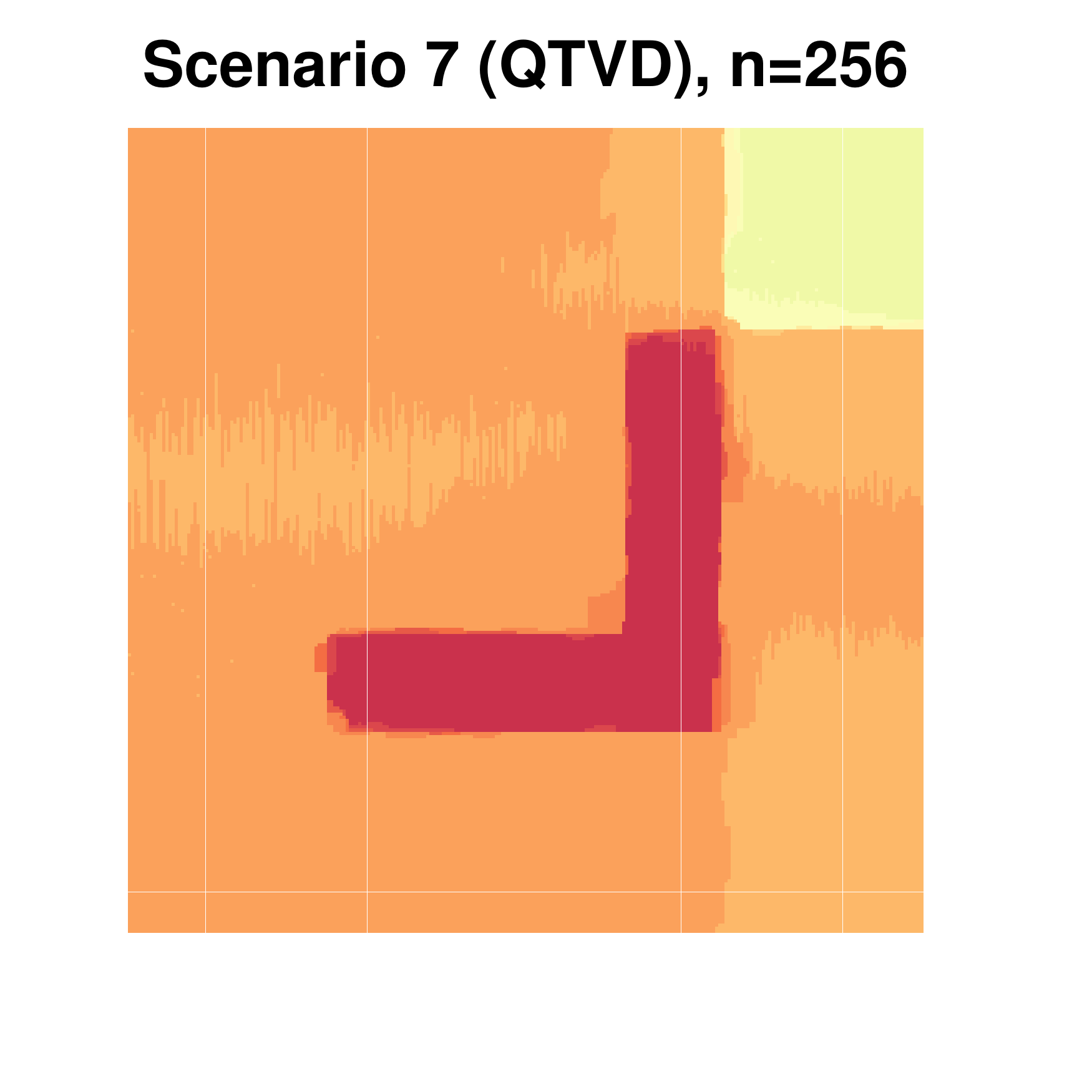}
		\caption{ 		\label{fig3} The first row of panels shows from left to right an instance of data generated under Scenario 7 and  the true median signal $\theta^*$. The second row shows  the estimate provided by QDCART and the one by QTVD. }
	\end{center}
\end{figure}

\textbf{Scenario 6.} Now we take $\theta^*$ satisfying 
\[
\theta_{i,j}^*\,=\,\begin{cases}
	1 &\text{if}\,\,\, (i-n/4)^2+(j-n/4)^2< (n/5)^2, \\
	-1 &\text{if}\,\,\, (i-3n/4)^2+(j-3n/4)^2< (n/5)^2,  \\
	0 & \text{otherwise.}
\end{cases}
\]

\begin{table}[t!]
	\centering
	\caption{\label{tab2}  Average mean squared error  $\frac{1}{n} \sum_{i=1}^n( \theta_i^*-\hat{\theta}_i )^2 $,  averaging over 100 Monte carlo simulations for the different methods considered and with data generated from Scenarios 5 and 6. Captions are described in the text.  }
	\medskip
	\setlength{\tabcolsep}{10pt}
	\begin{small}
		\begin{tabular}{ |r|rrrr|rrrr|}
			\hline
			$n$ & Scenario & QTVD   & QDCART   &DCART    & Scenario & QTVD     & QDCART &DCART\\
			\hline	
			64 & 5  & \textbf{ 0.030} &    0.048   &      0.139  & 6&  \textbf{0.057}& 0.096& 0.250\\
			128& 5  &  \textbf{ 0.013}&   0.021   &    0.134    &  6&\textbf{0.023}&0.035&0.252\\
			256& 5  &    \textbf{0.004}&  0.009         &     0.133     &  6  &\textbf{0.011}&0.026&0.251\\
			\hline
		\end{tabular}
	\end{small}
\end{table}

\begin{table}[t!]
	\centering
	\caption{\label{tab3}  Average mean squared error  $\frac{1}{n} \sum_{i=1}^n( \theta_i^*-\hat{\theta}_i )^2 $,  averaging over 100 Monte carlo simulations for the different methods considered and with data generated from Scenarios 7. Captions are described in the text.  }
	\medskip
	\setlength{\tabcolsep}{12pt}
	\begin{small}
		\begin{tabular}{ |r|rrrr|}
			\hline
			$n$ & Scenario &QTVD     & QDCART   &DCART   \\
			\hline	
			64 & 7  &0.060  &  \textbf{0.033}   &  0.157 \\
			128& 7  &0.022&    \textbf{0.009} &    0.163\\
			256& 7  & 0.009&    \textbf{0.004} &0.166 \\
			\hline
		\end{tabular}
	\end{small}
\end{table}

\begin{figure}[h!]
	\begin{center}
		\includegraphics[width=2.68in,height=2.08in]{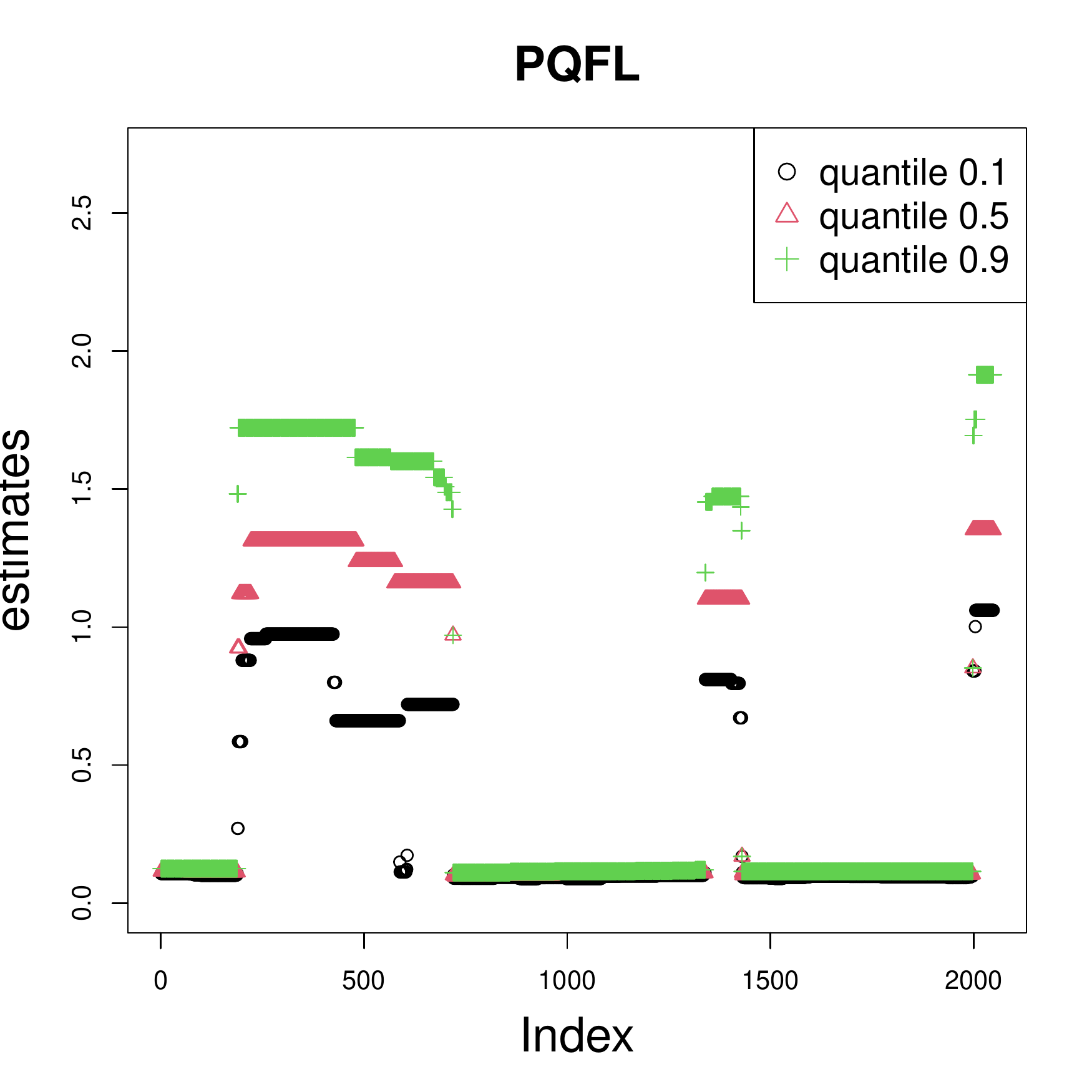}
		\includegraphics[width=2.68in,height=2.08in]{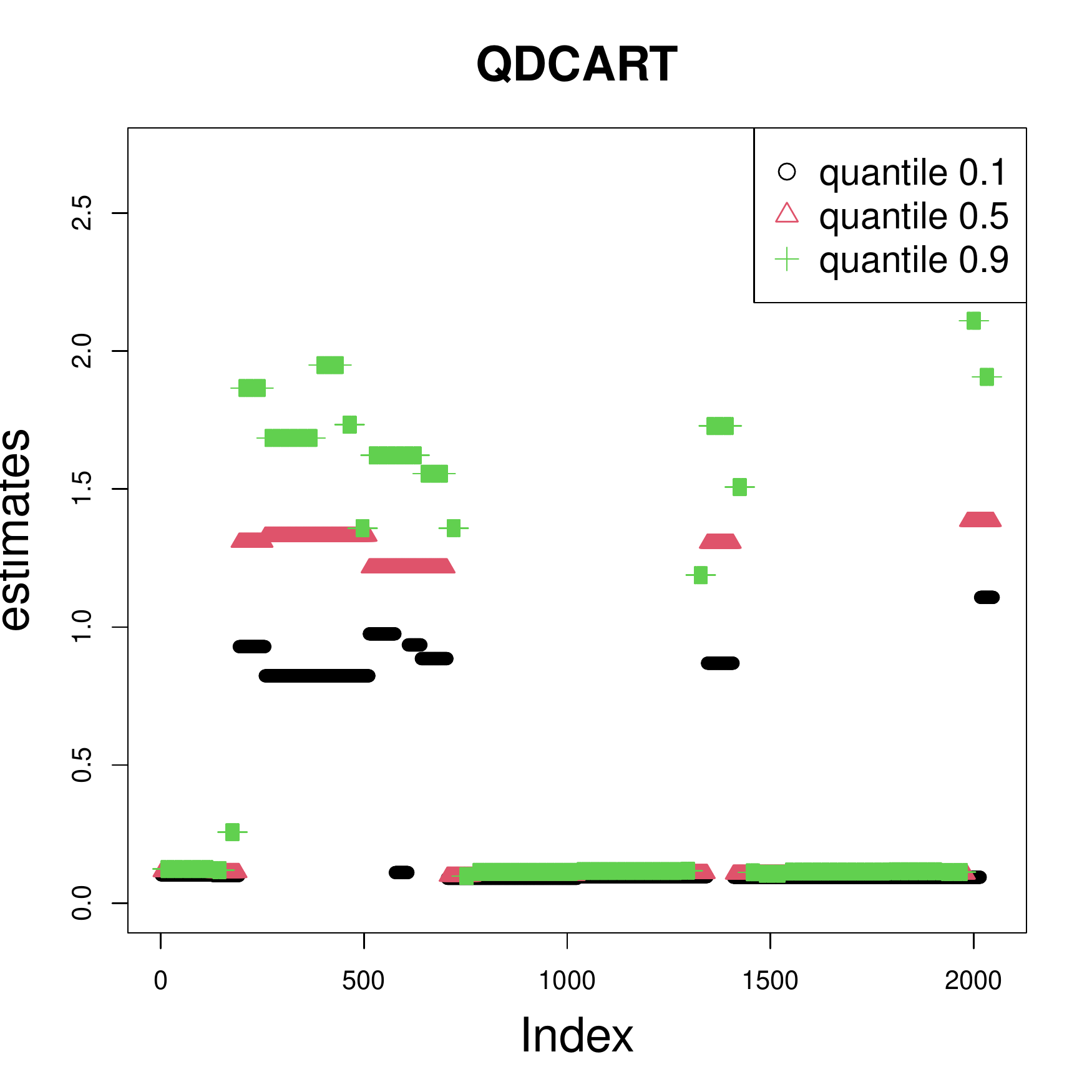}
		\includegraphics[width=2.58in,height=2.08in]{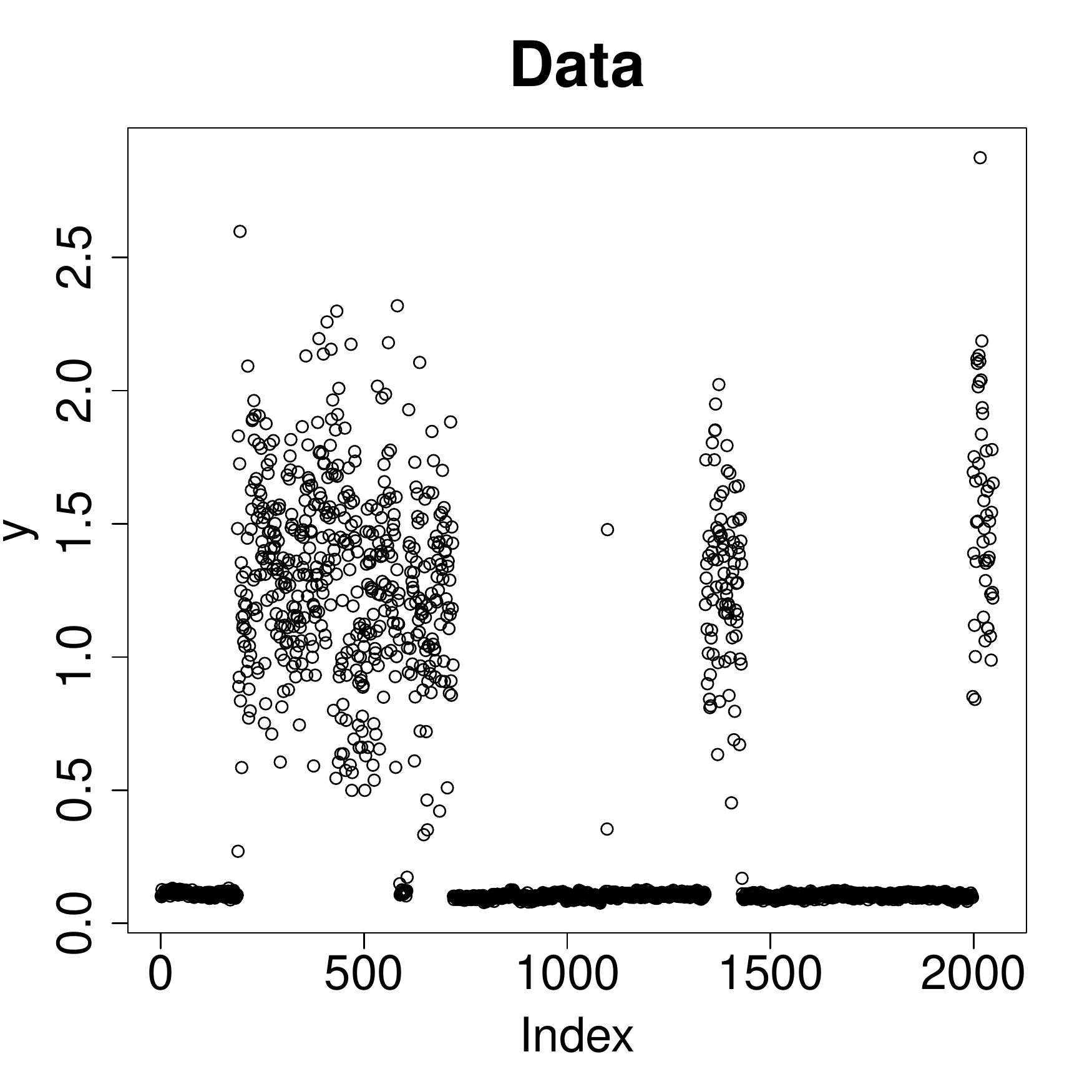}
		\caption{ 		\label{fig4} The top two panels show the estimated PQFL and QDCART for $\tau \in \{0.1,0.5,0.9\}$ when using the ion data. The bottom panel then shows the ion raw data. }
	\end{center}
\end{figure}

\textbf{Scenario 7.} For this model we let 
\[
\theta_{i,j}^*\,=\,\begin{cases}
	1 &\text{if}\,\,\,n/4 <i<3n/4 \,\text{ and } \, n/4< j< n/4+n/8,  \\
	1 & \text{if}\,\,\, n/2+n/8<i< 3n/4\,\text{ and } \, n/4+n/8 \leq j < 3n/4,\\
	-1 &\text{if}\,\,\, i > 6n/8 \,\text{ and } \,  j>6n/8,\\
	0 & \text{otherwise.}
\end{cases}
\]

%
%

A visualization of the data generated under Scenarios 5--7 is provided in Figures \ref{fig2}--\ref{fig3}. There we can see that QDCART can be competitive against QTVD.

A more comprehensive evaluation of performance comparisons is provided in Tables \ref{tab2}--\ref{tab3}. In Scenario $6$ where the level sets are non-rectangular, QTVD seems to do better than QDCART. In Scenario $7$,  however, QDCART performs slightly better. We believe this is because the level set can be well represented by a dyadic partition. We reiterate here that a potential practical advantage of QDCART over QTVD in an image denoising setting is the fact that the QDCART estimator can be computed in near linear time.


\subsection{Ion channels data}

We conclude our experiments section with a real data example involving ion channels data. Ion channels are a class of proteins expressed by all cells that create pathways for ions  to
pass through the  cell membrane. As explained in \cite{jula2021multiscale}, over time, the ion channel changes its gating behavior by
closing and reopening its pore which leads to a piecewise constant current flow structure. 
The original data that we use was produced by the Steinem Lab (Institute of Organic and Biomolecular Chemistry,
University of Gottingen),  and it was recently analyzed by \cite{jula2021multiscale}. It consists of a single ion channel of
the bacterial porin PorB, a bacterium that plays a role in the pathogenicity of Neisseria gonorrhoeae. The original data is 600000 time instances. For our comparisons we focus on a portion of length 32511 and construct a subsampled vector $y \in \mathbb{R}^{2048}$. Thus, our resulting signal is similar to that in \cite{cappello2021scalable}. A depiction of the data is shown in Figure \ref{fig4}.

Given the signal $y\in \mathbb{R}^{2048}$, we fit both PQFL and QDCART with values $\tau \in \{0.1,0.5,0.9\}$. For PQFL we consider values of the penalty parameter $\lambda$ such that $\log \lambda$ is in the set $\{ 1+\frac{j6.5}{99}\,:\,j \in \{0,\ldots,99\} \}$. As for QDCART we take $\lambda$ such that $\log_2 \lambda$ is in $\{-2+ \frac{ 7j}{25}\,:\,j =0,\ldots,25 \}$. Then for both PQFL and QDCART  we choose the tuning parameter that minimizes the BIC for quantile
regression  criteria from \cite{yu2001bayesian} given as
\[
\mathrm{BIC}\,:=\,\frac{2}{\sigma}\sum_{i=1}^n \rho_{\tau}(y_i -\hat{\theta}_i) \,+\, v \log n,
\]
where as in \cite{brantley2019baseline} and \cite{ye2021non} we take 
$\sigma = \frac{1- \vert 1- 2\tau\vert  }{2}$,
and where $v$ is the estimated degrees of freedom. Motivated by \cite{tibshirani2012degrees}, we let
\[
v \,=\, \left\vert\left\{j\,:\, \vert\hat{\theta}_j -\hat{\theta}_{j+1}\vert > 10^{-3}\right\}\right\vert. 
\]

With the above choice of tuning parameter for both PQFL and QDCART, we compute the estimates which are displayed in Figure \ref{fig4}. There, we can see that  the estimators are roughly similar, validating our theoretical findings that QDCRT and PQFL have similar statistical properties.

Finally, in order to provide a clearer quantitative comparison,  we proceed as follows. We randomly choose 50\% of the entries of the signal described above and we use this as training. Then we use the remaining  50\% of the data as testing, and for each coordinate in the test set we make a  prediction based on the closest coordinate in the training set. With this in hand, for each competing method,  we compute $\text{prop}_{0.5}$, the proportion of the test samples that are below its predicted median. We also compute $\text{cov}_{80\%}$, the proportion of  samples in the test set that  are between their predicted $0.1$ and $0.9$ quantiles. The quantities  $\text{prop}_{0.5}$ and $\text{cov}_{80\%}$ are then averaged over 100 repetitions and reported for PQFL and QDCART.
For QDCART  we obtain the values $\text{prop}_{0.5} = 0.502$ and $  \text{cov}_{80\%} = 0.781$, whereas for PQFL  we obtain   $\text{prop}_{0.5} =  0.502$ and $  \text{cov}_{80\%} = 0.772$. These results suggest that QDCART provides ever so slightly better prediction intervals than PQFL.

\section{Proofs}\label{sec:proofs}


\subsection{General estimator}

Let $\mathcal{S}$ be a collection of linear subspaces of $\mathbb{R}^N$. For any subspace $S \in \mathcal{S}$ we denote its dimension with $\mathrm{Dim}(S)$ and define a penalty function $k_{ \mathcal{S} }: \mathbb{R}^{L_{d,n}} \rightarrow \mathbb{Z}_{+}$ induced by $\mathcal{S}$ as
\begin{equation}
\label{eqn:general_penalty2}
k_{ \mathcal{S} }(\theta)\,:=\,\min\{\mathrm{Dim}(S) \,:\,   \theta \in S, \,\,S\in \mathcal{S}   \},
\end{equation}
with the convention that the minimum of the empty set is $\infty$.  We are interested in subspaces of arrays which are piecewise constant on a rectangular partition of $L_{d,n}.$ 
We denote by $\Pi_{S}$ the rectangular partition of $L_{d,n}$ so that $S$ is the subspace of arrays which are constant on every rectangle of $\Pi_{S}.$ We will denote a generic rectangle of $L_{d,n}$ by $R.$ When we say $R \in \Pi_{S}$ we are referring to a rectangle $R$ of the partition $\Pi_{S}.$

The collection of partitions $\mathcal{P}_{rdp}$ or $\mathcal{P}_{ \text{hier} }$ will give rise to corresponding collections of linear subspaces $\mathcal{S}$ and the associated complexity measures $k_{rdp}(\theta)$ and $k_{tree}(\theta)$ respectively. For a collection of subspaces $\mathcal{S}$ corresponding to a collection of rectangular partitions, we now define an additional function $s_{\mathcal{S}}: \R^{L_{d,n}} \rightarrow \Z^{+}$ as follows:
\begin{definition}
	\begin{equation*}
	s_{ \mathcal{S} }(\theta)\,:=\, \underset{ R \,:\, R \in \Pi_S,  \,\,\theta \in S, S\in \mathcal{S},   \,\,\text{and}\,\, k_{\mathcal{S}}(\theta) = \vert \Pi_S\vert  \ }{\min}\,\vert  R \vert .
	\end{equation*}
\end{definition}
In words, $	s_{ \mathcal{S} }(\theta)$ is the size of the minimal rectangle of the minimal rectangular partition $\Pi_{S}$ within $S \in \mathcal{S}$ such that $\theta$ is constant on every rectangle of $\Pi_{S}.$


For a given collection of subspaces $\mathcal{S}$ corresponding to a collection of rectangular partitions of $L_{d,n}$ and a constant $c_1> 0$, we will consider the general $0 < \tau < 1$ quantile estimator
\begin{equation}
\label{eqn:general_estimator2}
\hat{\theta}^{(\tau)}_{\mathcal{S},c_1} \,= \,   \underset{\theta    \in  \Theta    }{\arg \min}\,\,\left\{   \sum_{i \in  L_{d,n}}\rho_{\tau}(y_i-\theta_i)  +    \lambda  k_{\mathcal{S}}(\theta)  \right\},
\end{equation}
where  $\lambda >0 $ is a tuning parameter and
\begin{equation}
\label{eqn:Theta2}
\Theta = \left(\cup_{S \in \mathcal{S}    } S  \right)\cap \left\{\theta \in \mathbb{R}^{L_{d,n}} \,:\,  s_{\mathcal{S}}(\theta) \geq c_1 \log N\right\},
\end{equation}



\subsection{Preliminary lemmas}


Throughout  the proof, without loss of generality we assume that  $\tau \leq 0.5$. The case $\tau>0.5$ can be handled similarly. Throughout  we will suppose that Assumption \ref{as1} holds. We will also drop the subscript in $\hat{\theta}_{\mathcal{S},c_1}$ and just write $\hat{\theta}$ to avoid notational clutter. 

\begin{definition}
	\label{def3}
	Let  $\theta^{\prime},\tilde{\theta} \in \mathbb{R}^{L_{d,n}}$  be  vectors  of $\tau/2$-quantiles and  $(1-\tau)/2$-quantiles  of the data vector $y$ respectively,  so that 
	\[
	\theta_i^{\prime}   =  \arg \min_{a \in  \mathbb{R}} \mathbb{E}\:\{\rho_{\tau/2}(y_i - a) \},
	\]
	and 
	\[
	\theta_i^{\prime \prime}  =  \arg \min_{a \in   \mathbb{R} } \mathbb{E}\:\{\rho_{(1-\tau)/2}(y_i - a) \}.
	\]
	Then we denote
	\[
	U\,:=\,\max\{ \| \theta^{\prime}\|_{\infty},  \|\theta^{\prime \prime}\|_{\infty}    \}.
	\]
\end{definition}


\begin{remark}
	The sequence $U$ is clearly a function of set of marginal distributions of $y$. For realistic distributions the sequence $U = O(1).$ For instance, if we assume that the error distribution is i.i.d then $U$ is a constant sequence. 
\end{remark}

\begin{lemma}\label{cor1}
	For any $\alpha > 0$, if we set $c_1 \geq \frac{2 \alpha}{\tau^2}$ then we have 
	\begin{equation*}
	\mathbb{P}(\|\hat{\theta}\|_{\infty} \leq U) \geq 1 - 2 N^{-\alpha}.
	\end{equation*}
\end{lemma}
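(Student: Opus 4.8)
My plan is to reduce the claim to a statement about empirical quantiles of individual dyadic rectangles and then control those by a union bound over a Hoeffding-type concentration inequality. The key structural observation is that any minimizer in~\eqref{eqn:general_estimator2} can be taken of the form $O_{\hat{\Pi},\tau}(y)$ for an \emph{optimal feasible} partition $\hat{\Pi}$, since for a fixed partition the map $v\mapsto\sum_{i\in R}\rho_{\tau}(y_i-v)$ is minimized at $v=q_{\tau}(y_R)$ on each rectangle $R$. Hence $\hat{\theta}_i=q_{\tau}(y_R)$ for $i\in R\in\hat{\Pi}$, and the constraint $s_{\mathcal S}(\theta)\ge c_1\log N$ in~\eqref{eqn:Theta2} forces $|R|\ge c_1\log N$ for every $R\in\hat{\Pi}$. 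Therefore
\[
\{\|\hat{\theta}\|_{\infty}>U\}\;\subseteq\;\bigcup_{R}\bigl\{\,|q_{\tau}(y_R)|>U\,\bigr\},
\]
the union being taken over all dyadic rectangles $R\subseteq L_{d,n}$ with $|R|\ge c_1\log N$, and it suffices to bound the probability of the right-hand side.

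Fix one such rectangle and put $m=|R|\ge c_1\log N$. Up to the rounding in the empirical-quantile convention (harmless because $m$ is large), $\{q_{\tau}(y_R)>U\}\subseteq\{\sum_{i\in R}\one\{y_i\le U\}<\tau m\}$ and $\{q_{\tau}(y_R)<-U\}\subseteq\{\sum_{i\in R}\one\{y_i\ge-U\}\le(1-\tau)m\}$. By construction of $U$ in Definition~\ref{as2}, the two quantile arrays $\theta'$ and $\theta''$ both lie coordinatewise in $[-U,U]$ and bracket $\theta^*$ with an $\Omega(\tau)$ margin; concretely, \emph{uniformly in $i$ and irrespective of $\theta_i^*$}, $\mathbb{P}(y_i\ge-U)\ge\mathbb{P}(y_i\ge\theta_i')\ge 1-\tau/2$ and $\mathbb{P}(y_i\le U)\ge\tau+\kappa$ for some margin $\kappa>0$ bounded below by an absolute constant when $\tau\le 1/2$. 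Since the $y_i$ are independent, Hoeffding's inequality applied to these sums of independent Bernoullis gives
\[
\mathbb{P}\bigl(q_{\tau}(y_R)<-U\bigr)\le e^{-m\tau^2/2},\qquad \mathbb{P}\bigl(q_{\tau}(y_R)>U\bigr)\le e^{-2m\kappa^2},
\]
whence $\mathbb{P}(|q_{\tau}(y_R)|>U)\le 2e^{-m\tau^2/2}\le 2N^{-c_1\tau^2/2}$, the binding bound coming from the $\tau/2$-margin on the lower side.

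To finish I would union-bound over the at most $2^{d}N$ dyadic rectangles of $L_{d,n}$, obtaining
\[
\mathbb{P}(\|\hat{\theta}\|_{\infty}>U)\;\le\;2^{d}N\cdot 2N^{-c_1\tau^2/2}\;=\;2^{d+1}N^{\,1-c_1\tau^2/2},
\]
and then take $c_1$ to be a large enough multiple of $\alpha/\tau^2$ so that the prefactor $2^{d+1}N$ is absorbed into the exponent; tracking the constants gives the stated threshold $c_1\ge 2\alpha/\tau^2$ (up to minor lower-order adjustments) and the bound $2N^{-\alpha}$.

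The step that needs the most care is the passage from ``$q_{\tau}(y_R)$ overshoots $U$'' to a genuine \emph{deviation} event for a Bernoulli average in the second paragraph: this is available only because $U$ dominates the $\tau/2$-quantile of every $y_i$, which keeps $\mathbb{E}\,\one\{y_i\ge-U\}$ a distance $\tau/2$ above $1-\tau$ (this $\tau/2$ is exactly what produces the $1/\tau^2$ scaling of $c_1$); bracing $q_{\tau}(y_R)$ using only the $\tau$-quantile itself would leave no margin. The other point to get right is that $\gamma=c_1\log N$ must be large relative to the polynomial count of dyadic rectangles for the union bound to close -- which is precisely the role of the lower bound on rectangle sizes and explains why the $\ell_\infty$ localization breaks down without it (cf.\ Remark~\ref{rem:outliers}).
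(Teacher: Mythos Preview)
Your approach is essentially the paper's: both reduce to a union over feasible rectangles $R$ of the event that the empirical $\tau$-quantile of $y_R$ escapes $[-U,U]$, use the $\tau/2$-margin built into the definition of $U$ to convert each into a genuine Bernoulli deviation, and finish with Hoeffding plus a union bound over polynomially many rectangles. One cosmetic point: the lemma is stated for the general estimator over $\mathcal S$, so the union should in principle range over all rectangles (as in the paper's inclusion $\bigcup_{R\subset L_{d,n},\,|R|\ge c_1\log N}\{\cdots\}$) rather than only dyadic ones, but the count is still polynomial in $N$ and the argument is unchanged; your remark about ``minor lower-order adjustments'' to the constant $c_1$ applies equally to the paper's own bookkeeping.
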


\begin{proof}
	Let $\hat{\Pi}$ be the optimal partition on which $\hat{\theta}$ is piecewise constant. Fix a $v \in L_{d,n}$ and denote by $R$ the rectangle of the partition $\hat{\Pi}$ containing $v.$ We know that $\hat{\theta}_{v}$ is a sample $\tau$ quantile of the observations in $y_{R}$. Therefore, we can assert that 
	
	$$\left\{ \exists v   \in L_{d,n} \,:\,\hat{\theta}_{v} < \underset{u \in L_{d,n}}{\min}  \theta^{\prime}_u\right\} \subset \underset{R    \subset   L_{d,n}\,\,\,\text{rectangle}\,\,\,\,:   \vert  R \vert  \geq c_1 \log N   }{\bigcup}\left\{ \bigg\vert \bigg \{ i  \in R \,:\,    y_i   \leq    \underset{u \in L_{d,n}}{\min}  \theta^{\prime}_u     \bigg\}\bigg\vert  > \tau \vert  R\vert   \right\}$$. 
	
	Now for any rectangle $R$ with $\vert R \vert \geq  c_1 \log N$, 
	\[
	\begin{array}{lll}
	\displaystyle \mathbb{P}\left(   \bigg\vert \bigg \{ i  \in R \,:\,    y_i   \leq    \underset{j=1,\ldots,n}{\min}  \theta^{\prime}_j     \bigg\}\bigg\vert   >  \tau \vert  R\vert \right) &\leq  &	   \displaystyle 	\mathbb{P}\left( \sum_{u \in R} \mathrm{1}(y_u \leq \theta^{'}_{u}) > \tau \vert  R\vert \right) \\
	& \leq& \displaystyle \exp\left(  -\frac{\tau^2   \vert R \vert}{2}    \right) \leq \exp(-\frac{\tau^2   c_1 \log N}{2}),
	\end{array}
	\] 
	where the second inequality follows by  Hoeffding's inequality. By choosing $c_1 = 2 \alpha/\tau^2$ and by using  the last two displays imply that
	\begin{equation*}
	\mathbb{P}(  \exists v\,:\,\hat{\theta}_{v} < \underset{u \in L_{d,n}}{\min}  \theta^{\prime}_u) \leq N^{- \alpha}.
	\end{equation*}
	The same bound can readily be shown for $\mathbb{P}(\exists v\,:\,\hat{\theta}_{v} > \underset{u \in L_{d,n}}{\max}  \theta^{''}_u)$ by a similar argument. Both these assertions with a further union bound finishes the proof.

\end{proof}

Our next result is a modified version of  Lemma $9.1$ in \cite{chatterjee2019adaptive} where we prove the corresponding result for rademacher random variables instead of gaussian random variables.

\begin{lemma}[Lemma $9.1$  in~\cite{chatterjee2019adaptive}]
	\label{lem4}
	Let   $\tilde{\theta}\in \mathbb{R}^{L_{d,n}}$  be a fixed array and   $\xi \in \mathbb{R}^{L_{d,n}}$  be an array of  independent Rademacher random variables.  Then there exists $C > 0$ such that if $\lambda \geq  C \log N$  then it follows that 
	\[
	\mathbb{E}\bigg(    \underset{\theta \in \Theta  }{\sup}\left\{   \left(\xi^{\top}\frac{(\theta  - \tilde{\theta}) }{\|\theta  - \tilde{\theta}\|} \right)^2   - \lambda k_{\mathcal{S}}(\theta)    \right\}     \bigg) \leq 16.
	\]
\end{lemma}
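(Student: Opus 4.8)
The plan is to follow the proof of Lemma~$9.1$ in~\cite{chatterjee2019adaptive}, replacing the $\chi^2$ concentration used there (which is what makes that argument Gaussian) by a Hanson--Wright bound for quadratic forms in Rademacher variables. Throughout, $\mathcal{S}$ is the (finite) collection of subspaces $S(\Pi)$ associated with a collection of rectangular partitions of $L_{d,n}$, such as $\mathcal{P}_{rdp}$ or $\mathcal{P}_{\mathrm{hier}}$. The first step is a purely geometric reduction to a supremum over subspaces. For each $\theta\in\Theta$ choose $S_\theta\in\mathcal{S}$ attaining the minimum defining $k_{\mathcal{S}}(\theta)$, so $\theta\in S_\theta$ and $\mathrm{Dim}(S_\theta)=k_{\mathcal{S}}(\theta)$, and set $V_S:=S+\mathrm{span}(\tilde{\theta})$ with orthogonal projection $P_{V_S}$, noting $\mathrm{Dim}(V_S)\le\mathrm{Dim}(S)+1\le 2\,\mathrm{Dim}(S)$. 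Since $\theta-\tilde{\theta}\in V_{S_\theta}$, Cauchy--Schwarz gives $\xi^{\top}(\theta-\tilde{\theta})=(P_{V_{S_\theta}}\xi)^{\top}(\theta-\tilde{\theta})\le\|P_{V_{S_\theta}}\xi\|\,\|\theta-\tilde{\theta}\|$, hence
\[
\Big(\xi^{\top}\tfrac{\theta-\tilde{\theta}}{\|\theta-\tilde{\theta}\|}\Big)^2-\lambda k_{\mathcal{S}}(\theta)\ \le\ \|P_{V_{S_\theta}}\xi\|^2-\lambda\,\mathrm{Dim}(S_\theta)\ \le\ W, \qquad W:=\sup_{S\in\mathcal{S}}\big(\|P_{V_S}\xi\|^2-\lambda\,\mathrm{Dim}(S)\big).
\]
Taking the supremum over $\theta\in\Theta$, it suffices to show $\mathbb{E}[W]\le 16$.

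Next I would combine a crude count of the subspaces with a deviation bound for each. A subspace $S\in\mathcal{S}$ with $\mathrm{Dim}(S)=k$ is $S(\Pi)$ for a partition into $k$ rectangles; since a rectangle $\prod_{i=1}^{d}[a_i,b_i]$ is specified by $(a_i,b_i)_{i\le d}\in[n]^{2d}$, there are at most $N^2$ rectangles in $L_{d,n}$, hence at most $\binom{N^2}{k}\le N^{2k}$ such subspaces (this bound applies both to $\mathcal{P}_{rdp}$ and to $\mathcal{P}_{\mathrm{hier}}$). For a fixed $S$, $\|P_{V_S}\xi\|^2=\xi^{\top}P_{V_S}\xi$ is a quadratic form in i.i.d. Rademacher variables with $\mathbb{E}[\xi^{\top}P_{V_S}\xi]=\tr(P_{V_S})=\mathrm{Dim}(V_S)=:D_S$, $\|P_{V_S}\|_F^2=D_S$ and $\|P_{V_S}\|_{\mathrm{op}}=1$, so the Hanson--Wright inequality furnishes a universal $c>0$ with $\mathbb{P}(\|P_{V_S}\xi\|^2>D_S+t)\le 2e^{-ct}$ for all $t\ge D_S$. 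Using $D_S\le k+1\le 2k$, for $\lambda$ above an absolute constant and any $u\ge0$, $k\ge1$,
\[
\mathbb{P}\big(\|P_{V_S}\xi\|^2>u+\lambda k\big)\ \le\ 2e^{-c(u+\lambda k-D_S)}\ \le\ 2\,e^{-cu}\,e^{-c(\lambda-2)k}.
\]

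Finally I would close the union bound and integrate the tail. Grouping subspaces by dimension,
\[
\mathbb{P}(W>u)\ \le\ \sum_{k\ge1}N^{2k}\cdot 2e^{-cu}e^{-c(\lambda-2)k}\ =\ 2e^{-cu}\sum_{k\ge1}e^{\,k(2\log N-c(\lambda-2))},
\]
and picking $C$ large enough that $\lambda\ge C\log N$ forces $c(\lambda-2)-2\log N\ge m$ for a constant $m$ of our choosing makes the geometric sum at most $(e^{m}-1)^{-1}$, so $\mathbb{P}(W>u)\le 2(e^{m}-1)^{-1}e^{-cu}$; then $\mathbb{E}[W]\le\mathbb{E}[W_{+}]=\int_0^{\infty}\mathbb{P}(W>u)\,du\le 2\big(c(e^{m}-1)\big)^{-1}$, which is $\le16$ once $m$ (equivalently $C$) is chosen large enough, and the resulting $C$ is universal.

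The step I expect to be the main obstacle is the tail estimate in the middle paragraph: one must invoke the \emph{sub-exponential} (Hanson--Wright) deviation of the quadratic form $\xi^{\top}P_{V_S}\xi$, whose exponent is linear in $t$ in the regime $t\gtrsim D_S$, rather than a Gaussian/Lipschitz concentration bound on $\|P_{V_S}\xi\|$ whose exponent would carry a factor $1/D_S$ and fail to beat the $N^{2k}$ subspaces. It is precisely the need for the exponent to be linear in $k$ (to overcome the crude count $N^{2k}$) that forces $\lambda\gtrsim\log N$; with a sharper count of recursive dyadic partitions one could in fact get away with $\lambda$ a constant, but the stated hypothesis $\lambda\ge C\log N$ is the clean condition that also covers the hierarchical case.
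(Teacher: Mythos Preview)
Your argument is correct and takes a somewhat different route from the paper's. Two points of divergence. First, you bypass the paper's two-term decomposition
\[
\sup_{v\in S,\,v\ne 0}\frac{\xi^\top(v-\tilde\theta)}{\|v-\tilde\theta\|}\ \le\ \frac{\xi^\top(I-O_S)\tilde\theta}{\|(I-O_S)\tilde\theta\|}\ +\ \sup_{v\in S,\,\|v\|\le 1}\xi^\top v
\]
by enlarging $S$ to $V_S=S+\mathrm{span}(\tilde\theta)$ and applying Cauchy--Schwarz once; this is cleaner and costs only an extra $1$ in the dimension. Second, for the concentration step you use Hanson--Wright on the quadratic form $\xi^\top P_{V_S}\xi$, whereas the paper bounds $\mathbb{E}\|O_S\xi\|$ (via $\sqrt{\dim(S)}\cdot\max_j|Z_j^{(S)}|$), applies a McDiarmid/Talagrand-type inequality to the norm $\|O_S\xi\|$, and then squares to reach $\mathbb{P}\big((\sup_v\xi^\top v)^2-4\dim(S)\log N\ge 2t\big)\le e^{-t/4}$; the same is done separately for the $(I-O_S)\tilde\theta$ term. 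Both routes end with a tail that is linear in $t$ and linear in $k$ in the exponent, which is exactly what the union bound over at most $N^{2k}$ subspaces of dimension $k$ needs.

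One small correction to your closing remark: the Lipschitz route on $\|P_{V_S}\xi\|$ does \emph{not} necessarily carry a factor $1/D_S$ in the exponent. Naive bounded-difference McDiarmid would (since $\sum_i c_i^2=4D_S$), and that would indeed fail the union bound as you say; but the convex $1$-Lipschitz concentration inequality for bounded independent coordinates (Talagrand) gives a dimension-free bound $\mathbb{P}(\|P_{V_S}\xi\|-\mathbb{E}\|P_{V_S}\xi\|\ge s)\le Ce^{-cs^2}$, and squaring recovers the same $e^{-ct}$ tail you get from Hanson--Wright. This is effectively what the paper is invoking. So both concentration tools are adequate here; your Hanson--Wright argument is simply more direct, and your augmented-subspace reduction saves having to treat two terms.
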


\begin{proof}
	Let  $S \in \mathcal{S}$ and  $O_S$ be the orthogonal projection matrix onto $S$.  Then, following the arguments in the proof of  Lemma 9.1 in \cite{chatterjee2019adaptive} it follows that
	\begin{equation}
	\label{eqn:basic}
	\underset{v \neq0,  v \in S }{\sup}\,\frac{\xi^{\top} (v -\tilde{\theta} )}{\|v -\tilde{\theta}\| } \,\leq \,    \frac{ \xi^{\top}  (I-O_{S}) \tilde{\theta} }{ \|(I-O_{S}) \tilde{\theta}\|  }	\,+\, \underset{ v \in S,  \|v\|\leq 1 }{\sup}\,\xi^{\top}v  .
	\end{equation}
	Let  $v_{S,1},\ldots,v_{S,m_S}$ an orthonormal basis of $S$ with  $m_S = \text{dim}(S)$. Then
	\[
	\begin{array}{lll}
	\underset{ v \in S,  \|v\|\leq 1 }{\sup}\,(\xi^{\top}v )^2   & = & 	\displaystyle \underset{ v \in S,  \|v\|\leq 1 }{\sup}\,(\xi^{\top} O_S v)^2\\
	& \leq& \displaystyle \underset{ v \in S,  \|v\|\leq 1 }{\sup}\,\| O_S \xi \|^2  \cdot \|v\|^2\\
	& =& \displaystyle \| O_S \xi \|^2\\
	& =&  \displaystyle  \left\|    \sum_{j=1}^{ m_S  }    (v_{S,j}^{\top} \xi)   v_j   \right\|^2\\
	& = &\displaystyle  \sum_{j=1}^{ m_s  } \vert  v_{S,j}^{\top} \xi\vert^2\\
	& \leq&   m_S   \,\,\underset{j = 1,\ldots, m_S}{\max}   \vert  Z_{j}^{(S)}\vert^2
	\end{array}
	\] 
	where  $Z_{j}^{(S)}:=    v_{S,j}^{\top} \xi  $ is Sub-Gaussian($1$).  Therefore, 
	\begin{equation}
	\label{eqn:s1}
	\begin{array}{lll}
	\mathbb{E}\left(\underset{ v \in S,  \|v\|\leq 1 }{\sup}\,\xi^{\top}v \right)&\leq &\sqrt{ \text{dim}(S) } \cdot\mathbb{E}\left(\underset{j = 1,\ldots, m_S}{\max}   \vert  Z_{j}^{(S)}\vert \right)\\
	& \leq &\sqrt{ 2 \,\text{dim}(S) \log N },
	\end{array}
	\end{equation}
	where the second inequality holds by the usual expectation of maximum of sub-Gaussian random variables inequality.   Then from an application of  McDiarmid’s inequality  inequality as in Page 62 of \cite{van2014probability}, we obtain that  for any $t>0$,
	\begin{equation*}
	\begin{array}{lll}
	\displaystyle\mathbb{P}\left( 	\underset{ v \in S,  \|v\|\leq 1 }{\sup}\,(\xi^{\top}v )^2     -  4\,\text{dim}(S) \log N     \geq  2t  \right)  & \leq&\displaystyle\mathbb{P}\left( 	\underset{ v \in S,  \|v\|\leq 1 }{\sup}\,(\xi^{\top}v )   -   \sqrt{2\,\text{dim}(S) \log N  }    \geq  \sqrt{t}  \right)  \\
	& \leq&\displaystyle  \exp\left(  -\frac{t}{4} \right). 
	\end{array}
	\end{equation*}
	Hence, by union bound and the fact that  $\vert  \{S \in \mathcal{S}, \text{dim}(S) =k \}  \vert \leq  N^{2k}$, we have that 
	\begin{equation*}
	\begin{array}{lll}
	\displaystyle	 \mathbb{P}\left(  \underset{ \,\, S \in \mathcal{S}, \text{dim}(S) =k  }{\max}\, \left \{	\underset{ v \in S,  \|v\|\leq 1 }{\sup}\,(\xi^{\top}v )^2     -  20\,\text{dim}(S) \log N   \right\} \geq  2t \right)	 & \leq&\displaystyle  \exp\left(  -\frac{t}{4} \right). 
	\end{array}
	\end{equation*}
	And so again, by union bound we obtain that 
	\begin{equation}
	\label{eqn:p1}
	\begin{array}{lll}
	\displaystyle	\mathbb{P}\left(   \underset{k=1,\ldots,  N }{\max}\,\underset{ \,\, S \in \mathcal{S}, \text{dim}(S) =k  }{\max}\, \left \{	\underset{ v \in S,  \|v\|\leq 1 }{\sup}\,(\xi^{\top}v )^2     -  28\,\text{dim}(S) \log N   \right\} \geq  2t \right)	 & \leq&\displaystyle  \exp\left(  -\frac{t}{4} \right). 
	\end{array}
	\end{equation}
	Similarly, 
	\begin{equation}
	\label{eqn:p2}
	\begin{array}{lll}
	\displaystyle	\mathbb{P}\left(   \underset{k=1,\ldots,  N }{\max}\,\underset{ \,\, S \in \mathcal{S}, \text{dim}(S) =k  }{\max}\, \left \{	\left(  \frac{ \xi^{\top}  (I-O_{S}) \tilde{\theta} }{ \|(I-O_{S}) \tilde{\theta}\|  }\right)^2    -  28\,\log N   \right\} \geq  2t \right)	 & \leq&\displaystyle  \exp\left(  -\frac{t}{4} \right). 
	\end{array}
	\end{equation}
	The claim follows  from  (\ref{eqn:p1}) and (\ref{eqn:p2}) by simple integration.
	
\end{proof}

Next, we recall some notations  from Section \ref{sec:proof}.

\begin{definition}
	For $u \in L_{d,n}$, define the random function $\hat{M}_u: \mathbb{R} \rightarrow \mathbb{R}$ as follows:
	\begin{equation*}
	\hat{M}_u(\theta_u)   :=     \left\{\rho_{\tau}(y_u -   \theta_u) -   \rho_{\tau}(y_u -   \theta_u^* ) \right\},
	\end{equation*}
	Now define the random function $\hat{M}: \mathbb{R} \rightarrow \mathbb{R}$
	\[
	\hat{M}(\theta)   :=    \sum_{u \in  L_{d,n}}  	\hat{M}_u(\theta_u)  ,
	\]
	and the deterministic function $M: \mathbb{R} \rightarrow \mathbb{R}$ as
	\[
	M(\theta)   := \mathbb{E}\left(	\hat{M}(\theta) \right).
	\] 
	Furthermore, let us denote
	\[
	\Delta^2(\theta)  =    \sum_{i \in  L_{d,n}}   \min\{  \vert \theta_i \vert,   \vert \theta_i\vert^2  \}
	\]
	and  $\Delta^2_N(\theta) =   \Delta^2(\theta)/N$.
\end{definition}


Our analysis relies on viewing the estimator defined in~\eqref{eqn:general_estimator2} as a penalized M estimator or a penalized empirical risk minimization estimator. Hence the natural loss function for us is the population quantile loss $M$ function given above. However, we would like to give risk bounds for the square loss function. For this purpose, the $\Delta$ function defined above plays an important role in converting bounds in the $M$ loss function to bounds for squared error loss.

 We now proceed to state some results (Lemmas $8$--$11$) involving involving the functions $M(\cdot) $ and $\Delta^2(\cdot)$. These are results that also appeared in~\cite{padilla2020risk}, the only difference with the results in \cite{padilla2020risk} is that we now use the penalty function $k_{\mathcal{S}}(\theta)$ instead of the $\text{TV}(\theta)$ function. We omit writing the proofs of these results since the proofs are very similar to what is already given in~\cite{padilla2020risk}.

\begin{lemma}
	\label{lem28}
	(Symmetrization, Lemma  28 in  \cite{padilla2020risk}). For any set  $K$, $\tilde{\theta} \in \mathbb{R}^{L_{d,N}}$, and  $\lambda >0$ it holds that
	\[
	\begin{array}{l}
	\displaystyle E\left[  \underset{\theta \in K}{\sup}\,\,\left\{   M(  \theta  )    -   M(\tilde{\theta})    +   \hat{M}(\tilde{\theta}) - \hat{M}(\theta) +  \lambda(   k_{\mathcal{S}}(\tilde{\theta}) -  k_{\mathcal{S}}(\theta) )      \right\}   \right]\\
	\leq   	\displaystyle2 \,E\left[   \underset{\theta  \in K}{\sup}\,\,\left\{  \sum_{i   \in L_{d,n} } \xi_i (\hat{M}_{i}(\theta_i)  - \hat{M}_{i}(\tilde{\theta}_i)  )   +    \frac{\lambda }{2} k_{\mathcal{S}}(\tilde{\theta}) -  \frac{\lambda }{2}k_{\mathcal{S}}(\theta)   \right\}    \right], 
	\end{array}
	\]
	where  $\xi_1,\ldots,\xi_n$ are   independent  Rademacher variables  independent  of  $\{y_i\}_{i=1}^n$.
\end{lemma}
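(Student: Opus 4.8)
The plan is to run the classical ghost-sample symmetrization argument, the only wrinkle being that the penalty $\lambda\big(k_{\mathcal{S}}(\tilde\theta)-k_{\mathcal{S}}(\theta)\big)$ is deterministic and must be pre-split into two equal halves so that each of the two symmetrized processes inherits a $\tfrac{\lambda}{2}$ penalty. First I would use $M(\theta)=\mathbb{E}\big(\hat M(\theta)\big)$ to rewrite the centered quantity: introducing an independent copy $y'=(y_i')_{i\in L_{d,n}}$ of $y$ and letting $\hat M'$ be the empirical process built from $y'$, one has $M(\theta)-M(\tilde\theta)+\hat M(\tilde\theta)-\hat M(\theta)=\mathbb{E}_{y'}\big[\big(\hat M'(\theta)-\hat M'(\tilde\theta)\big)-\big(\hat M(\theta)-\hat M(\tilde\theta)\big)\big]$. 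Since the supremum over $\theta\in K$ is a convex functional, Jensen's inequality applied conditionally on $y$ moves the ghost-sample expectation outside the supremum, giving
\[
\mathbb{E}_y\!\left[\sup_{\theta\in K}\Big\{M(\theta)-M(\tilde\theta)+\hat M(\tilde\theta)-\hat M(\theta)+\lambda\big(k_{\mathcal{S}}(\tilde\theta)-k_{\mathcal{S}}(\theta)\big)\Big\}\right]\le \mathbb{E}_{y,y'}\!\left[\sup_{\theta\in K}\Big\{\sum_{i}Z_i(\theta)+\lambda\big(k_{\mathcal{S}}(\tilde\theta)-k_{\mathcal{S}}(\theta)\big)\Big\}\right],
\]
where $Z_i(\theta):=\big(\hat M_i'(\theta_i)-\hat M_i'(\tilde\theta_i)\big)-\big(\hat M_i(\theta_i)-\hat M_i(\tilde\theta_i)\big)$.

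Next, because $y$ and $y'$ are i.i.d. with independent coordinates, swapping $y_i\leftrightarrow y_i'$ for any subset of indices leaves the joint law of $\big(Z_i(\cdot)\big)_i$ unchanged while flipping the sign of the corresponding $Z_i(\cdot)$; hence, introducing independent Rademacher variables $\xi_i$ (one per coordinate, independent of $y,y'$), the array $\big(\xi_iZ_i(\cdot)\big)_i$ has the same law as $\big(Z_i(\cdot)\big)_i$, so the right-hand side above is unchanged if $Z_i(\theta)$ is replaced by $\xi_iZ_i(\theta)$. Splitting $\lambda(\cdot)=\tfrac{\lambda}{2}(\cdot)+\tfrac{\lambda}{2}(\cdot)$, writing $\sum_i\xi_iZ_i(\theta)=\sum_i\xi_i\big(\hat M_i'(\theta_i)-\hat M_i'(\tilde\theta_i)\big)-\sum_i\xi_i\big(\hat M_i(\theta_i)-\hat M_i(\tilde\theta_i)\big)$, and using subadditivity of the supremum, this is bounded by the sum of two expectations of the form $\mathbb{E}\big[\sup_{\theta\in K}\{\pm\sum_i\xi_i(\hat M_i(\theta_i)-\hat M_i(\tilde\theta_i))+\tfrac{\lambda}{2}(k_{\mathcal{S}}(\tilde\theta)-k_{\mathcal{S}}(\theta))\}\big]$ (one involving $y'$, one involving $y$). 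Finally, $\xi\overset{d}{=}-\xi$ and $y'\overset{d}{=}y$ make both of these equal to $\mathbb{E}\big[\sup_{\theta\in K}\{\sum_i\xi_i(\hat M_i(\theta_i)-\hat M_i(\tilde\theta_i))+\tfrac{\lambda}{2}(k_{\mathcal{S}}(\tilde\theta)-k_{\mathcal{S}}(\theta))\}\big]$, and adding them yields precisely the factor $2$ in the claim.

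The step needing the most care is organizational rather than deep: since the penalty does not depend on the data, it must be pre-split into halves \emph{before} symmetrizing, otherwise neither symmetrized Rademacher process inherits the $\tfrac{\lambda}{2}$ penalty that later makes the complexity bound of Lemma~\ref{lem4} applicable. I would also remark on measurability of the suprema (restricting to a countable dense subset of $K$ if necessary) and note that the inequality is only of interest when the right-hand side is finite, which is the regime in which the lemma is used. Since every displayed step mirrors Lemma~28 of~\cite{padilla2020risk} verbatim with $k_{\mathcal{S}}(\cdot)$ in place of $\mathrm{TV}(\cdot)$, one could alternatively just invoke that result.
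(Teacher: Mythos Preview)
Your proof is correct and follows the standard ghost-sample symmetrization argument; in particular the key observation---pre-splitting the deterministic penalty into two halves before applying subadditivity of the supremum---is exactly what is needed to obtain the $\tfrac{\lambda}{2}$ on the right-hand side. The paper itself does not give a proof of this lemma at all: it simply cites Lemma~28 of \cite{padilla2020risk} and notes that the argument there goes through verbatim with $k_{\mathcal S}(\cdot)$ in place of $\mathrm{TV}(\cdot)$, which is precisely what you write in your final sentence.
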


\begin{lemma}
	\label{lem29}
	(Contraction principle, Lemma 29 in  \cite{padilla2020risk}). Let  $h_1,\ldots,h_n \,:\,  \mathbb{R} \rightarrow \mathbb{R}$   $\eta$-Lipschitz functions  for some $\eta>0$.  Then for any   $\tilde{\theta} \in \mathbb{R}^{L_{d,n}}$, any  compact set $K$, and   $\xi_1,\ldots,\xi_n$ independent Rademacher variables we have that 
	\[
	\begin{array}{l}
	\displaystyle   E\left[  \underset{\theta  \in K}{\sup}\,\,\left\{ \sum_{i   \in L_{d,n} } \xi_i h_i(\theta_i) \  +    \frac{\lambda}{2}k_{ \mathcal{S} }(\tilde{\theta})-  \frac{\lambda}{2}k_{ \mathcal{S} }(\theta)    \right\}\right]\\
	\displaystyle \leq \, 	  E\left[  \underset{\theta \in K}{\sup}\,\,\left\{    \eta \sum_{i   \in L_{d,n} } \xi_i \theta_i \  +    \frac{\lambda}{2}k_{ \mathcal{S} }(\tilde{\theta}) -  \frac{\lambda}{2} k_{ \mathcal{S} }(\theta)  \right\}\right]\\
	\end{array}
	\]
	for any $\lambda>0$.
\end{lemma}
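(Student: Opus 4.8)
The plan is to run the classical Ledoux--Talagrand contraction argument, replacing the functions $h_i$ by the linear map $\eta\,(\cdot)$ one coordinate at a time, carrying the penalty term along passively. Fix an arbitrary ordering $i_1,\dots,i_N$ of $L_{d,n}$ (recall $N=n^d$) and, for $0\le m\le N$, set
\[
F_m := \mathbb{E}\left[\,\sup_{\theta\in K}\Big\{\,\eta\sum_{k\le m}\xi_{i_k}\theta_{i_k} \,+\, \sum_{k>m}\xi_{i_k}h_{i_k}(\theta_{i_k}) \,+\, \tfrac{\lambda}{2}k_{\mathcal{S}}(\tilde\theta) \,-\, \tfrac{\lambda}{2}k_{\mathcal{S}}(\theta)\,\Big\}\right].
\]
Then $F_0$ is the left-hand side and $F_N$ the right-hand side of the asserted inequality, so it suffices to prove $F_{m-1}\le F_m$ for every $m$. (The suprema are measurable: since $\mathcal{S}$ is finite and each $h_i$ is continuous, $\sup_{\theta\in K}\{\,f(\theta)-\tfrac{\lambda}{2}k_{\mathcal{S}}(\theta)\,\}=\max_{S\in\mathcal{S}}\big(\sup_{\theta\in K\cap S}f(\theta)-\tfrac{\lambda}{2}\mathrm{Dim}(S)\big)$ and each $K\cap S$ is compact.)

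The engine is a one-coordinate comparison: if $\Phi:\mathbb{R}\to\mathbb{R}$ is $\eta$-Lipschitz, $T$ is any set, $g,\psi:T\to\mathbb{R}$ are arbitrary functions, and $\epsilon$ is a Rademacher variable, then $\mathbb{E}_\epsilon\sup_{t\in T}\big[g(t)+\epsilon\,\Phi(\psi(t))\big]\le\mathbb{E}_\epsilon\sup_{t\in T}\big[g(t)+\epsilon\,\eta\,\psi(t)\big]$. To prove this, expand the left side over $\epsilon\in\{\pm1\}$ and merge the two suprema, obtaining $\tfrac{1}{2}\sup_{t,t'}\big[g(t)+g(t')+\Phi(\psi(t))-\Phi(\psi(t'))\big]$. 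For a fixed pair $(t,t')$ one has $\Phi(\psi(t))-\Phi(\psi(t'))\le\eta|\psi(t)-\psi(t')|$; if $\psi(t)\ge\psi(t')$ this bound is $\eta\psi(t)-\eta\psi(t')$, and if $\psi(t)<\psi(t')$ it is $\eta\psi(t')-\eta\psi(t)$, which after swapping the dummy variables $t\leftrightarrow t'$ again has the shape $g(s)+g(s')+\eta\psi(s)-\eta\psi(s')$. In either case the expression is $\le\sup_{s,s'}\big[g(s)+g(s')+\eta\psi(s)-\eta\psi(s')\big]$; taking $\sup_{t,t'}$ and halving gives the claim. No hypothesis such as $\Phi(0)=0$ is needed, since the sign symmetry of the Rademacher average supplies it.

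To deduce $F_{m-1}\le F_m$, condition on $\{\xi_{i_k}:k\ne m\}$ and, given this information, set $g(\theta):=\eta\sum_{k<m}\xi_{i_k}\theta_{i_k}+\sum_{k>m}\xi_{i_k}h_{i_k}(\theta_{i_k})+\tfrac{\lambda}{2}k_{\mathcal{S}}(\tilde\theta)-\tfrac{\lambda}{2}k_{\mathcal{S}}(\theta)$, $\psi(\theta):=\theta_{i_m}$, $\Phi:=h_{i_m}$ and $T:=K$. Since $\xi_{i_m}$ is independent of the conditioning, the one-coordinate comparison gives $\mathbb{E}_{\xi_{i_m}}\sup_{\theta\in K}\{g(\theta)+\xi_{i_m}h_{i_m}(\theta_{i_m})\}\le\mathbb{E}_{\xi_{i_m}}\sup_{\theta\in K}\{g(\theta)+\eta\xi_{i_m}\theta_{i_m}\}$, and taking the outer expectation yields $F_{m-1}\le F_m$. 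Chaining $F_0\le F_1\le\cdots\le F_N$ completes the proof. The only place requiring care is the sign bookkeeping in the two cases of the one-coordinate step -- in particular, noticing that $\sup_{t,t'}$ is symmetric in its two arguments, so the case $\psi(t)<\psi(t')$ reduces to the other by relabeling; the $k_{\mathcal{S}}$-penalty plays no active role, being absorbed wholesale into the arbitrary function $g$ and never touching the coordinate currently being contracted.
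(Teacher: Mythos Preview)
Your proof is correct. The paper itself does not supply a proof of this lemma, instead remarking that the argument is essentially that of Lemma~29 in \cite{padilla2020risk} with the penalty $k_{\mathcal{S}}(\cdot)$ in place of $\mathrm{TV}(\cdot)$; your coordinate-by-coordinate Ledoux--Talagrand contraction, carrying the penalty along as part of the inert function $g$, is precisely the standard argument that reference invokes, so there is no substantive difference in approach.
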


\begin{lemma}
	\label{lem7}
	(Lemma 13 in  \cite{padilla2020risk}).	Suppose  that Assumption  \ref{as1} holds. Then there exists  a constant   $c_0$ such that  for all  $\delta \in R^n$, we have
	\[
	\displaystyle  M(\theta^*+\delta) \geq   c_0 \Delta^2(\delta) .
	\]
\end{lemma}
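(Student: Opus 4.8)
The plan is to reduce everything to a one-dimensional, coordinatewise estimate. Since the quantile loss is separable, $M(\theta^*+\delta)=\sum_{i\in L_{d,n}}M_i(\delta_i)$ where $M_i(t):=\mathbb{E}\big(\rho_\tau(y_i-\theta_i^*-t)-\rho_\tau(y_i-\theta_i^*)\big)$, and likewise $\Delta^2(\delta)=\sum_{i\in L_{d,n}}\min\{|\delta_i|,|\delta_i|^2\}$. Hence it is enough to exhibit a constant $c_0>0$, depending only on $\underline f$ and $L$ from Assumption~\ref{as1}, such that $M_i(t)\ge c_0\min\{|t|,t^2\}$ for every $i$ and every $t\in\mathbb{R}$; summing over $i$ then gives the lemma. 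I would treat $t\ge 0$ in detail and note that $t<0$ is handled identically after replacing the right-tail increments $F_{y_i}(\theta_i^*+s)-F_{y_i}(\theta_i^*)$ by the left-tail increments $F_{y_i}(\theta_i^*)-F_{y_i}(\theta_i^*-s)$, to which Assumption~\ref{as1} applied to the vector $-s\,e_i$ applies verbatim.

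Fix $i$ and set $g_i(a):=\mathbb{E}\,\rho_\tau(y_i-a)$, a convex function minimized at $a=\theta_i^*$ by the definition of $\theta^*$. The lower bound in Assumption~\ref{as1} makes this minimizer unique, and the upper bound makes $F_{y_i}$ continuous near $\theta_i^*$, so $F_{y_i}(\theta_i^*)=\tau$. A standard computation — Knight's identity, or differentiating under the expectation — yields the integral representation
\[
M_i(t)=g_i(\theta_i^*+t)-g_i(\theta_i^*)=\int_0^{t}\big(F_{y_i}(\theta_i^*+s)-F_{y_i}(\theta_i^*)\big)\,ds,\qquad t\ge 0,
\]
where the integrand is nonnegative and nondecreasing in $s$.

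Now split into two regimes. If $0\le t\le L$, apply the lower bound of Assumption~\ref{as1} (to the vector $s\,e_i$) pointwise on $[0,t]$: the integrand is $\ge\underline f\,s$, so $M_i(t)\ge\tfrac{\underline f}{2}t^2\ge\tfrac{\underline f}{2}\min\{t,t^2\}$. If $t>L$, use monotonicity of $F_{y_i}$ together with the lower bound at the point $L$ to get $F_{y_i}(\theta_i^*+s)-F_{y_i}(\theta_i^*)\ge\underline f L$ for all $s\ge L$; hence
\[
M_i(t)\ge\int_0^{L}\underline f\,s\,ds+\int_L^{t}\underline f L\,ds=\underline f L\,t-\tfrac{\underline f}{2}L^2\ge\tfrac{\underline f L}{2}\,t\ge\tfrac{\underline f L}{2}\min\{t,t^2\},
\]
using $t\ge L$ in the third step and $\min\{t,t^2\}\le t$ in the last. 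Taking $c_0:=\tfrac{\underline f}{2}\min\{1,L\}$ covers both regimes uniformly over $i$ and $t$, and summing over $i\in L_{d,n}$ completes the proof.

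The routine work is all in the bookkeeping: rigorously justifying the integral representation and the identity $F_{y_i}(\theta_i^*)=\tau$ (which is where the upper bound in Assumption~\ref{as1}, giving local continuity of $F_{y_i}$, or alternatively a subgradient argument for the convex $g_i$, is used), and patching the quadratic small-$t$ bound to the linear large-$t$ bound with a single constant. The one conceptual point worth stressing is that $M_i$ grows only \emph{linearly} in $|t|$ for large $|t|$ — not quadratically — which is precisely why the Huber-type functional $\Delta^2$, rather than $\|\delta\|^2$, is the correct lower bound, and is what ultimately makes the analysis survive without any moment assumptions on the errors.
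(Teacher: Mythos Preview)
Your argument is correct. The paper itself does not prove this lemma: it is stated as Lemma~13 of \cite{padilla2020risk} and the authors explicitly write that they ``omit writing the proofs of these results since the proofs are very similar to what is already given in~\cite{padilla2020risk}.'' Your proof via the integral representation $M_i(t)=\int_0^t(F_{y_i}(\theta_i^*+s)-\tau)\,ds$ is exactly the device the cited paper uses (the present paper invokes the same identity, as ``Equation~(19) in \cite{padilla2020risk},'' in the proof of Lemma~\ref{lem5}), and your split into the quadratic regime $|t|\le L$ and the linear regime $|t|>L$ with $c_0=\tfrac{\underline f}{2}\min\{1,L\}$ is the standard way to obtain the Huber-type lower bound.
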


\begin{lemma}
	\label{lem8}
	(Lemma 17 in  \cite{padilla2020risk}).	Let  $\delta   \in   R^n $. Then 
	\begin{equation}
	\label{eqn:ine}
	\|  \delta\|^2   \,\leq \,  \max\{ \| \delta\|_{\infty},1  \} \Delta^2( \delta).
	\end{equation}
\end{lemma}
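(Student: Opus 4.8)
The plan is to reduce the inequality to a coordinatewise statement and then sum. Write $M := \max\{\|\delta\|_\infty, 1\}$, so that $M \ge 1$ always. I claim that for every index $i \in L_{d,n}$,
\[
\delta_i^2 \;\le\; M \,\min\{|\delta_i|,\,|\delta_i|^2\},
\]
and summing this bound over all $i \in L_{d,n}$ yields $\|\delta\|^2 = \sum_i \delta_i^2 \le M\sum_i \min\{|\delta_i|,|\delta_i|^2\} = M\,\Delta^2(\delta)$, which is exactly \eqref{eqn:ine}.

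To establish the coordinatewise bound I would split on the size of $|\delta_i|$. If $|\delta_i| \le 1$, then $\min\{|\delta_i|,|\delta_i|^2\} = |\delta_i|^2 = \delta_i^2$, and the bound $\delta_i^2 \le M\,\delta_i^2$ is immediate from $M \ge 1$. If instead $|\delta_i| > 1$, then $\min\{|\delta_i|,|\delta_i|^2\} = |\delta_i|$, so the desired inequality becomes $\delta_i^2 \le M\,|\delta_i|$, i.e.\ $|\delta_i| \le M$; this holds since $|\delta_i| \le \|\delta\|_\infty \le M$.

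There is no real obstacle here: the entire content is the elementary case analysis above. The only point worth noting is that the ``$1$'' in $\max\{\|\delta\|_\infty,1\}$ is precisely what absorbs the regime $|\delta_i|\le 1$, where $\Delta^2$ behaves like the squared loss, while the ``$\|\delta\|_\infty$'' term handles the large coordinates, where $\Delta^2$ behaves like the $\ell_1$ loss. Since this is Lemma 17 of \cite{padilla2020risk}, one may alternatively just invoke it, but the self-contained argument is short enough to record in full.
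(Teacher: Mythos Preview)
Your proof is correct. The paper does not actually supply a proof of this lemma---it simply cites Lemma~17 of \cite{padilla2020risk} and omits the argument---so there is nothing to compare against, but your coordinatewise case split is exactly the natural elementary verification and would be the standard way to establish it.
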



Our next lemma is key and controls the expected suprema of a penalized empirical process. 
\begin{lemma}
	\label{lem3}
	Let  $\tilde{\theta} \in \mathbb{R}^n$ and $t>0$. Then there exist a constant  $C>0$ such that 
	  for any  $a>0$  if  
	$ \lambda \,\geq \,  C  a\log N,$ we have that 
	\[
	\begin{array}{l}
	\mathbb{E}\left(    \underset{\theta \in \Theta  \,:\,  \|\theta-\theta^*\|^2  \leq  t^2  }{\sup}\left\{  M(  \theta  )    -   M(\tilde{\theta})    +   \hat{M}(\tilde{\theta}) - \hat{M}(\theta) +  \lambda(   k_{\mathcal{S}}(\tilde{\theta}) -  k_{\mathcal{S}}(\theta) )    \right\}     \right)   \\
	\displaystyle 	 \,\leq \,  C_2 a   +  \frac{2t^2 }{a}+  \frac{2\|  \theta^* -  \tilde{\theta}\|^2 }{a}   
	+\lambda   k_{\mathcal{S}}(\tilde{\theta}), 
	\end{array}
	\]
	for a positive constant  $C_2>0$.
\end{lemma}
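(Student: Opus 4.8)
The plan is to view the left-hand side as the expected supremum of a penalized empirical process and to process it by the standard sequence of moves in this literature: symmetrization, contraction, and then an arithmetic--geometric-mean split that reduces the bound to the penalized Rademacher complexity controlled in Lemma~\ref{lem4}. Throughout set $K := \{\theta \in \Theta : \|\theta - \theta^*\|^2 \le t^2\}$. Since $\Theta$ is a union of finitely many linear subspaces (for each $k$ there are at most $N^{2k}$ subspaces of dimension $k$, and $k \le N$) intersected with $\{s_{\mathcal{S}}(\theta) \ge c_1 \log N\}$, the set $K$ is compact, so all suprema below are legitimate random variables to which symmetrization applies.

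\textbf{Step 1 (symmetrization and contraction).} Applying Lemma~\ref{lem28} with this $K$ bounds the left-hand side of the claim by
\[
2\,\mathbb{E}\left[ \sup_{\theta \in K}\left\{ \sum_{i \in L_{d,n}} \xi_i\big(\hat{M}_i(\theta_i) - \hat{M}_i(\tilde\theta_i)\big) + \tfrac{\lambda}{2} k_{\mathcal{S}}(\tilde\theta) - \tfrac{\lambda}{2} k_{\mathcal{S}}(\theta) \right\}\right],
\]
where $\xi$ is a vector of i.i.d.\ Rademacher signs independent of $y$. Because $\rho_\tau$ is $\max\{\tau,1-\tau\}$-Lipschitz, hence $1$-Lipschitz, each coordinate map $\theta_i \mapsto \hat{M}_i(\theta_i) - \hat{M}_i(\tilde\theta_i)$ is $1$-Lipschitz, so Lemma~\ref{lem29}, applied conditionally on $y$ with $\eta = 1$, further bounds the display by $2\,\mathbb{E}[\sup_{\theta \in K}\{\xi^\top\theta + \tfrac{\lambda}{2} k_{\mathcal{S}}(\tilde\theta) - \tfrac{\lambda}{2} k_{\mathcal{S}}(\theta)\}]$. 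Since $\xi^\top\tilde\theta$ does not depend on $\theta$ and has mean zero, recentering by $\xi^\top\tilde\theta$ shows this equals
\[
\lambda\, k_{\mathcal{S}}(\tilde\theta) + 2\,\mathbb{E}\left[ \sup_{\theta \in K}\left\{ \xi^\top(\theta - \tilde\theta) - \tfrac{\lambda}{2} k_{\mathcal{S}}(\theta) \right\}\right].
\]

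\textbf{Step 2 (AM--GM split and Lemma~\ref{lem4}).} For each $\theta \in K$ with $\theta \ne \tilde\theta$, writing $\xi^\top(\theta-\tilde\theta) = \big(\xi^\top(\theta-\tilde\theta)/\|\theta-\tilde\theta\|\big)\cdot\|\theta-\tilde\theta\|$ and using $xy \le \tfrac{a}{2}x^2 + \tfrac{1}{2a}y^2$ gives
\[
\xi^\top(\theta-\tilde\theta) - \tfrac{\lambda}{2} k_{\mathcal{S}}(\theta) \le \tfrac{a}{2}\left[\left(\frac{\xi^\top(\theta-\tilde\theta)}{\|\theta-\tilde\theta\|}\right)^2 - \frac{\lambda}{a} k_{\mathcal{S}}(\theta)\right] + \frac{1}{2a}\|\theta-\tilde\theta\|^2 ,
\]
and this bound holds trivially when $\theta = \tilde\theta$. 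On $K$ one has $\|\theta-\tilde\theta\|^2 \le 2\|\theta-\theta^*\|^2 + 2\|\theta^*-\tilde\theta\|^2 \le 2t^2 + 2\|\theta^*-\tilde\theta\|^2$. Bounding the supremum of a sum by the sum of suprema, enlarging the first supremum from $K$ to all of $\Theta$, and taking expectations, the hypothesis $\lambda \ge C a\log N$ ensures $\lambda/a \ge C\log N$, so Lemma~\ref{lem4} applied with tuning parameter $\lambda/a$ yields $\mathbb{E}[\sup_{\theta\in\Theta}\{(\xi^\top(\theta-\tilde\theta)/\|\theta-\tilde\theta\|)^2 - (\lambda/a)k_{\mathcal{S}}(\theta)\}] \le 16$. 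Hence
\[
\mathbb{E}\left[\sup_{\theta \in K}\left\{\xi^\top(\theta-\tilde\theta) - \tfrac{\lambda}{2} k_{\mathcal{S}}(\theta)\right\}\right] \le 8a + \frac{t^2}{a} + \frac{\|\theta^*-\tilde\theta\|^2}{a}.
\]
Plugging this into the display at the end of Step~1 and multiplying through gives the asserted bound with $C_2 = 16$.

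\textbf{Main obstacle.} Since the three heavy tools (symmetrization, contraction, and the penalized Rademacher complexity bound of Lemma~\ref{lem4}) are already available as black boxes, the remaining work is essentially bookkeeping: checking the $1$-Lipschitz property of the increments $\hat{M}_i(\cdot) - \hat{M}_i(\tilde\theta_i)$, carrying out the recentering by $\xi^\top\tilde\theta$, and---the one genuinely delicate point---choosing the AM--GM constant so that the \emph{rescaled} penalty $\lambda/a$ still clears the threshold $C\log N$ required by Lemma~\ref{lem4}. This last requirement is exactly why the hypothesis reads $\lambda \ge C a\log N$ rather than $\lambda \ge C\log N$, and it is what produces the additive $C_2 a$ term in the conclusion.
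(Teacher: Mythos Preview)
Your proof is correct and follows essentially the same route as the paper's: symmetrization (Lemma~\ref{lem28}), contraction (Lemma~\ref{lem29}) using the $1$-Lipschitz property of $\rho_\tau$, pulling out $\lambda k_{\mathcal{S}}(\tilde\theta)$, the AM--GM split with parameter $a$, the triangle-inequality bound $\|\theta-\tilde\theta\|^2\le 2t^2+2\|\theta^*-\tilde\theta\|^2$, and finally Lemma~\ref{lem4} applied with tuning level $\lambda/a\ge C\log N$. Your write-up is in fact slightly more explicit than the paper's (compactness of $K$, the recentering by $\xi^\top\tilde\theta$, the degenerate case $\theta=\tilde\theta$, and the identification $C_2=16$), but there is no substantive difference in method.
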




\begin{proof}
	Notice that  if  $\xi \in \mathbb{R}^{L_{d,n}}$  consists of independent Rademacher random variables then 
	\[
	\begin{array}{l}
	\displaystyle 	\mathbb{E}\left(    \underset{\theta \in \Theta  \,:\,  \|\theta-\theta^*\|  \leq  t   }{\sup}\left\{   M(  \theta  )    -   M(\tilde{\theta})    +   \hat{M}(\tilde{\theta}) - \hat{M}(\theta) +  \lambda(   k_{\mathcal{S}}(\tilde{\theta}) -  k_{\mathcal{S}}(\theta) )      \right\}     \right) \\
	\displaystyle    	 \leq   \,2\, \mathbb{E}\left(    \underset{\theta \in \Theta  \,:\,  \|\theta-\theta^*\|  \leq  t   }{\sup}\left\{  \xi^{\top}(\theta  - \tilde{\theta}) +\frac{  \lambda}{2}(   k_{\mathcal{S}}(\tilde{\theta}) -  k_{\mathcal{S}}(\theta) )    \right\}     \right) \\
	\displaystyle   	   \leq\, 2 \,\mathbb{E}\left(    \underset{\theta \in \Theta  \,:\,  \|\theta-\theta^*\|  \leq  t   }{\sup}\left\{  \xi^{\top}(\theta  - \tilde{\theta})  - \frac{\lambda}{2} k_{\mathcal{S}}(\theta)    \right\}     \right) +  \lambda   k_{\mathcal{S}}(\tilde{\theta})  \\
	\displaystyle   	    \leq \, 2 \,\mathbb{E}\bigg(    \underset{\theta \in \Theta  \,:\,  \|\theta-\theta^*\|  \leq  t   }{\sup}\left\{  \frac{a}{2} \left(\xi^{\top}\frac{(\theta  - \tilde{\theta}) }{\|\theta  - \tilde{\theta}\|} \right)^2   -\frac{ \lambda}{2} k_{\mathcal{S}}(\theta)     \right\}     \bigg) +   \frac{2t^2 +  2 \|  \theta^* -  \tilde{\theta}\|^2 }{a}  +\lambda   k_{\mathcal{S}}(\tilde{\theta})  \\
	\displaystyle 	     \leq  \,a\, \mathbb{E}\bigg(    \underset{\theta \in \Theta  \,:\,  \|\theta-\theta^*\|  \leq  t   }{\sup}\left\{   \left(\xi^{\top}\frac{(\theta  - \tilde{\theta}) }{\|\theta  - \tilde{\theta}\|} \right)^2   - C k_{\mathcal{S}}(\theta) \log N     \right\}     \bigg) +   \frac{2t^2 }{a} +  \frac{2\|  \theta^* -  \tilde{\theta}\|^2 }{a} +  \lambda   k_{\mathcal{S}}(\tilde{\theta})  \\
	\displaystyle      \leq\,  C_2 a   +  \frac{2t^2 }{a}+  \frac{2\|  \theta^* -  \tilde{\theta}\|^2 }{a} +  \lambda   k_{\mathcal{S}}(\tilde{\theta}) 
	\end{array}
	\]
	where the first inequality follows as in Lemmas  \ref{lem28} and \ref{lem29}, the third by Cauchy Schwarz inequality, and the last by Lemma \ref{lem4}.
\end{proof}

\begin{lemma}
	\label{lem5}
	For  $\tilde{\theta} \in \mathbb{R}^N$ we have that
	\[
	M(\tilde{\theta})    \leq 	\frac{\overline{f}}{2}\|\tilde{\theta} - \theta^*\|^2.
	\]
\end{lemma}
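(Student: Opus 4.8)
The plan is to reduce the statement to a one-dimensional computation per coordinate and then sum. Write $M(\tilde{\theta}) = \sum_{i \in L_{d,n}} \psi_i(\tilde{\theta}_i)$ where $\psi_i(a) := \mathbb{E}\big[\rho_{\tau}(y_i - a) - \rho_{\tau}(y_i - \theta_i^*)\big]$. Since $\rho_{\tau}$ is convex and $1$-Lipschitz, we have $\rho_{\tau}(y_i - a) - \rho_{\tau}(y_i - \theta_i^*) \le |a - \theta_i^*|$, so $\psi_i$ is finite (crucially, even when $y_i$ is heavy tailed so that $\mathbb{E}\rho_{\tau}(y_i - a)$ itself may be $+\infty$), convex, nonnegative, and minimized at $a = \theta_i^*$ by the very definition of the true $\tau$-quantile.

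Next I would compute $\psi_i'$. For fixed $y_i$ the map $a \mapsto \rho_{\tau}(y_i - a)$ has derivative $-\tau\,\mathbf{1}\{a < y_i\} + (1-\tau)\,\mathbf{1}\{a > y_i\}$ at every $a \neq y_i$; taking expectations gives $\psi_i'(a) = F_{y_i}(a) - \tau$ at every continuity point $a$ of $F_{y_i}$. By the upper bound in Assumption~\ref{as1}, $F_{y_i}$ is Lipschitz, hence continuous, in a neighbourhood of $\theta_i^*$, so $F_{y_i}(\theta_i^*) = \tau$ and therefore $\psi_i'(\theta_i^*) = 0$. Since $\psi_i$ is convex (hence absolutely continuous on bounded intervals) and $\psi_i' = F_{y_i}(\cdot) - \tau$ off a countable set, the fundamental theorem of calculus gives, taking $\tilde{\theta}_i \ge \theta_i^*$ without loss of generality (the other case being symmetric),
\[
\psi_i(\tilde{\theta}_i) = \psi_i(\tilde{\theta}_i) - \psi_i(\theta_i^*) = \int_{\theta_i^*}^{\tilde{\theta}_i}\big(F_{y_i}(u) - F_{y_i}(\theta_i^*)\big)\,du.
\]
Bounding the integrand using the upper bound in Assumption~\ref{as1}, namely $0 \le F_{y_i}(u) - F_{y_i}(\theta_i^*) \le \overline{f}\,(u - \theta_i^*)$, yields $\psi_i(\tilde{\theta}_i) \le \tfrac{\overline{f}}{2}(\tilde{\theta}_i - \theta_i^*)^2$. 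Summing over $i \in L_{d,n}$ then gives $M(\tilde{\theta}) \le \tfrac{\overline{f}}{2}\|\tilde{\theta} - \theta^*\|^2$, as claimed.

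There is no genuinely hard step here; the only care needed is bookkeeping. First, because $y_i$ may have no moments one must work throughout with the difference $\psi_i$ rather than with $\mathbb{E}\rho_{\tau}(y_i-a)$, which is why $M$ was defined via differences in the first place. Second, the integral representation of $\psi_i$ relies only on $F_{y_i}$ being well behaved near $\theta_i^*$ — continuity there — which is exactly what the Lipschitz portion of Assumption~\ref{as1} provides; if $|\tilde{\theta}_i - \theta_i^*|$ exceeds the range in Assumption~\ref{as1} one simply splits the integral and uses $F_{y_i} \le 1$ on the outer part, and in any event the lemma is only ever invoked with $\tilde{\theta}$ of bounded deviation from $\theta^*$ (see Step~1 of the proof sketch). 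I therefore expect the write-up to be short and routine.
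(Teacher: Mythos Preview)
Your proof is correct and follows essentially the same route as the paper: both establish the integral representation $\psi_i(\tilde{\theta}_i)=\int_{\theta_i^*}^{\tilde{\theta}_i}\bigl(F_{y_i}(u)-F_{y_i}(\theta_i^*)\bigr)\,du$ (the paper simply cites this as Equation~(19) of \cite{padilla2020risk}) and then bound the integrand by $\overline{f}\,|u-\theta_i^*|$ via Assumption~\ref{as1}. Your extra remarks about finiteness of $\psi_i$ and the local nature of the Lipschitz bound are careful bookkeeping the paper omits, but the argument is the same.
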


\begin{proof}
	Let  $\delta :=\tilde{\theta} - \theta^*$.	We start by recalling   by Equation (19)  in 	\cite{padilla2020risk}  which states that
	\[
	M_i( \tilde{\theta}_i ) =   \int_{0}^{\delta_i}  (  F_{y_i}(\theta_i^*+z) - F_{y_i}(\theta_i^*) )dz.
	\]
	Hence,
	\[
	\begin{array}{lll}
	M(\tilde{\theta})         &=&  \displaystyle  \sum_{i \in  L_{d,n}}  \int_{0}^{\delta_i}    \left( F_{y_i}(\theta^*+z) -    F_{y_i}(\theta^*) \right)dz\\
	& \leq& \displaystyle  \sum_{i \in  L_{d,n}}  \int_{0}^{\delta_i}     \overline{f} z  dz\\ 
	&= & \displaystyle   \frac{\overline{f}}{2}\| \delta\|^2
	\end{array} 
	\]
	where the inequality follows from Assumption \ref{as1}.
\end{proof}

\subsection{General upper bound}

\begin{theorem}
	\label{thm:general}
Suppose that   Assumption~\ref{as1} holds. There exists universal constants $c_1,C_1, C_2,C_3>0$ such that for any $0 < \epsilon < 1$, if we set $\gamma = c_1 \log N $ and
	\[
	\lambda \,=\,   C_1 \frac{\max\{ 1,U   \} \log(N)\log(N U)}{\epsilon},
	\]
	implies that  with probability at least  $1-C_2\epsilon$,
	\begin{equation}
	\label{eqn:main2}
	\frac{\| \hat{\theta}-\theta^* \|^2}{N}  \leq  \frac{C_3Q(\theta^*)}{\epsilon^2},
	\end{equation}
	where
	\[
	Q(\theta^*)\,:=\, \underset{\theta \in \Theta}{\inf}\left\{\frac{ k_{\mathcal{S}}(\theta) \max\{1,U^2\}  \log^2 \left(\max\{N,U\}  \right) }{N}\ +  \frac{\overline{f}  \|  \theta^* -\theta   \|^2 }{N}  \right\}.
	\]
\end{theorem}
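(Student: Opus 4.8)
The plan is to treat the estimator $\hat{\theta}$ of~\eqref{eqn:general_estimator2} as a penalized M-estimator for the population quantile loss $M$ and to run the peeling/basic-inequality scheme of Section~\ref{sec:proof}, using Lemmas~\ref{cor1}, \ref{lem4}, \ref{lem3}, \ref{lem5}, \ref{lem7}, \ref{lem8}, \ref{lem28}, \ref{lem29} essentially as black boxes. \emph{Step 1 (preliminary localization, loss equivalence).} With $c_1$ as in Lemma~\ref{cor1} for a suitable universal $\alpha$ (so $\gamma=c_1\log N$), Lemma~\ref{cor1} gives $\|\hat{\theta}\|_\infty\le U$ on an event $E$ with $\mathbb{P}(E^c)\le 2N^{-\alpha}$, which in the regime where $\epsilon$ is fixed and $N$ grows is absorbed into the allotted $C_2\epsilon$. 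On $E$ we have $\|\hat{\theta}-\theta^*\|_\infty\lesssim U$, so Lemma~\ref{lem8} followed by Lemma~\ref{lem7} gives
\[
\|\hat{\theta}-\theta^*\|^2\ \le\ \max\{1,\|\hat{\theta}-\theta^*\|_\infty\}\,\Delta^2(\hat{\theta}-\theta^*)\ \lesssim\ \max\{1,U\}\,\Delta^2(\hat{\theta}-\theta^*)\ \lesssim\ \max\{1,U\}\,M(\hat{\theta}),
\]
while Lemma~\ref{lem5} gives the reverse bound $M(\hat{\theta})\le\tfrac{\overline{f}}{2}\|\hat{\theta}-\theta^*\|^2$. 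Thus on $E$ the squared loss, $\Delta^2$, and $M$ are equivalent up to the factor $\max\{1,U\}$, and it suffices to bound $\mathbb{P}(M(\hat{\theta})>s^2)$ for an appropriate $s^2$.

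\emph{Steps 2--4 (peeling, basic inequality, Markov).} On $E$, $M(\hat{\theta})\lesssim\overline{f}\,NU^2$, so I would write $\mathbb{P}(M(\hat{\theta})>s^2)=\sum_{j=1}^{J}p_j$ with $p_j=\mathbb{P}(2^{j-1}s^2<M(\hat{\theta})\le 2^js^2)$ and only $J=O(\log(NU))$ layers needed. Fix a reference $\tilde{\theta}\in\Theta$. On the event defining $p_j$, the basic inequality $\hat{M}(\tilde{\theta})-\hat{M}(\hat{\theta})+\lambda k_{\mathcal{S}}(\tilde{\theta})-\lambda k_{\mathcal{S}}(\hat{\theta})\ge 0$ (valid since $\hat{\theta}$ minimizes the penalized objective over $\Theta\ni\tilde{\theta}$) together with the loss equivalence (so $\|\hat{\theta}-\theta^*\|^2\le t_j^2$ with $t_j^2\asymp\max\{1,U\}2^js^2$) gives
\[
2^{j-1}s^2\ <\ \sup_{\theta\in\Theta:\,\|\theta-\theta^*\|^2\le t_j^2}\Big\{M(\theta)-\hat{M}(\theta)+\hat{M}(\tilde{\theta})-M(\tilde{\theta})+\lambda k_{\mathcal{S}}(\tilde{\theta})-\lambda k_{\mathcal{S}}(\theta)\Big\}+M(\tilde{\theta}).
\]
Choosing $s^2\gtrsim\overline{f}\|\tilde{\theta}-\theta^*\|^2/\epsilon$ makes $M(\tilde{\theta})\le\tfrac14\cdot 2^{j-1}s^2$ by Lemma~\ref{lem5}, and Markov's inequality then yields $p_j\le T_{1,j}+T_{2,j}$ with $T_{1,j}=\tfrac{1}{2^{j-2}s^2}\,\mathbb{E}[\sup(\cdots)]$ and $T_{2,j}=\tfrac{M(\tilde{\theta})}{2^{j-2}s^2}$.

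\emph{Steps 5--7 (penalized Rademacher complexity, bookkeeping).} Now $\mathbb{E}[\sup(\cdots)]$ is precisely the penalized empirical process controlled by Lemma~\ref{lem3} (which internally invokes symmetrization, Lemma~\ref{lem28}, contraction, Lemma~\ref{lem29}, and the Rademacher-complexity bound of Lemma~\ref{lem4}): for any $a>0$, provided $\lambda\ge Ca\log N$,
\[
\mathbb{E}[\sup(\cdots)]\ \le\ C_2 a+\frac{2t_j^2}{a}+\frac{2\|\tilde{\theta}-\theta^*\|^2}{a}+\lambda k_{\mathcal{S}}(\tilde{\theta}).
\]
Dividing by $2^{j-2}s^2$ and summing the geometric series over $j\le J$, the only contribution that does not telescope is $\sum_j\tfrac{2t_j^2}{a\,2^{j-2}s^2}\asymp\tfrac{J\max\{1,U\}}{a}$; forcing this to be $O(\epsilon)$ dictates $a\asymp\max\{1,U\}J/\epsilon\asymp\max\{1,U\}\log(NU)/\epsilon$, and then the requirement $\lambda\ge Ca\log N$ becomes exactly the prescribed $\lambda\asymp\max\{1,U\}\log N\,\log(NU)/\epsilon$. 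With this $a$ one gets $\sum_{j=1}^{J}p_j\lesssim\epsilon$ as soon as
\[
s^2\ \asymp\ \frac{\max\{1,U\}\,\log^2(\max\{N,U\})}{\epsilon^2}\,k_{\mathcal{S}}(\tilde{\theta})\ +\ \frac{\overline{f}\,\|\tilde{\theta}-\theta^*\|^2}{\epsilon}
\]
(using $\log N\,\log(NU)\le 2\log^2(\max\{N,U\})$ and $k_{\mathcal{S}}(\tilde{\theta})\ge 1$). Transferring back through Step 1, $\|\hat{\theta}-\theta^*\|^2\lesssim\max\{1,U\}s^2$ with probability at least $1-C_2\epsilon$; dividing by $N$, taking the infimum over $\tilde{\theta}\in\Theta$, and absorbing the remaining harmless factors into $C_3$ gives $\tfrac1N\|\hat{\theta}-\theta^*\|^2\le C_3 Q(\theta^*)/\epsilon^2$.

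\emph{Main obstacle.} The genuinely technical ingredient, the penalized Rademacher complexity bound (Lemma~\ref{lem4}, adapting Proposition~8.9/Lemma~9.1 of~\cite{chatterjee2019adaptive} from Gaussian to Rademacher noise), is already available, so the remaining difficulty is the calibration in Step 7: the free scale $a$ must be chosen \emph{large} enough that the ``$t_j^2/a$'' terms, which scale like the peeling radius $2^j$, remain summable and $O(\epsilon)$ across all $J=O(\log(NU))$ scales, yet \emph{small} enough that $C_2 a/s^2$ stays $O(\epsilon)$; this two-sided constraint is exactly what pins down the $\log N\,\log(NU)/\epsilon$ dependence of $\lambda$ and the $\epsilon^{-2}$ in the final rate. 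A secondary point needing care is folding the $2N^{-\alpha}$ localization-failure probability from Lemma~\ref{cor1} into the $C_2\epsilon$ budget, which is transparent in the asymptotic regime $N\to\infty$ with $\epsilon$ held fixed.
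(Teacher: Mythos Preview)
Your proposal is correct and follows essentially the same route as the paper's proof: preliminary localization via Lemma~\ref{cor1}, loss equivalence through Lemmas~\ref{lem7},~\ref{lem8},~\ref{lem5}, peeling combined with the basic inequality and Markov, then control of the penalized empirical process via Lemma~\ref{lem3}, and finally calibration of $a$ and $t^2$ (your $s^2$) exactly as you describe. The only cosmetic difference is that you peel on $M(\hat{\theta})$ (as in the Section~\ref{sec:proof} sketch) whereas the formal proof peels on $\Delta^2(\hat{\delta})$ and converts to $M$ inside each shell via Lemma~\ref{lem7}; both lead to the same bound after the same choice $a\asymp\max\{1,U\}\log(NU)/\epsilon$.
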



\begin{proof}
	Let  $t \in  (0,2N\:U)$ and notice that  for  $\hat{\delta } := \hat{\theta} - \theta^*$ we have that  for  $U$ as in  Definition \ref{as2} it holds that 
	\begin{equation}
	\label{eqn:first}
	\mathbb{P}(  \Delta^2(  \hat{\delta } )>t^2  )   \,\leq  \,\mathbb{P}(  \Delta^2(  \hat{\delta } )>t^2 ,    \|\hat{\theta} \|_{\infty}  \leq  U )  +  \mathbb{P}(    \|\hat{\theta} \|_{\infty}  > U ), 
	\end{equation}
	with $\Delta(\cdot)$ as in Definition \ref{def3}.
	Hence,  from Lemma~\ref{cor1} it is enough to bound $\mathbb{P}(  \Delta^2(  \hat{\delta } )>t^2 ,    \|\hat{\theta} \|_{\infty}  \leq  U) $. Towards that end we notice that $ \|\hat{\theta} \|_{\infty} \leq  U$ implies 
	\begin{equation}
		\label{eqn:inf}
		\Delta^2(  \hat{\delta } ) \,\leq  \,\|\hat{\delta } \|_1 \,\leq\, \|\hat{\theta}\|_1  +   \|\theta^*\|_1  \leq  4N \:U
	\end{equation}
	 and hence
	\[
	\begin{array}{lll}
	\mathbb{P}(  \Delta^2(  \hat{\delta } )>t^2 ,    \|\hat{\theta} \|_{\infty} \leq  U)  & \leq &  	 \mathbb{P}(  \Delta^2(  \hat{\delta } )>t^2,  \Delta^2(  \hat{\delta } ) \leq  4N \:U ,    \|\hat{\theta} \|_{\infty} \leq  U) . 
	\end{array}
	\]

	Now, we will undertake the so called peeling step. Letting 
	\begin{equation}
	\label{eqn:defx}
	x \,=\,\ceil{  \log(4N \:U /t^2)  / \log  2 },
	\end{equation}
	we obtain that for  any $\tilde{\theta} \in \Theta$ 
	\begin{equation*}
	\begin{array}{lll}
	\mathbb{P}(  \Delta^2(  \hat{\delta } )>t^2 ,    \|\hat{\theta} \|_{\infty} \leq  U )  	 		 & \leq&\displaystyle \sum_{j=1}^{ x } \mathbb{P}\left( \Delta^2(  \hat{\delta } )> 2^{j-1} t^2,  \Delta^2(  \hat{\delta } ) \leq 2^j t^2 ,      \|\hat{\theta} \|_{\infty} \leq  U  \right)\\
	& \leq&\displaystyle \sum_{j=1}^{ x } \mathbb{P}\left(  M(  \hat{\theta} )> c _02^{j-1} t^2,  \Delta^2(  \hat{\delta } ) \leq 2^j t^2 ,      \|\hat{\theta} \|_{\infty} \leq  U\right)\\
	& = & \displaystyle\sum_{j=1}^{ x } \mathbb{P}(  M(  \hat{\theta } )    -   M(\tilde{\theta})   + M(\tilde{\theta})  > c _02^{j-1} t^2,  \Delta^2(  \hat{\delta } ) \leq  2^j t^2 ,    \|\hat{\theta} \|_{\infty} \leq  U ) \\
	& \leq  & \displaystyle\sum_{j=1}^{ x } \mathbb{P}(  M(  \hat{\theta } )    -   M(\tilde{\theta})    +   \left\{\hat{M}(\tilde{\theta}) - \hat{M}(\hat{\theta}) +   \lambda(k_{\mathcal{S}}(\tilde{\theta})  - k_{\mathcal{S}}(\hat{\theta})) \right\}  +  M(\tilde{\theta})  \\
	&& \,\,\,\,\,\,\,\,\,\,\,\,\,  > c _02^{j-1} t^2,\Delta^2(  \hat{\delta } ) \leq 2^j t^2 ,    \|\hat{\theta} \|_{\infty} \leq  U ) \\		
	\end{array}
	\end{equation*}
	where the second inequality follows from Lemma \ref{lem7}, and the last by the optimality of  $\hat{\theta}$, since
	\[
	\hat{M}(\tilde{\theta}) - \hat{M}(\hat{\theta}) +   k_{\mathcal{S}}(\tilde{\theta})  - k_{\mathcal{S}}(\hat{\theta})   \geq 0.
	\]
	We can now continue to write from the previous display,
	\begin{equation}
	\label{eqn:main}
	\begin{array}{lll}
	\mathbb{P}(  \Delta^2(  \hat{\delta } )>t^2 ,    \|\hat{\theta} \|_{\infty} \leq  U )   & \leq &\displaystyle \sum_{j=1}^{ x } \mathbb{P}\bigg( \underset{ \theta \in \Theta \,:\,\Delta^2(  \theta-\theta^* ) \leq 2^j t^2 ,      \|\theta \|_{\infty} \leq  U }{\sup}\bigg\{ M(  \theta )    -   M(\tilde{\theta})    +   \hat{M}(\tilde{\theta}) - \hat{M}(\theta) +\\
	& &\,\,\,\,  \,\,\,\,\, \,\,\,\,\,\, \,\,\,    \lambda (k_{\mathcal{S}}(\tilde{\theta}) - k_{\mathcal{S}}(\theta)) + M(\tilde{\theta})  \bigg\}\geq  c_0 2^{j-1} t^2    \bigg)\\
	& \leq&  \displaystyle \sum_{j=1}^{ x } \mathbb{P}\bigg( \underset{ \theta \in \Theta \,:\,\| \theta-\theta^* \|^2 \leq 2^{j} \max\{1,U\} t^2 }{\sup}\bigg\{  M(  \theta  )    -   M(\tilde{\theta})    +   \hat{M}(\tilde{\theta}) - \hat{M}(\theta)  +   \\
	& &\,\,\,\,  \,\,\,\,\, \,\,\,\,\,\, \,\,\,  \lambda (k_{\mathcal{S}}(\tilde{\theta}) - k_{\mathcal{S}}(\theta))  +M(\tilde{\theta})\bigg\}\geq   c_0 2^{j-1} t^2    \bigg)\\
	& \leq&\displaystyle \sum_{j=1}^{ x }    \frac{1}{c_0 2^{j-1} t^2 }\mathbb{E}\bigg(  \underset{ \theta \in \Theta \,:\,\| \theta-\theta^* \|^2 \leq 2^{j} \max\{1,U\}t^2 }{\sup}\bigg\{  M(  \theta )    -   M(\tilde{\theta})    +  \\
	& &\displaystyle\,\,\,\,  \,\,\,\,\, \,\,\,\,\,\, \,\,\, \,\,\,\,  \,\,\,\,\, \,\,\,\,\,\, \,\,\,\,\, \,\,\, \,\, \,\,\,   \hat{M}(\tilde{\theta}) - \hat{M}(\theta)  +  \lambda (k_{\mathcal{S}}(\tilde{\theta}) - k_{\mathcal{S}}(\theta))   \bigg\}\bigg)\,+\, \frac{2M(\tilde{\theta}) }{c_0t^2}\\
	\end{array}
	\end{equation}
	where the second inequality follows from Lemma \ref{lem8}, and the third  inequality follows from Markov's inequality and summing up the geometric series.

	Let  $\epsilon \in (0,1)$ be fixed.  We notice that  (\ref{eqn:main}) and Lemma \ref{lem5} imply that
	\[
	\begin{array}{lll}
	\mathbb{P}(  \Delta^2(  \hat{\delta } )>t^2 ,    \|\hat{\theta} \|_{\infty} \leq  U )   
	& \leq&\displaystyle \sum_{j=1}^{ x }    \frac{1}{c_0 2^{j-1} t^2 }\mathbb{E}\bigg(  \underset{ \theta \in \Theta \,:\,\| \theta-\theta^* \|^2 \leq 2^{j}  \max\{1,U\} t^2 }{\sup}\bigg\{  M( \theta )    -   M(\tilde{\theta})    +    \\
	& &\displaystyle \,\,\,\,  \,\,\,\,\, \,\,\,\,\,\, \,\,\, \,\,\,\,  \,\,\,\,\, \,\,\,\,\,\, \,\,\,\,\, \,\,\, \,\, \,\,\, \hat{M}(\tilde{\theta}) - \hat{M}(\theta)  +  \lambda (k_{\mathcal{S}}(\tilde{\theta}) - k_{\mathcal{S}}(\theta))   \bigg\}\bigg)\,+\,  \frac{ 2\overline{f} }{c_0 t^2}\|\tilde{\theta} -\theta^*\|^2.\\
	\end{array}
	\]
	Next, for some $a>0$ to be chosen later, we set $\lambda =   C a \log N$. Hence,
	from Lemma  \ref{lem3},   we have that 
	\begin{equation}
	\label{eqn:main3}
	\begin{array}{lll}
	\mathbb{P}(  \Delta^2(  \hat{\delta } )>t^2 ,    \|\hat{\theta} \|_{\infty} \leq  U )   
	& \leq&\displaystyle  \sum_{j=1}^{ x }    \frac{1}{c_0 2^{j-1} t^2 }\left[   C_2 a   +  \frac{2^{j+1}  \max\{1,U\}  t^2+2\|\tilde{\theta} -\theta^*\|^2     }{a}  
	+\lambda   k_{\mathcal{S}}(\tilde{\theta})    \right]\\
	& &\displaystyle \,+\,  \frac{2 \overline{f} }{c_0t^2}\|\tilde{\theta} -\theta^*\|^2\\
	& \leq& \displaystyle  \frac{2C_2  a }{c_0 t^2}  + \frac{4 \max\{1,U\}   x   }{ac_0 } +  \frac{2 \lambda   k_{\mathcal{S}}(\tilde{\theta}) }{c_0 t^2}   +  \frac{1}{t^2}\left( \frac{4}{ac_0}  +  \frac{2 \overline{f} }{c_0}\right)\|\tilde{\theta} -\theta^*\|^2. 
	\end{array}
	\end{equation}
	Therefore,  setting
	\[
	a \,:=\, \frac{\max\{ 1,U   \} \log(N U)}{\epsilon},
	\]  
	\[
	t^2 \,:=\,  \frac{ C \max\{1,U,C_2\} k_{\mathcal{S}}(\tilde{\theta})  \log^2 (N U)}{\epsilon^2 c_0} +  \frac{ \overline{f}\|\tilde{\theta} -\theta^*\|^2}{2\epsilon}, 
	\] 
	and letting 
	\[
	\tilde{\theta}  \in \underset{\theta     }{\arg \min } \left\{ \frac{ k_{\mathcal{S}}(\theta) \max\{1,U^2\}  \log^2 \left(\max\{N,U\}  \right) }{N}\ +  \frac{\overline{f}  \|  \theta^* -\theta   \|^2 }{N} \right\}
	\]
	we obtain the conclusion in (\ref{eqn:main2}) by combining  (\ref{eqn:first}) and (\ref{eqn:main3}).

\end{proof}

\subsection{Proof of Theorem \ref{thm0}}

In the rest of the proofs we denote  $\hat{\theta}_{rdp} $  simply as $\hat{\theta}$. 

\begin{proof}
	From Theorem \ref{thm:general} we obtain that
	
	\begin{equation}
	\label{eqn:upper6}
	\frac{\| \hat{\theta}-\theta^* \|^2}{N}\,=\, O_{\mathbb{P}}\left(   \underset{\tilde{\theta}  \in \Theta }{\inf}  \left\{\frac{\overline{f}  \| \tilde{\theta}   - \theta^* \|^2 }{N}\,+\, \frac{ k_{rdp}(\tilde{\theta})\max\{1,U^2\}\log^2 \left(\max\{N,U\}\right)  }{N}  \right\} \right).
	\end{equation}
	
	However, for any $\theta \in \mathbb{R}^N$   by Lemma \ref{lem6} there exists  $A(\theta ) \in  \Theta$  such that  $k_{rdp}(A(\theta )) \leq k_{rdp}(\theta)$ and   
	\[
	\|A(\theta )-\theta\|^2 \leq 4c_1\| \theta\|_{\infty}^2    k_{rdp}(\theta)\log N.
	\] 
	It follows that 
	\begin{equation}
	\label{eqn:upper7}
	\begin{array}{l}
	\displaystyle \underset{\tilde{\theta}  \in \Theta }{\inf}  \left\{\frac{\overline{f}  \| \tilde{\theta}   - \theta^* \|^2 }{N}\,+\, \frac{ k_{rdp}(\tilde{\theta})\max\{1,U^2\}\log^2 \left(\max\{N,U\}\right)  }{N}  \right\}   \\
	\leq \displaystyle \underset{\theta  \in \mathbb{R}^N }{\inf}  \left\{\frac{\overline{f}  \| A(\theta)   - \theta ^*\|^2 }{N}\,+\,\frac{ k_{rdp}(A(\theta))\max\{1,U^2\}\log^2 \left(\max\{N,U\}\right)  }{N}  \right\} \\
	\leq \displaystyle \underset{\theta  \in \mathbb{R}^N }{\inf}  \left\{\frac{2\overline{f}  \| A(\theta)   - \theta \|^2 }{N}\,+\, \frac{2\overline{f}  \| \theta  - \theta^* \|^2 }{N} +\frac{ k_{rdp}(A(\theta))\max\{1,U^2\}\log^2 \left(\max\{N,U\}\right)  }{N}  \right\} \\
	\leq  \displaystyle \underset{\theta  \in \mathbb{R}^N }{\inf}  \left\{\frac{8\overline{f}  c_1\| \theta\|_{\infty}^2    k_{rdp}(\theta)\log N }{N}\,+\, \frac{2\overline{f}  \| \theta  - \theta^* \|^2 }{N} +\frac{ k_{rdp}(\theta)\max\{1,U^2\}\log^2 \left(\max\{N,U\}\right)  }{N}  \right\} \\
	\end{array}
	\end{equation}
	where the second inequality follows from the Cauchy–Schwarz inequality, and the third  by the construction of $A(\cdot)$. The claim follows combining (\ref{eqn:upper6}) with (\ref{eqn:upper7}).
	
\end{proof}

\subsection{Other lemmas}

\begin{lemma}
	\label{lem6}
	Let  $\theta \in \mathbb{R}^{L_{d,n}}$. Given $c_1 >0$ there exists a $\tilde{\theta} \in \mathbb{R}^{L_{d,n}}$ such that the follwing holds:
	\begin{itemize}
		\item  $k_{rdp}(\tilde{\theta})\leq k_{rdp}(\theta)$.
		\item 
		\[
		\|\tilde{\theta}-\theta\|^2 \leq 4c_1\| \theta\|_{\infty}^2    k_{rdp}(\theta)\log N.
		\]
		\item$s(\tilde{\theta}) \geq c_1 \log N$, where  $s(\cdot)$ is the $s_{\mathcal{S}}(\cdot)$      corresponding to Dyadic partitions.
	\end{itemize}
	
\end{lemma}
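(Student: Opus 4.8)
The plan is to coarsen a minimal recursive dyadic partition of $\theta$ by pruning its decision tree at a suitable depth, and then to put pairwise distinct constant values on the coarsened partition so that it becomes the \emph{unique} minimal dyadic partition of the resulting $\tilde\theta$.

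First I would fix a recursive dyadic partition $\Pi$ with $|\Pi| = k_{rdp}(\theta)$ on which $\theta$ is piecewise constant, together with a decision tree $T$ realizing it: the root is $L_{d,n}$, each internal node is the union of the two children produced by a dyadic split, and the leaves are exactly the rectangles of $\Pi$; a node at distance $\ell$ from the root has size $N/2^{\ell}$. Set $\ell^{*} := \lfloor \log_2(N/(c_1\log N)) \rfloor$ (a nonnegative integer once $c_1\log N \le N$, which we may assume in the asymptotic regime), delete from $T$ all nodes at depth $> \ell^{*}$, and let $\tilde\Pi$ be the set of leaves of the resulting tree $\tilde T$; this $\tilde\Pi$ is again a recursive dyadic partition, being reachable by a subset of the splits used to build $\Pi$. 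Every $R \in \tilde\Pi$ satisfies $|R| = N/2^{\mathrm{depth}(R)} \ge N/2^{\ell^{*}} \ge c_1\log N$. The leaves of $\tilde T$ that were \emph{not} leaves of $T$ (the ``new'' leaves) all sit at depth exactly $\ell^{*}$, so they have size $N/2^{\ell^{*}} < 2c_1\log N$; and each new leaf is an internal node of $T$, hence has at least two leaf-descendants in $T$, and since distinct new leaves are disjoint rectangles these descendant sets are disjoint, so the number of new leaves is at most $k_{rdp}(\theta)/2$.

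Next I would set $\tilde\theta$ equal, on each $R \in \tilde\Pi$ that is also a leaf of $T$, to the value $\theta$ takes there, and, on each new leaf $R$, to the block mean $\bar\theta_R := |R|^{-1}\sum_{i\in R}\theta_i$; if this produces coinciding values on different rectangles of $\tilde\Pi$, I would then perturb the values by pairwise distinct amounts $\eta_R$ small enough that $\sum_{R}|R|\,\eta_R^2 \le c_1\|\theta\|_{\infty}^2 k_{rdp}(\theta)\log N$ (the right-hand side is positive unless $\theta$ is constant, in which case $\tilde\Pi = \{L_{d,n}\}$ and nothing needs perturbing). Then: (i) $k_{rdp}(\tilde\theta) \le |\tilde\Pi| \le |\Pi| = k_{rdp}(\theta)$. (ii) Before perturbation the old leaves contribute nothing to $\|\tilde\theta - \theta\|^2$, while a new leaf $R$ contributes $\sum_{i\in R}(\theta_i - \bar\theta_R)^2 \le |R|\|\theta\|_{\infty}^2 < 2c_1\log N\,\|\theta\|_{\infty}^2$ by Popoviciu's inequality (the values of $\theta$ on $R$ lie in an interval of length at most $2\|\theta\|_{\infty}$); summing over at most $k_{rdp}(\theta)/2$ new leaves bounds this by $c_1\|\theta\|_{\infty}^2 k_{rdp}(\theta)\log N$, and the triangle inequality in $\ell_2$ together with the bound on $\sum_R|R|\eta_R^2$ yields $\|\tilde\theta-\theta\|^2 \le 4c_1\|\theta\|_{\infty}^2 k_{rdp}(\theta)\log N$. (iii) Since the values of $\tilde\theta$ on $\tilde\Pi$ are now pairwise distinct, the level sets of $\tilde\theta$ are exactly the rectangles of $\tilde\Pi$, so every partition on which $\tilde\theta$ is piecewise constant refines $\tilde\Pi$; hence $\tilde\Pi$ is the unique minimal recursive dyadic partition of $\tilde\theta$, and therefore $s(\tilde\theta) = \min_{R\in\tilde\Pi}|R| \ge c_1\log N$.

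The step I expect to be the main obstacle is (iii). The tempting shortcut --- claiming that any array piecewise constant on \emph{some} dyadic partition with all rectangles of size $\ge c_1\log N$ automatically has $s(\cdot)\ge c_1\log N$ --- fails for $d \ge 2$, because minimal recursive dyadic partitions need not be unique: trees that split coordinates in different orders can produce incomparable minimal partitions, and such ``reparenthesizations'' can create rectangles strictly smaller than those of $\tilde\Pi$. Forcing the constant values to be pairwise distinct removes this ambiguity, and the only thing one has to be careful about is leaving enough room in the error budget for the tie-breaking perturbation; this is why $\tilde\theta$ is taken to equal the block mean on the new leaves (the factor-$4$ gap between $c_1$ and $4c_1$ is used precisely here) rather than, say, an arbitrary value of $\theta$ on those blocks.
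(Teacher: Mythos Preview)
Your argument is correct and follows the same construction as the paper: take a minimal recursive dyadic partition $\Pi$ for $\theta$, prune all splits that would create rectangles smaller than $c_1\log N$, and let $\tilde\theta$ be the block average of $\theta$ over the pruned partition $\tilde\Pi$. The paper's proof stops there, simply asserting that (i) and (iii) hold ``by construction'' and bounding (ii) by the observation that $\Pi$ and $\tilde\Pi$ differ on at most $k_{rdp}(\tilde\theta)$ rectangles of size at most $2c_1\log N$.

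Your additional perturbation step is a genuine refinement rather than a different route. As you correctly point out, for $d\geq 2$ the minimal recursive dyadic partition of the averaged $\tilde\theta$ need not be $\tilde\Pi$ itself, and in principle other minimal partitions could contain smaller rectangles; the paper's ``by construction'' is therefore underargued. Forcing pairwise distinct block values makes $\tilde\Pi$ the \emph{unique} minimal partition and settles (iii) cleanly, using exactly the slack between $c_1$ and $4c_1$ in the stated bound. Your counting of the new leaves (at most $k_{rdp}(\theta)/2$, each of size $<2c_1\log N$) and the appeal to Popoviciu's inequality sharpen the paper's cruder bookkeeping for (ii), but the substance is identical.
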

\begin{proof}
	Let $\Pi$ a  minimal dyadic partition induced by $\theta$. Then 
	consider $\tilde{\Pi}$ the  dyadic partition obtained by performing the same splits as in the construction of $\Pi$ but only when each split produces rectangles of size at least $c_1\log N$. Then let $\tilde{\theta}$ be constructed by  averaging the  values of $\theta$ on each rectangle of $\tilde{\Pi}$.  Notice that  by construction the first and third claims of the lemma hold. To see why the second claim holds, we observe that  $\Pi$ and $\tilde{\Pi}$ differ in at most $k_{rdp}(\tilde{\theta})$ rectangles each of which is of size at most $2 c_1\log N$. The claim then follows.
	
\end{proof}

\subsection{Proof of Corollary \ref{cor2} }
\label{sec:thm1}

\begin{proof}
	\textbf{Case $d=1$.}

	We proceed in two cases. First, if $V =\mathrm{TV}(\theta^*)=0$ then $k_{rdp}(\theta^*)=1$ and
	Theorem 
	\ref{thm0} implies that 
	\begin{equation}
	\label{eqn:part1}
	\frac{1}{N}\sum_{i \in L_{d,N}}  (\hat{\theta}_i - \theta_i^*)^2 \,=\, O_{\mathbb{P}}\left(  \frac{ \max\{1,U^2\} \log^2 \{N,U\}  }{N}\right).
	\end{equation}
	Suppose now that  $V >0$. Then by Proposition 8.9  in \cite{chatterjee2019adaptive}, for any $\eta>0$ there exits $\theta$ such that for some positive constant $C$ it holds that $k_{rdp}(\theta) \leq C \eta^{-1} $ and 
	\[
	\| \theta - \theta^*\|_{\infty} \,\leq \,  V \eta. 
	\]
	Then  notice that $\|\theta\|_{\infty}^2 \leq   2V^2 \eta^2 +2 U^2$  and  $\| \theta-\theta^*\|^2 \leq  V^2 \eta^2 N$.

		Next,  we  set 
	\[
	\eta \,:=\,\frac{1}{V^{2/3}}\left(\frac{\log N}{N}\right)^{1/3}
	\]
	and notice that 
	
	\begin{itemize}
		\item  
		\begin{equation}
			\label{eqn:c1}
			   \begin{array}{lll}
				\displaystyle 	    \frac{\|\theta\|_{\infty}^2  k_{rdp}(\theta) \log N }{N}&\leq &         		\displaystyle 	    2 \left( V^2 \eta^2 + U^2\right) C  \eta^{-1} \log N\\ 
				&\leq  &        		\displaystyle 	   \frac{2C   V^2 \eta \log N}{N}   +     \frac{2C U^2 \eta^{-1}\log N  }{N}\\
				&  =& \displaystyle O\left(   \frac{U^2  V^{2/3}  \log^{2/3} N}{N^{2/3}}    \right).
			\end{array}
		\end{equation}
	\item 
	\[
\frac{	\| \theta - \theta^*\|^2}{N}\,=\,  V^2 \eta^2 \,=\,   V^2\left(\frac{1}{V^{2/3}}\left(\frac{\log N}{N}\right)^{1/3} \right)^2  \, =\, \frac{ V^{2/3}  \log^{2/3} N  }{N^{2/3} }.
	\]
	\item   \[
	\begin{array}{lll}
					\displaystyle 	  	\frac{ k_{rdp}(\theta) \max\{1,U^2\}\log^2 \left(\max\{N,U\}\right)  }{N}  &\leq& 				\displaystyle 	   \frac{C   V^{2/3}  N^{1/3}( \log^{-1/3} N )  \max\{1,U^2\}  \log^2\max\{N,U\}   }{N}\\ 
		&=&				\displaystyle 	  \frac{C V^{2/3}  \max\{1,U^2\} \log^{5/3} \max\{N,U\}}{N^{2/3}}.
	\end{array}
	\] 
	\end{itemize}
	Combining the cases above we obtain the claim for  $d=1$.

			\textbf{Case $d>1$.} If case $V=0$ we proceed as we did in the previous case. Suppose now that $V>0$.  Then, by the proof of Theorem 4.2 in \cite{chatterjee2019adaptive}, for any $\eta>0$ there exists a $\theta$  such that  
			\[
			k_{rdp}(\theta) \leq \frac{C \,\text{TV}(\theta^*)\,  \log N }{\eta},
			\]
			\[
			 \|\theta \|_{\infty}\leq \|\theta^*\|_{\infty},
			\]
			and
			\[
			\| \theta - \theta^*\|^2 \leq  C \eta\,  \text{TV}(\theta^*) \, \log N
			\]
			for some positive constant $C$. 
			
			Next, let  $\eta = \log N$  and notice that
			\begin{itemize}
				\item 
				\begin{equation}
					\label{eqn:c4}
				\begin{array}{lll}
							\displaystyle 	    \frac{\|\theta\|_{\infty}^2  k_{rdp}(\theta) \log N }{N}&\leq &         		\displaystyle 	   \frac{C  V   \log N  }{\eta}  \cdot\frac{U^2  \log N}{N} \\
							 & =&  	\displaystyle 	 \frac{C V U^2\log N}{N}.
						\end{array}
				\end{equation}
			
			\item 
			\begin{equation}
				\label{eqn:c5}
				\begin{array}{lll}
					\displaystyle  \frac{	\| \theta - \theta^*\|^2}{N}  & =&	\displaystyle \frac{C V\log^2 N}{N}.
				\end{array}
			\end{equation}
				
				\item 
				\begin{equation}
				\label{eqn:c6}
					\begin{array}{lll}
					\displaystyle 	  	\frac{ k_{rdp}(\theta) \max\{1,U^2\}\log^2 \left(\max\{N,U\}\right)  }{N}  &\leq& 				\displaystyle 	   \frac{CV \max\{1,U^2\}\log^2\left(\max\{N,U\}\right)  }{N}.
				\end{array}
				\end{equation}
			\end{itemize}
		
		Therefore, combining (\ref{eqn:c4})--(\ref{eqn:c6}) the claim follows.

\end{proof}

	\newpage




\bibliographystyle{plainnat}
\bibliography{references}	
\end{document}